\documentclass[aps,prx,twocolumn,superscriptaddress,longbibliography]{revtex4-2}

\usepackage{silence}
\WarningFilter{revtex4-2}{Repair the float package}

\usepackage{fontenc}
\usepackage{amsmath}
\usepackage{amsfonts,amssymb,amsthm, bbm,braket,mathtools}
\usepackage{mathtools}
\usepackage{xcolor}
\usepackage{graphicx}   
\usepackage{subfigure}  
\usepackage{amsbsy} 
\usepackage[bold]{hhtensor}
\usepackage{times}
\usepackage{comment}
\usepackage{algorithm}
\usepackage{algpseudocode}
\usepackage{tikz}
\makeatletter
\renewcommand{\ALG@name}{Protocol}
\makeatother

\DeclareMathOperator*{\avg}{\text{\LARGE \raisebox{-0.2em}{$\mathbb{E}$}}}
\DeclareMathOperator*{\smallavg}{\text{{$\mathbb{E}$}}}

\newcounter{protocol}


\theoremstyle{remark} 

\usepackage{multirow}
\newtheorem{theorem}{Theorem}

\newcommand{\myleft}{\mathopen{}\mathclose\bgroup\left}
\newcommand{\myright}{\aftergroup\egroup\right}
\newcommand{\e}{\ensuremath\mathrm{e}}

\DeclareMathOperator{\Tr}{Tr}





\DeclareMathOperator{\SU}{SU}

\DeclareMathOperator\tr{Tr}



\newcommand{\CC}{\mathbb{C}}


\newcommand{\1}{\mathds{1}}
\newcommand{\EE}{\mathbb{E}}

\newcommand{\md}[1]{\mathbb{#1}}

\newcommand{\mc}[1]{\mathcal{#1}}

\newcommand{\ct}{{}^\dagger}


\newcommand{\ad}{^\dagger}


\newcommand{\norm}[1]{\left\Vert #1 \right\Vert} 













\newcommand{\ketbra}[2]{\ket{#1} \!\! \bra{#2}}

\newcommand{\sandwich}[3]
  {\left\langle  #1 \right| #2 \left| #3 \right\rangle}

\newcommand{\kett}[1]{|{#1}{\rangle\!\rangle}}
\newcommand{\braa}[1]{{\langle\!\langle}{#1}|}



\newcommand{\tn}[1]{^{\otimes#1}} 
\newcommand{\dens}[1]{\ket{#1}\!\!\bra{#1}}




\newcommand{\gr}{\ensuremath{\md{G}}}

\newcounter{example}[section]





\newcommand{\braakett}[2]{\langle\!\langle#1|#2\rangle\!\rangle}

\usepackage{aliascnt}
\usepackage[breaklinks=true]{hyperref}
\usepackage{cleveref}


\newaliascnt{lemma}{theorem}
\newtheorem{lemma}[lemma]{Lemma}
\aliascntresetthe{lemma}
\crefname{lemma}{lemma}{lemmas}
\hypersetup{
  colorlinks   = true, 
  urlcolor     = blue, 
  linkcolor    = blue, 
  citecolor   = red 
}

\newcommand{\fu}{Dahlem Center for Complex Quantum Systems, Freie Universit{\"a}t Berlin, 14195 Berlin, Germany}

\newcommand{\tii}{Quantum Research Center, Technology Innovation Institute (TII), Abu Dhabi}

\definecolor{ingo}{rgb}{.1,.5,.1}
\newcommand{\ir}[1]{{\color{black}#1}}
\newcommand{\je}[1]{{\color{cyan}#1}}

\newcommand{\jk}[1]{{\color{black}#1}}

\newcommand{\newtext}[1]{\textcolor{black}{#1}}

\makeatletter 
 \hypersetup{pdftitle = {Shadow estimation of gate-set properties from random sequences},
       pdfauthor = {Jonas Helsen, Marios Ioannou, Ingo Roth, Jonas Kitzinger,  Emilio Onorati, Albert H. Werner, Jens Eisert},
       pdfsubject = {Quantum physics},
       pdfkeywords = {quantum, state, process, gate, set, gate-set, tomography, characterization, sequences, sequence,
              certification, verification, validation,
              direct, shadow, fidelity, estimation,
              interleaved, randomized, benchmarking,
              cross-entropy, XEB,
              concentration, inequality, inequalities, tail, bound, bounds,
              variance, estimator, median, means, unbiased,
              complex, projective, spherical, unitary, design, designs, t-design, t-designs, k-design, k-designs, random,
              Clifford, group,
            Schur-Weyl duality, Schur's lemma,
              Choi-Jamiolkowski, isomorphism,
              trace, norm, distance, average gate, entanglement, unitarity,
              machine, learning, optimization, correlator, uniform}
      }
\makeatother

\begin{document}

\title{Shadow estimation of gate-set properties from random sequences}

\author{J.~Helsen$^\ast$\!\!}
\affiliation{QuSoft, Centrum Wiskunde \& Informatica (CWI), Amsterdam, The Netherlands}
\email{corresponding authors: jonas1helsen@gmail.com, ingo.roth@tii.ae, jense@zedat.fu-berlin.de}
\affiliation{Korteweg-de Vries Institute for Mathematics, University of Amsterdam, Amsterdam, The Netherlands}
\author{M.~Ioannou}
\affiliation{\fu}
\author{J.~Kitzinger}
\affiliation{\fu}
\affiliation{Humboldt-Universit\"at zu Berlin, Institut f\"ur Physik,  12489 Berlin, Germany}
\author{E.~Onorati}
\affiliation{\fu}
\affiliation{University College London, Department of Computer Science, London, United Kingdom}
\affiliation{Technische Universit\"at M\"unchen, Fakult\"at f\"ur Mathematik, M\"unchen, Germany}
\author{A.~H.~Werner}
\affiliation{{Department of Mathematical Sciences, University of Copenhagen, 2100 K{\o}benhavn, Denmark}}
\affiliation{{NBIA, Niels Bohr Institute, University of Copenhagen, Blegdamsvej 17, 2100 K{\o}benhavn, Denmark}}
\author{J.~Eisert}
\affiliation{\fu}
\affiliation{Helmholtz-Zentrum Berlin f{\"u}r Materialien und Energie, 14109 Berlin, Germany}
\affiliation{Fraunhofer Heinrich Hertz Institute, 10587 Berlin, Germany}
\author{I.~Roth$^\ast$}
\affiliation{\tii}

\maketitle

{\bf
With quantum computing devices increasing in scale and complexity, there is a growing need for tools that obtain precise diagnostic information about quantum operations. However, current quantum devices are only capable of short unstructured gate sequences followed by native measurements. We accept this limitation and turn it into a new paradigm for characterizing quantum gate-sets. A single experiment---random sequence estimation---solves a wealth of estimation problems, with all complexity moved to classical post-processing. We derive robust channel variants of shadow estimation with close-to-optimal performance guarantees and use these as a primitive for partial, compressive and full process tomography as well as the learning of Pauli noise. We discuss applications to the quantum gate engineering cycle, and propose novel methods for the optimization of quantum gates and diagnosing cross-talk.}

\section{Introduction}

Recent years have seen the rapid development of quantum computing devices to unprecedented
system sizes. These devices are still noisy and of limited computational power, but
go substantially beyond what was conceivable not very long ago. In order to scale even further to larger and more accurate devices, it is key to develop tools for efficiently characterizing quantum operations \cite{Roads,Superconducting} at scale.
Besides providing crucial actionable advice for the practitioner, the characterization of quantum operations
is also important for developing an in-depth theoretical understanding of the actual capabilities of quantum devices
 and for providing a fair comparison between different types of devices, and with classical computing power on the same tasks \je{\cite{GoogleSupremacy,BarakChouGao:2020,SupremacyReview}.}
Over the years, many protocols for characterizing quantum operations have been developed \cite{BenchmarkingReview,PRXQuantum.2.010201,Randomized}.

That said, while a wealth of theoretical ideas for benchmarking, verification and tomographic recovery have been suggested,
only a few of them are relevant in practice. With present quantum devices, only
relatively short gate sequences can be implemented on qubit arrays, followed by a native measurement at the end
of the circuit that typically suffers from sizeable read-out noise.
With these limitations the most prominent protocols for characterizing digital quantum gates fall into the class of randomized benchmarking (RB) \cite{KnillBenchmarking,dankert_exact_2009,EmersonRB,PhysRevA.75.022314,MagGamEmer,Mother} (including newer protocols such as \emph{averaged circuit eigenvalue sampling} \cite{flammia2021averaged}).
RB implements suitable sequences of random quantum gates and extracts a
measure of quality as parameters describing the decay rate of the measured signal with the sequence length. This has the advantage of yielding \emph{state preparation and measurement} (SPAM) error robust error metrics.
The experimental sequences of most RB protocols are carefully designed (such as compiled circuit inverses) to efficiently extract specific information from a gate-set.
Prominent exceptions are `filtered' RB protocols such as \emph{linear cross-entropy benchmarking} 
(XEB) \cite{GoogleSupremacy}
that directly work with random sequences of i.i.d.~drawn gates and, e.g., omit an inversion gate.

\begin{figure*}
\includegraphics[width=1\textwidth]{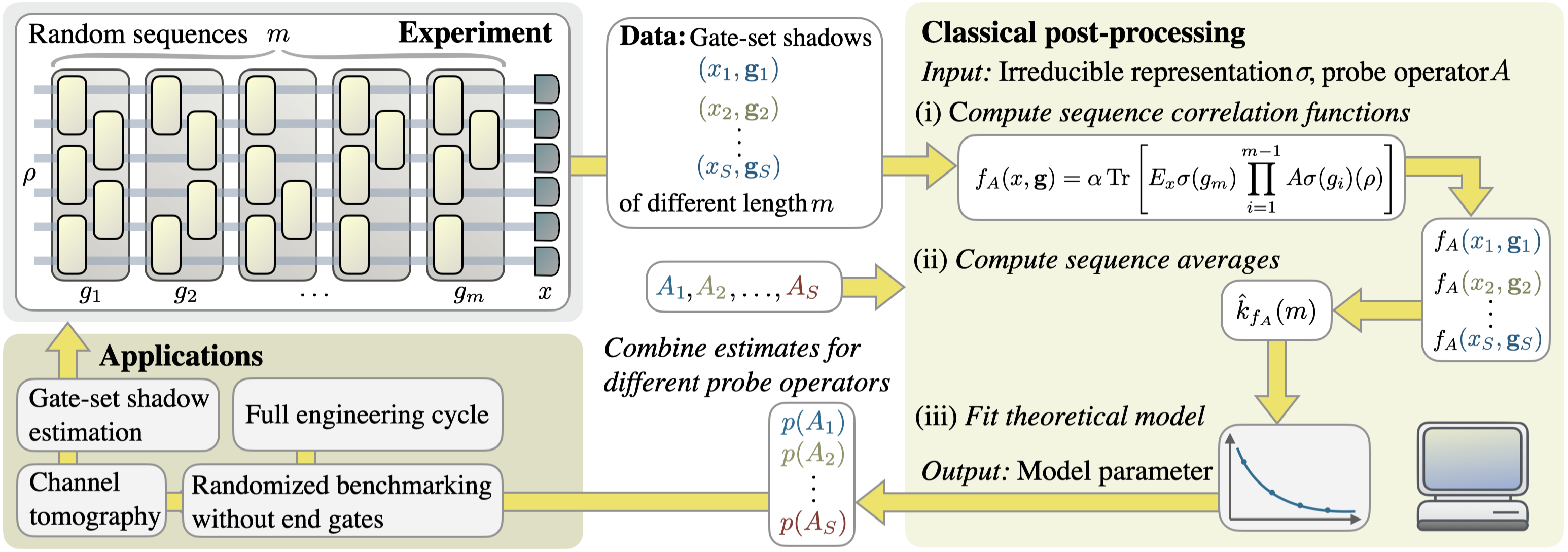}
\caption{
\newtext{The gate-set shadow estimation protocol proceeds in two stages: First, for a fixed inital state $\rho$ and varying sequence lengths $m$ a total of $S$ random sequences of quantum gates of length $m$ are experimentally implemented and each followed by a measurement. We call the observed tuples of measurement outcome and gate sequence $(x^j,g_1^j,\dots,g_m^j)$, $j=1,\dots, S$ the gate-set shadow. The second classical post-processing stage consists itself of three steps: (i) A given sequence correlation function is calculated for every entry of the gate set shadow. For the UIRS protocol a sequence correlation function $f_A$ is specified in terms of a probe super-operator $A$ and an irreducible representation $\sigma$.
(ii) We calculate the the sequence average $\hat k_{f_A} (m)$ as the mean or median-of-means of the result of step (i) over sequences of the same length $m$.
(iii) Sequence averages for different lengths $m$ are used as data points to fit a theoretical model (\cref{eq:uirs_mean}) in order to extract the generalized gate-set fidelity with respect to the super-operator $A$ and the irreducible representation $\sigma$, denoted here by $p(A)$. One of the most important features of this approach is that we can use the same experimental data to accurately estimate \emph{exponentially many} generalized fidelities $p(A_1), p(A_2), \dots, p(A_S)$ by evaluating different sequence correlation functions on the same gate-set shadow. In this way, we can self-consistently and robustly estimate many different properties of the gate-set noise from a minimal amount of data obtained in a simple experiment. Section~\ref{subsec:unitary_noise} to \ref{subsec:channel_reconstruction}  explain and derive guarantees of how the gate-set shadow estimation protocol can be used as a primitive in other more detailed characterization task, such as compressive channel or marginal tomography, potentially allowing one to run the whole engineering cycle on essentially the same type of data.}}\label{workflow_figure}
\end{figure*}
In this work, we take these observations seriously, and revert the mindset that is commonly applied when devising new schemes for benchmarking and characterization. We
ask the question: If all we can feasibly do is implementing unstructured random sequences followed by a native
measurement, what can we learn? At first sight this endeavour is not promising.
Compared to `traditional' RB and tomographic protocols we are giving up on central ingredients.
Thinking about how much information we measure in an unstructured way, we run into the problem that typically, the probabilities of individual measurement results are exponentially small in the number of qubits. This is orthogonal to the careful design of
efficient characterization schemes in prior work, and does not obviously
yield sample efficient estimation schemes at all.

Our change of paradigm is analogous to the mindset of classical shadows~\cite{Shadows, PainiKalev:2019:Shadows}.
Classical state shadows allow for the sample-efficient estimation of (exponentially) many different functions of a quantum state from the same data by only modifying the classical post-processing. Perhaps the central surprise value of the result of ref.~\cite{Shadows} is rigorously guaranteeing that the fidelity of a quantum state with respect to any pure state can be estimated from the same experiment, using only constantly many state copies with sufficiently randomized basis measurements.
This is in stark contrast to schemes like direct fidelity estimation \cite{Flammia-PRL-2011} that given a priori knowledge of the target state carefully optimize the measurements that are performed.

In this work, we define the observed measurement outcomes of random sequences of quantum gates as the \emph{classical shadow of a gate-set} and study the sample efficiency of SPAM-robust estimators for different linear functionals of a gate-set from the same data. Borrowing the median-of-mean estimators used on classical state shadows, we show that the sampling complexity of the estimation (the number of single-shot quantum measurements) can be controlled by a dynamic shadow norm with exponential confidence. We prove bounds on this dynamic shadow norm---a considerably more involved object than its state counterpart---for prominent gate-sets such as the multi-qubit Clifford group and the local Clifford group.
We find that by a suitable post-processing we can estimate the relative average gate fidelities of the noise of a Clifford gate-set with respect to an exponentially large number of unitary channels from polynomially many measurement samples from the same uniformly random experiment.
More generally, we show that the dynamical shadow norm can be controlled in terms of the unitarity of the estimated linear quantity.
Using local gate-sets, we show that one can selectively gain information about channel marginals capturing correlations in their noisy implementation.
We promote this primitive further to design a highly scalable and efficient tomography scheme for cross-talk effects.
Furthermore, we exemplify how gate-set shadows can be used to construct SPAM-robust objective functions for learning noise models and for robust low-rank quantum process tomography.

The important feature of all these schemes is that we only adapt the classical post-processing to the
task at hand, not the quantum experiment.
A single type of data, namely samples from simple local measurements on uniformly random gate sequences, is sufficient to perform a large class of diagnostic tasks of benchmarking, verification and
tomographic recovery.
The mindset can be captured as ``Measure first, ask later!''.
Going beyond uniformly independently random sequences, we can generalize our approach to provide an optimal scheme to learn Pauli noise, emulating the protocol of ref.~\cite{flammia2020efficient} with a simpler experimental
prescription and theoretical analysis.

\textit{Related work.}
We build on a body of literature on randomized schemes for quantum device characterization \cite{Randomized}.
The potential of analyzing the output statistics of gate-set sequences to self-consistently extract essentially all information of a gate-set (as well as the initial state and the measurement) has been realized by gate-set tomography \cite{MerGamSmo13,BluGamNie13,BluGamNie17,Gre15,Nielsen20ProbingQuantumProcessor,Nielsen2020gateSetTomography} with recent variants only requiring random sequences (gate-set shadows) \cite{Gu2021RandomizedLinearGate,Brieger21CompressiveGateSet}.
In contrast to this self-consistent tomographic estimation of all gates in the gate-set, we here target individual linear quantities of the gate-set's average noise or an interleaved quantum process.
Our cross-talk tomography protocol follows the spirit of simultaneous RB \cite{gambetta2012characterization},
but goes significantly beyond simultaneous RB in providing higher-order correlation measures and tomographic information of noise-channel marginals, efficiently from the data of a single randomized experiment.
In~ref.~\cite{KimmOhki}, it has been observed that variants of interleaved multi-qubit Clifford randomized benchmarking experiments \cite{magesan2012efficient} have access to relative average gate fidelities from which unital quantum channels can be reconstructed.
The protocol of ref.~\cite{KimmOhki} performs a different experiment for each fidelity yielding a sub-optimal overall sample complexity for tomography or low-rank tomography \cite{KimLiu16,AverageGateFidelities}.
Gate-set shadow estimation solves both these short-comings.

\section{Results}
\label{sec:results}
\newtext{%
We begin with explaining the general protocol. In the subsequent sections, we then provide theoretical performance guarantees for specific gate-sets and explain how the protocol can be used as a robust estimation primitive in more complex characterization tasks, such as channel tomography.}
\newtext{%
The gate-set shadow estimation protocol consists of two separate stages: an experiment, where measurement results from random circuits of different lengths are recorded, and a classical post-processing step, where different parameters can be estimated from the measured data.}
\newtext{\Cref{workflow_figure}  summarizes the complete protocol.}

\subsection{Protocol: Experiment}\label{experiment}
We aim at characterizing the accuracy of the implementation of a target gate-set $\gr$.
The experimental primitive is the realization of random (gate) sequences of length $m$:
After preparing an initial $\rho$ (e.g., $\ketbra 00$) a sequence of gates $\vec g \in \gr^{\times m}$ is drawn at random according to a distribution  $\mu_m: \gr^{\times m} \to [0,1]$ and applied to $\rho$. This is then followed by a measurement specified by a POVM $\{E_x\}_x$ with measurement outcomes in $\mc X$ (e.g., a computational basis measurement).
If $x \in \mc X$ is observed, the result of the primitive is a tuple $(x, \vec{g}) \in \mc X \times \gr^{\times m}$.

Repeating the primitive multiple times yields a series of tuples $\{(x_i, \vec{g}_i) \}_{i=1}^{S}$ which we refer to as a \emph{(self-consistent) gate-set shadow}. 
\ir{(Note that ref.~\cite{Shadows} actually calls the dual frame elements indexed by the observed output statistics of an informationally complete POVM a state's \emph{shadow}. 
In contrast, we here directly refer to the sampled sequence and observed measurement outcomes as a shadow.)}

A complete experimental protocol further involves measuring such shadows for a set of different sequence lengths $m$.
In order to simplify the theoretical analysis, we focus on the paradigmatic case of $\gr$ being a finite subgroup of $\SU(2^N)$ (such as the Clifford group) and distributions on the sequences arising from the uniform measure over these subgroups.

The simplest example of protocols in this context are \emph{uniform independent random sequence (or \emph{UIRS}) protocols} where the gates in the sequences are drawn from the gate-set uniformly and independently at random. This can be seen as the paradigmatic case, although we will go beyond this later in this work.
We make shadow gate-set estimation through the UIRS protocol explicit for several important gate-sets: namely the multi-qubit Clifford group $\mathbb{C}_n$ and the independent single
-qubit Clifford group $\mathbb{C}_1^{\times n}$ (which we will call the \emph{local Clifford group}).

\subsection{Protocol: Classical post-processing}\label{post}
Given a gate-set shadow $\{(x_i, \vec g_i)\}_{i=1}^S$, we define an empirical estimator in terms of a \emph{sequence correlation function} $f(x,\vec{g}):{\mc{X}}\times\gr^{\times m}\to \mathbb{C}$.
\newtext{For every such sequence correlation function, in the post-processing,
we (i)
evaluate $f$ for all entries of the gate-set shadows and (ii)
calculate the empirical mean or median-of-mean estimator
\begin{equation}\label{km_formula}
  \hat{k}_f (m) \coloneqq \operatorname{(median-of-)means}\ \{f(x_i, \vec g_i)\}_{i=1}^S
\end{equation}
of the result}.
After repeating step (i) and (ii) for different sequence lengths $m$, we fit in step (iii) a theoretical model $k_f$ to the estimates of the \newtext{\emph{sequence means}} $\hat k_f(m)$.
\newtext{After giving this overview of the post-processing protocol, let us take a closer look at the steps and explain their roles in the UIRS protocol:\\}

\newtext{Regarding step (i): Generally speaking,} sequence correlation functions can be seen as the gate-set analog of an observable in shadow estimation.
They allow us to compute properties of noisy gate-sets (for example the average fidelity of an average group element) from experimentally observed gate-set shadows. We emphasize that, like state shadow estimation, the data
collection step of random sequence estimation is independent of the gate-set properties one wishes to estimate, with this estimation step happening entirely in classical post-processing.
Importantly, this enables one to estimate many different correlation functions from the \newtext{same experimental data}.

We here introduce a particular class of sequence correlation functions for UIRS protocols:
Consider an irreducible representation $\sigma$ of $\gr$ with representation space $V_\sigma$.
For the multi-qubit Clifford group, e.g., its adjoint action on traceless Hermitian matrices is of main interest.
We further specify a sequence correlation function in terms of a matrix $A$, POVM $\{E_x\}_{x\in \mc X}$ and state $\rho$, on $V_\sigma$ as
\begin{equation}\label{eq:uirs_correlation_function}
f_A(x,\vec{g}) = \alpha\,\Tr\left[E_x\sigma(g_m)\prod_{i=1}^{m-1} A \sigma(g_i)(\rho)\right]\,,
\end{equation}
with a suitable normalization factor $\alpha$.
\ir{(Note that for $m=1$, and perfectly implemented gates, this expression reproduces the classical state shadows of ref.~\cite{Shadows}.
Generally, restricted to multiplicity-free, irreducible representations, the dual frame construction of ref.~\cite{Shadows} simply amounts to introducing a proper normalization factor, justifying our choice of calling the observed statistics directly the shadow.)}

We refer to $A$ as a \emph{probe (super-)operator} as it specifies the linear quantity of the gate-set that is encoded into the decay parameter of the empirical estimator.
\newtext{Note that the expression \cref{eq:uirs_correlation_function} is closely related to the Born probability of measuring $x$ after applying the sequence $\vec g$ to $\rho$.
The main differences are that we restrict the computation to the subspace $V_\sigma$ and interleave the sequence with the probe operator $A$.
}
\newtext{Similar to classical shadows,
the computation of $f_A$ requires, in general, the same resources as simulating the physical evolution
within a subspace.
In many situations, however, further structure renders this task efficient.
This is in particular the case when both the gate-set and the probe-operators are chosen to be multi-qubit Clifford operations.}

Note that all previously existing RB protocols only use functions that at most depend on the product of the operations in the sequence, $f_{\mathbb{I}}(x, \vec g) = h(x, g_1 g_2 \cdots g_m)$.
In filtered RB protocols, such as
linear cross-entropy benchmarking \cite{GoogleSupremacy}, character benchmarking \cite{helsen2019multiqubit} and Pauli-noise tomography \cite{flammia2020efficient},
 the inversion gate can in this way be omitted and accounted for in post-processing.
Using a non-trivial $A$ goes significantly beyond existing schemes and allows one to even efficiently `interleave in-post' the same data with different probe operators.

\newtext{Regarding step (ii): By taking an empirical average over the gate-set, we expect $\hat k_f(m)$ to be a degree $m$
polynomial in the `average noise' of the gate-set.
One insight of standard Clifford randomized benchmarking is that by taking an uniform average over a sufficiently large group the `average noise' is probed isotropically, effectively projecting it onto a depolarizing channel.
Similarly, UIRS will probe the `average noise' of the gate-set, but by choosing different probe operators $A$, we can alter the operator on which the noise is projected, revealing more information.
Performing the post-processing separately for different irreducible representations $\sigma$, ensures that the gate-set always averages sufficiently over the subspace under consideration.
We will make this intuition precise in the subsequent section.
}

\newtext{Regarding step (iii):
The projection onto isotropic noise (on each representation space) also dramatically `simplifies' the functional form of the expected value of the sequence averages $\hat k_f(m)$.
Recall that for standard Clifford RB, one effectively witnesses a single exponential decay.
Below we show that analogously for UIRS protocols, the theoretical fitting model is a single (matrix) exponential decay encoding linear quantities of the noise in its decay parameter.
The decay parameter(s) can be extracted using least-square fitting algorithms (or tone-finding algorithms such as ESPRIT).
\ir{See ref.~\cite[Sec. VII]{Mother} for a discussion on different post-processing techniques.}
In the end, the UIRS gate-set shadow estimation protocol returns the decay parameters for different choices of probe operators $A$ and representations $\sigma$.
}

\subsection{\ir{Fitting model}}

In order to keep the theoretical derivation and statements concise and straight-forward to interpret, we adhere to some standard assumptions that are commonly used in the analysis of RB protocols.
First, we assume that the quantum channel that implements a sequence $\vec g$ on the quantum device, can be written as $\mc{E}(\vec{g}) = \prod_{i=1}^{m}\phi(g_i)$ with a map $\phi: \gr\to \mc{S}_n$.
Here, $\mc{S}_n$ is the space of $n$-qubit super-operators.
The existence of $\mc{E}$ already excludes, e.g., time-dependent effects in between different experiments, and the factorization into a map $\phi$ further restricts to Markovian noise.
Under this assumption it can be proven that RB protocols \cite{KnillBenchmarking,dankert_exact_2009,EmersonRB,PhysRevA.75.022314,MagGamEmer,Mother} function correctly~\cite{Mother,IndependentNoise,proctor2017randomized}.
For non-Markovian noise much less is known, but in the context of RB rigorous results have been obtained for quasi-static noise \cite{fong2017randomized}, time-dependent noise \cite{RBFlammia} and more recently using tensor-models \cite{figueroa2022towards}. We expect these results to broadly carry over to random sequence estimation.

Second, we assume gate-independent noise, positing the existence of quantum channels $\Lambda_L,\Lambda_R$ such that
\begin{equation}
	\phi(g) = \Lambda_L \omega(g)\Lambda_R 
\end{equation}
where $\omega(g)(\rho) = U_g\rho U_g\ct$ is some ideal
implementation of the gate $g$.
We argue in \cref{subsec:gate-dependent_noise} that our results also apply (up to a negligible error) in the more general Markovian error model, but rigorously proving this (along the lines of ref.~\cite{Mother}) is beyond the scope of this work.

Instead of $\Lambda_R, \Lambda_L$ describing the noise of the gate-set implementation, one can also take the perspective of actively interleaving a channel of interest between a fairly ideal implementation of a gate-set (as is done in interleaved randomized benchmarking \cite{magesan2012efficient}).
While different in protocol and data interpretation, in the analysis, this \emph{black-box query model}
\begin{equation}
	\phi(g) = \Lambda \omega(g) 
\end{equation}
is simply a special case of the gate-independent noise model and results carry over.

The main analytical result of this work is to establish rigorous performance guarantees for the estimation from gate-set shadows.
The obvious first question being: what do we actually estimate?
As a first result, \newtext{we establish the `simple' model that we should be fitting to the data}.
We show that for a probe operator $A$ the empirical estimator of the protocol converges in probability with the number of samples $S$ in the shadow to a matrix-exponential decay
\begin{equation}\label{eq:uirs_mean}
  \hat k_{f_A}(m) \quad\xrightarrow{S \to \infty}\quad k_{f_A}(m) = \Tr[\Theta \Phi^{m-1}]\,.
\end{equation}
Here, the matrix $\Phi$ depends only on the `between-gates noise channel' $\Lambda:=\Lambda_R\Lambda_L$  and the probe super-operator $A$, while $\Theta$ captures SPAM dependence.
In particular, if $\omega$ contains \newtext{$t$ copies} of the representation $\sigma$ then we have
\begin{equation}\label{eq:phi_matrix}
\Phi_{i,j} = \frac{1}{|P_j|}\tr(P_{i}AP_j \Lambda )
\end{equation}
where $P_i$ is the projector onto the $i$th copy of the representation $\sigma$ inside $\omega$.
Note that here the trace is taken on the space of super-operators.
We give the derivation of this result in \jk{the Supplementary Note 4}.

\newtext{\Cref{eq:uirs_mean} indicates that we should fit a linear combination of (up-to) $t$ exponential decays to the sequence average $\hat k_{f_A}(m)$.
The resulting decay parameters are the eigenvalues of the matrix $\Phi$, which encode information about the overlap of $\Lambda$ and $A$ in the representation space.}

\newtext{A particularly simple fitting model with easily interpretable decay parameters, arises
when the representation $\sigma$ appears in the decomposition of $\omega$ without multiplicities (i.e., there is no other representation in $\omega$ related to $\sigma$ by a change of basis).}
\newtext{If $\sigma$ is multiplicity-free,} then $k_{f_A}(m)$ describes a single scalar exponential decay
\begin{equation}\label{eq:single-decay}
k_{f_A}(m)  \propto p_{\sigma,A}(\Lambda)^{m-1},
\end{equation}
with decay parameter
\begin{equation}
  p_{\sigma,A}(\Lambda) = \frac{1}{d_\sigma}\Tr[ A_\sigma \Lambda]\,
\end{equation}
and $A_\sigma = P_\sigma A P_\sigma$ the probe operator restricted to the representation $\sigma$ of \newtext{dimension $d_\sigma$}.
Note that the proportionality now hides the SPAM-dependent pre-factor.

\newtext{Thus, by fitting a single exponential decay to the empirically observed sequence averages $\hat k_{f_A}$, we can estimate $p_{\sigma, A}(\Lambda)$, the trace-overlap of $\Lambda$ with $A$ on $\sigma$.}
The decay parameter can be thought of as a \emph{generalized fidelity} or effective depolarization
parameter, indicating how much the noise channel $\Lambda$ agrees on average with the probe operator $A$ on the representation space of $\sigma$.

\subsection{\ir{Sample complexity}}
Against the background of the extensively explored variants of RB protocols, the above
\newtext{decay model is not entirely}
unexpected.
A priori less obvious, however, is the sample efficiency of gate-set shadow protocols.
The sequence correlation functions $f(x,\vec g)$ involve normalization factors that typically scale with the dimension of the irreducible representation under consideration.
As a consequence their range can become exponentially large in the number of qubits, causing a simple empirical mean estimator to be susceptible to outliers in the measurement statistics, as well as making a suitably bounded variance a priori nontrivial.
Going significantly beyond the established statistical guarantees in RB, we establish general variance bounds for the UIRS protocol. We do this by introducing a sequence analogue to the shadow norm introduced in ref.~\cite{Shadows} defined on probe super-operator $A$ as opposed to observables. Emphasizing its explicit dependence on the sequence length $m$ we call this norm (really a family of norms indexed by $m$) the \emph{dynamic shadow norm} $\norm{A}_{\mathrm{dyn},m}$. This norm, formally defined in \jk{Supplementary Equation (25)}, depends on the underlying gate-set $\gr$ as well as the ideal input POVM $\{E_x\}$ and state $\rho$. Given these parameters it quantifies the sample complexity of estimating the mean $k_{f_A}(m)$ for arbitrary gate-independent noise. Because of its dependence on the sequence length, the dynamic shadow norm is a more intricate object than its state counterpart. Evaluating it for specific gate-sets accounts for the bulk of the technical innovation in this paper. In terms of the dynamic shadow norm we have the following upper bound on the variance of the UIRS protocol.

\begin{theorem}[Upper bound on the variance]\label{lem:gen_variance}
Consider an UIRS protocol (at sequence length $m$) with gate-set $\gr$ and a
correlation function $f_A$ with probe operator $A$. The variance of the associated mean $k_{f_A}(m)$ is bounded as
\begin{equation}\label{eq:gen_variance}
\mathbb{V}_A(m) \leq \norm{A}_{{\rm dyn},m}\,.
\end{equation}
\end{theorem}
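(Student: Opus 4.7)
The plan is to bound the variance by the raw second moment and then match the resulting expression to the definition of the dynamic shadow norm in Supplementary Equation~(25). I would start with the standard inequality
\begin{equation*}
\mathbb{V}_A(m) \le \mathbb{E}_{\vec g, x}\!\left[|f_A(x, \vec g)|^2\right],
\end{equation*}
where the expectation is taken jointly over the uniform draw $\vec g \in \gr^{\times m}$ and over the outcome $x$ distributed according to Born's rule on the noisy implementation $\mc{E}(\vec g)$ applied to the actual initial state. Discarding the squared mean is harmless and mirrors the corresponding reduction in the original classical-shadows analysis of ref.~\cite{Shadows}.

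Next, I would expand Born's rule for $x$ and substitute the explicit form of $f_A$ from \cref{eq:uirs_correlation_function}. This rewrites the second moment as a uniform average over $\vec g$ of a sum over $x$ of a product of three traces: two copies of the ideal interleaved expression appearing in $|f_A|^2$ and one additional copy carrying the noisy gate channels from the Born weight. Gathering these into a triple-tensor structure, the inner sum over $x$ produces a POVM-weighted operator on one of the three factors, while the outer uniform average over each $g_i \in \gr$ factorises into a projection onto the commutant of the threefold tensor power of $\sigma$ (with the noise supported on the Born-rule factor). The dynamic shadow norm $\norm{A}_{{\rm dyn},m}$ is defined precisely as an upper bound on this expression, taken in a way that is insensitive to the specific noise realisation, namely via a supremum over admissible gate-independent Markovian implementations compatible with the experiment. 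Consequently, once the rewriting is complete, the inequality $\mathbb{V}_A(m) \le \norm{A}_{{\rm dyn},m}$ follows by invoking this definition.

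The main obstacle is performing the decoupling from the noise without losing factors that scale with the representation dimension $d_\sigma$. A naive approach --- for instance, applying Cauchy--Schwarz between the Born weight and $|f_A|^2$, or simply bounding the Born probability by one --- would produce a norm too weak to give useful sample-complexity guarantees for gate-sets such as the multi-qubit Clifford group, where the normalisation $\alpha$ in $f_A$ grows polynomially with $d_\sigma$. The dynamic shadow norm is engineered to retain the joint structure of the three interleaved sequences so that these dimension factors get absorbed into the triple-tensor average rather than squared and amplified. With the definition in hand, \autoref{lem:gen_variance} itself reduces to a definition-unpacking exercise; the substantive technical work of the paper is then the evaluation of $\norm{A}_{{\rm dyn},m}$ for concrete gate-sets such as the multi-qubit Clifford group and the local Clifford group in the subsequent sections.
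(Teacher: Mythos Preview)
Your proposal is correct and follows essentially the same route as the paper: bound the variance by the raw second moment, expand Born's rule together with $|f_A|^2$ into a three-factor tensor expression, perform the uniform group average to obtain the projector $P_\sigma^{(2)}$, and then recognise the result as bounded by the supremum over $\Lambda_L,\Lambda_R$ that defines $\norm{A}_{{\rm dyn},m}$. One small imprecision: the third tensor factor carries the full Liouville representation $\omega$ (supporting the noisy implementation $\phi(g)=\Lambda_L\omega(g)\Lambda_R$), not another copy of $\sigma$, so the relevant projector is $P_\sigma^{(2)}=\smallavg_{g}\sigma(g)^{\otimes 2}\otimes\omega(g)$ rather than a commutant projection for $\sigma^{\otimes 3}$.
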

An extended statement and the proof is given \jk{in the Supplementary Note 4}.
The bound on the variance $\mathbb{V}_A(m)$ directly implies a non-asymptotic bound on the sample complexity for the estimator $\hat{k}_{f_A}(m)$ with exponential confidence through the use of median-of-means estimation.
The exponential confidence in particular allows us to estimate `many' quantities
simultaneously from the same shadow data with only logarithmic overhead in the number of quantities. \ir{See Supplementary Note~3 for details.}
More precisely, we get the following guarantee:
Run the UIRS protocol (at sequence length $m$) and measure a gate-set shadow
of $S$ many samples.
Choose a set $\mc A$ of probe operators, an $\epsilon > 0$ and
ensure that for all $A\in \mc A$
\begin{equation}\label{eq:generalSamplingComplexity}
   S \geq C \norm{A}_{{\rm dyn, m}} \frac{\log(|\mc A|)}{\epsilon^2}\,
\end{equation}
for a suitable constant $C$.
Then, in the post-processing,
we obtain $\epsilon$-additive estimates, i.e.,
$
  |{k}_A(m) - \hat{k}_A(m)|\leq \epsilon
$
for all $A\in \mc A$.

Hence, bounding the dynamic shadow norm for all $A\in \mc A$ and different sequence lengths $m$ gives simultaneous guarantees for many estimators $\hat{k}_A(m)$ with an overall sampling complexity being the sum of the bounds \Cref{eq:generalSamplingComplexity} for all $m$.
As explained above, $m \mapsto \hat{k}_A(m)$ is then fitted using a theoretical signal model.
For example, in the scenario of multiplicity-free representations giving rise to a single exponential decay \cref{eq:single-decay},
we thereby obtain an estimator for $p_{\sigma, A}(\Lambda)$ for all $A \in \mathcal A$.
The exponential fitting itself is a well-studied problem, for which many advanced techniques~\cite{roy1989esprit,schmidt1986multiple}, flexible software packages~\cite{2020SciPy-NMeth}, and rigorous bounds~\cite{harper2019statistical} can be readily applied.

\subsection{Example: Multi-qubit Clifford UIRS} \label{section:Global_Clifford}
We now provide two particularly practically relevant
examples of UIRS protocols, derive their signal model and a dynamical shadow norm bound guaranteeing their efficiency.

The first example is the multi-qubit Clifford group $\mathbb{C}_n$ that already takes a prominent role in quantum characterization and quantum computation more generally \cite{QEC2}.
We consider an UIRS experiment for $\mathbb{C}_n$: i.e., sequences of i.i.d.~Clifford gates uniformly drawn at random, acting on the initial state
$\ketbra 0 0$ and ending in a computational basis measurement.
This is a common gate-set with a well-understood representation structure, allowing us to explicitly calculate the sequence mean $k_A(m)$ and give bounds on the dynamic shadow norm $\norm{A}_{{\rm dyn},m}$ which controls the sample complexity of sequence estimation.

\paragraph*{Signal model.} The adjoint representation of the multi-qubit Clifford group $\omega(g)$ decomposes into two inequivalent irreducible representations \cite{gross_evenly_2007}: $\sigma_{tr}$ supported on the normalized identity matrix 
and $\sigma_{\rm ad}$ supported on the space of traceless matrices, spanned by the generalized Pauli matrices. See \jk{Supplementary Note 2} for details.
We focus on sequence correlation functions with support on $\sigma_{\rm ad}$ only, i.e., $A = P_{\rm ad}AP_{\rm ad}$.
Then, $k_{f_A}(m)$ describes a single exponential decay \cref{eq:single-decay}
with
\begin{equation}
p_{{\rm ad}, A}(\Lambda) = \frac{1}{2^{2n}-1} \tr(P_{\rm ad}AP_{\rm ad}\Lambda)\, .
\end{equation}
This is a familiar quantity: For $A=P_{\rm ad}$, it corresponds to the depolarizing
probability (essentially the average fidelity) of the channel $\Lambda$.
As a very special case the Clifford UIRS protocol in this way emulates standard Clifford randomized benchmarking without performing an inversion.
However, gate-set shadows are considerably more flexible. For instance, by choosing $A = U$ a unitary channel, $p_{{\rm ad}, U}(\Lambda)$ measures the relative average fidelity of $\Lambda$ w.r.t.~the unitary $U$ (i.e., the average fidelity of $U\ad \circ \Lambda$).
In particular, for $U$ a Clifford channel, the corresponding sequence correlation function can be evaluated efficiently.
Relative average gate fidelities are also estimated in interleaved RB.
Compared to existing interleaved RB protocols such as the scheme of ref.~\cite{KimmOhki}, gate-set shadows have the crucial advantage that the experimental protocol itself is independent of $U$.

Since we do not have to implement $A$ on a quantum device, we can also consider $A$ that do not correspond to quantum channels such as rank-one super-operators of the form $X \Tr(Y\, \cdot\,)$for operators $X,Y$. Hence, the gate-set shadows are a versatile tool to estimate properties of the implementation of a Clifford gate-set.

\paragraph*{Dynamical shadow norm.}
The versatility of Clifford UIRS in practice of course crucially depends on the sample efficiency of the estimation.
From the above it is not clear that $k_A(m)$ can be efficiently estimated for arbitrary $A$.
Demanding that $k_A(1) = 1$ in the limit of perfect state preparation, measurement and gates,
the normalization factor $\alpha$ in \cref{eq:uirs_correlation_function} is $\alpha = 2^n+1$,
leading to a single-shot estimator taking values exponentially large in $n$.
Building upon the machinery of the dynamic shadow norm and \cref{lem:gen_variance}, we can still
provide guarantees for efficiently estimatable probe operators and investigate the limits of Clifford UIRS.
As a first step, we assume $A$ to be a restriction of a unitary channel $U$ to the traceless subspace, i.e., $A = P_{\rm ad} UP_{\rm ad}$. In this case, the dynamic shadow norm can in fact be bounded by a small constant independent of the sequence length.

\begin{theorem}[Clifford UIRS unitary norm bound]\label{thm:var_unitary}
For the $n$-qubit Clifford UIRS protocol, $U$ a unitary channel, and $A = P_{\rm ad} UP_{\rm ad} $, it holds that
\begin{equation}\label{eq:var_unitary}
\norm{A}_{{\rm dyn},m} \leq   10\, .
\end{equation}
\end{theorem}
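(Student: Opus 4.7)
The plan is to obtain \cref{eq:var_unitary} by bounding the second moment of the single-shot estimator $\alpha f_A$ at sequence length $m$, which dominates the variance and (after the standard rescaling that defines $\norm{A}_{\mathrm{dyn},m}$ from the Supplementary definition) yields the claimed inequality. Concretely, the quantity to control has the schematic form
\begin{equation}
\alpha^2\sum_{x}\smallavg_{\vec g}\,\bigl|\Tr\bigl[E_x\,\sigma_{\rm ad}(g_m)A\sigma_{\rm ad}(g_{m-1})\cdots A\sigma_{\rm ad}(g_1)(\rho)\bigr]\bigr|^2,
\end{equation}
with $\alpha=2^n+1$, $\rho=\ketbra00$, and $\{E_x\}$ the computational basis POVM. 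I would first double this expression by the standard identity $|z|^2=\Tr(z\,\bar z)$ to lift everything to the doubled (Liouville) space, so that each sequence element $g_i$ enters as the twirling super-operator $\mathcal T_i:=\smallavg_{g_i}\sigma_{\rm ad}(g_i)\otimes\sigma_{\rm ad}(g_i)$ acting on two copies of the traceless subspace.

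Next, I would exploit i.i.d.~independence to factor the expectation into $m$ separate Clifford twirls and use the known representation-theoretic decomposition: the two-copy adjoint action of $\Cliff_n$ on the traceless subspace commutes with a two-dimensional algebra spanned by the identity and the swap, restricted to $P_{\rm ad}^{\otimes 2}$. (This is the standard Clifford commutant computation underlying character/shadow analyses and requires only the 2-design property on the relevant $\sigma_{\rm ad}^{\otimes 2}$-isotypic components; for the adjoint representation the needed moments collapse thanks to the multiplicity-free decomposition $\sigma_{\rm ad}\otimes\sigma_{\rm ad}\simeq\sigma_{\rm tr}\oplus\sigma_{\rm ad}\oplus\sigma_S\oplus\sigma_A$ used in \cite{gross_evenly_2007,helsen2019multiqubit}.) Each twirl is therefore a projector onto a low-dimensional subspace with an explicit coefficient matrix. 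Plugging these projectors in and taking the product of $m-1$ such layers reduces the whole estimate to a matrix power of a small, explicit $4\times 4$ (or $2\times 2$) transfer matrix $M_A$ with entries determined by the Hilbert--Schmidt overlaps of $A$ and $A^\dagger A$ with $\Id_{\rm ad}$ and the swap.

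The crucial observation then is that $U$ being a unitary channel preserves tracelessness, so $A=UP_{\rm ad}=P_{\rm ad}UP_{\rm ad}$ satisfies $A^\dagger A=P_{\rm ad}$; that is, $A$ is a partial isometry on the traceless subspace. Inserting this identity shows that all entries of $M_A$ involving $A^\dagger A$ reduce to the same entries as for $A=P_{\rm ad}$, while the remaining entries are bounded by $1$ via Cauchy--Schwarz on Hilbert--Schmidt inner products. Hence $\|M_A^{m-1}\|\le 1$ uniformly in $m$, and the sequence length drops out of the bound entirely. This length-independence is the main technical point and the step I expect to be the trickiest to execute cleanly, because one must keep track of how $P_{\rm ad}$ interacts with the swap on the doubled space and verify that the contribution of the off-diagonal sector does not grow with $m$.

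Finally, I would close the computation by evaluating the two boundary pieces, namely $\smallavg_{g_1}\sigma_{\rm ad}(g_1)^{\otimes 2}(\rho\otimes\rho)$ and $\alpha^2\sum_x\sigma_{\rm ad}(g_m)^\dagger(E_x)^{\otimes 2}$ under twirling of the outermost gates, where the explicit normalizations $\rho=\ketbra00$ and $E_x=\ketbra xx$ give factors of order $1/(2^n-1)$ and $(2^n+1)^2/2^n$ respectively, which multiply together with the invariant-subspace dimensions to yield an $\mathcal O(1)$ constant, with a careful accounting giving the stated bound $10$. The upper bound is not tight; the value $10$ arises from bounding a small number of explicit $\mathcal O(1)$ numerical factors (dimensions of the swap-symmetric sectors, Cauchy--Schwarz slack, and the normalization constant $\alpha$), and no non-trivial scaling in $n$ or $m$ enters.
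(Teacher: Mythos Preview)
Your starting expression is wrong, and the error propagates through the entire plan. The second moment you want to bound is
\[
\smallavg_{\vec g}\sum_{x} f_A(x,\vec g)^2\,p(x\mid\vec g),
\]
where $p(x\mid\vec g)=\braa{\tilde E_x}\phi(g_m)\cdots\phi(g_1)\kett{\tilde\rho}$ is the Born probability under the \emph{noisy} implementation $\phi=\Lambda_L\omega\Lambda_R$. You have dropped this factor entirely. Including it introduces a \emph{third} tensor copy, so each layer contributes $\sigma_{\rm ad}(g)^{\otimes 2}\otimes\omega(g)$, not $\sigma_{\rm ad}(g)^{\otimes 2}$. Consequently the relevant twirl requires the Clifford $3$-design property, not the $2$-design property, and the commutant is spanned by the six $S_3$ permutations (reducing to a $3\times 3$ block after projecting with $P_{\rm ad}^{\otimes 2}\otimes\mathbb I$), not by identity and swap. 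Your $2\times 2$ (or $4\times 4$) transfer matrix $M_A$ built from $A^\dagger A$ simply does not describe the correct dynamics: the actual transfer operator is $P_\sigma^{(2)}(A^{\otimes 2}\otimes\Lambda)P_\sigma^{(2)}$ with $P_\sigma^{(2)}=\smallavg_g\,\sigma_{\rm ad}(g)^{\otimes 2}\otimes\omega(g)$, and $\Lambda$ (not $A^\dagger$) sits in the third slot.

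The paper's proof exploits precisely this three-copy structure. The key step is not the algebraic identity $A^\dagger A=P_{\rm ad}$; it is that when $A=P_{\rm ad}UP_{\rm ad}$ you can undo the projectors, write the layer map as $U^{\otimes 2}\otimes\Lambda$ sandwiched between full third-moment projectors $\smallavg_g\omega(g)^{\otimes 3}$, and observe that $U^{\otimes 2}\otimes\Lambda$ is a genuine CPTP map. Hence the whole $m$-fold product is a quantum channel applied to a state, so each inner product $\braa{\pi'}[\cdots]^m\kett{0_n^{\otimes 2}\otimes\Lambda_R(\tilde 0_n)}$ is bounded by $1$ via the $1\to 1$ contractivity of channels, uniformly in $m$. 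The remaining work is the boundary: one Weingarten matrix and the SPAM vector on the measurement side combine to an explicit $O(1)$ constant, giving the $10$. Your observation that $A$ is a partial isometry on the traceless subspace is correct but plays no role once the third copy is in place.
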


\Cref{thm:var_unitary} is noteworthy for several reasons. First, it does not depend on the number of qubits $n$. Therefore, the estimation of $k_U(m)$ is efficient even on a quantum system consisting of many qubits.
Second, the shadow-norm bound does not depend on the sequence length $m$, enabling relative accuracy estimation of the decay rate in certain regimes. 
We note that the constant $10$ is probably sub-optimal.
The derivation of this theorem can be found \jk{in the Supplementary Note 6.}

As the main consequence of \cref{thm:var_unitary} together with
\cref{eq:generalSamplingComplexity}, we find that it is possible to sample-efficiently estimate exponentially many relative fidelities
with respect to unitary channels to additive precision from the same gate-set shadows obtained by multi-qubit Clifford UIRS.

Next, we consider a general probe super-operator $A$ restricted to the traceless subspace.
Note that $A$ does not need to be a quantum channel.
In the following, we show that the dynamical shadow norm can be controlled in terms of the unitarity
\cite{wallman2015estimating} of $A$,
\begin{equation}
u(A) = \tr(AA\ct)(2^{2n}-1)^{-1}\,.
\end{equation}
For instance, $u(A)\leq 1$ if $A$ is a quantum channel with equality if $A$ is indeed unitary.
We prove the following theorem.

\begin{theorem}[Clifford UIRS general norm bound]\label{thm:cliff_var}
Consider the $n$-qubit Clifford UIRS protocol and let $A = P_{\rm ad} AP_{\rm ad} $ be a probe
super-operator restricted to the traceless subspace.
The dynamic shadow norm for $m > 2$ is upper bounded by
\begin{equation}
\norm{A}_{{\rm dyn},m}\leq  C\,m^2  r(A)^{m-1} \max\{r(A), 1\},
\end{equation}
with
$
r(A) = (1+2^{4-n/3})u(A)
$ and suitable constant $C$.
\end{theorem}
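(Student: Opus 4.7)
The plan is to bound the dynamic shadow norm by a direct second-moment analysis, reducing it to the spectral properties of a doubled-twirl transfer super-operator iterated $m-1$ times. Starting from the definition, $\norm{A}_{{\rm dyn},m}$ is controlled by a supremum over SPAM choices of the second moment $\mathbb{E}_{\vec g, x}|f_A(x,\vec g)|^2$. Since the $g_i$ in the sequence are drawn i.i.d.\ and $f_A$ has a product structure over the sequence, squaring and taking expectations factorizes the computation into the $(m-1)$-fold composition of a single-gate twirl
\begin{equation*}
T_A = \mathbb{E}_{g\in \mathbb{C}_n}\bigl[(\sigma(g)\otimes\sigma(g))\,(A\otimes\overline{A})\,(\sigma(g)\otimes\sigma(g))^\dagger\bigr]
\end{equation*}
acting on the doubled super-operator space restricted to $P_{\rm ad}\otimes P_{\rm ad}$, combined with boundary terms encoding $\rho$ and the POVM $\{E_x\}$.

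The first substantive step is to split $T_A = T_A^{{\rm Haar}} + \Delta_A$ into a Haar-averaged part and a Clifford-specific defect, using that $\mathbb{C}_n$ is a unitary $3$-design. A Weingarten-type calculation on the traceless subspace shows that $T_A^{{\rm Haar}}$ has leading eigenvalue equal to the unitarity $u(A) = \tr(AA^\dagger)/(2^{2n}-1)$; the remaining eigenvalues are sub-dominant and feed into the polynomial prefactor through mixing with other components. For the defect $\Delta_A$, although $\mathbb{C}_n$ is not a $4$-design, the difference between its $4$-th moment and the Haar $4$-th moment is spanned by stabilizer-code projectors whose operator-norm contribution scales as $O(2^{-n/3})$, using standard results on Clifford higher moments. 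The crucial technical point is to show that $\Delta_A$ still only involves $u(A)$ (and not a larger norm of $A$), via the restricted commutant structure on the traceless subspace, yielding $\|\Delta_A\|\leq 2^{4-n/3}\,u(A)$. Combining gives $\|T_A\|\leq r(A)$, and iteration $\|T_A^{m-1}\|\leq r(A)^{m-1}$.

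Finally, the $m^2$ polynomial prefactor arises from the SPAM boundary terms: the initial state and POVM outcomes have overlap with both $\sigma_{\rm tr}$ and $\sigma_{\rm ad}$, so leakage between these irreducible components at each end of the sequence contributes $O(m)$ sub-leading terms (one from each boundary); the $\max\{r(A),1\}$ absorbs one additional gate-level factor needed to cover the regime $r(A)>1$. The hard part will be the explicit evaluation and norm bound on the Clifford-specific defect $\Delta_A$: establishing that its operator norm is controlled by the unitarity $u(A)$ up to the $2^{-n/3}$ suppression, rather than by a potentially much larger norm such as $\snorm{A}$. This requires combining the $4$-design defect of $\mathbb{C}_n$ with the doubled-twirl structure on the traceless subspace so that only the quantity $\tr(AA^\dagger)/(2^{2n}-1)$ survives in the final bound, producing the stated dependence on $r(A)$.
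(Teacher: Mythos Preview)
Your proposal misidentifies the structure of the dynamic shadow norm and, as a consequence, both the mechanism behind the $2^{-n/3}$ factor and the $m^2$ prefactor.

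First, the iterated operator in $\norm{A}_{{\rm dyn},m}$ is not a doubled twirl $T_A=\EE_g[(\sigma(g)\otimes\sigma(g))(A\otimes\bar A)(\sigma(g)\otimes\sigma(g))^\dagger]$. The variance couples \emph{two} copies of the correlation function with \emph{one} copy of the Born probability, so the relevant average is $\EE_g\,\sigma_{\rm ad}(g)^{\otimes 2}\otimes\omega(g)$, a \emph{third}-moment object, sandwiching $A^{\otimes 2}\otimes\Lambda$ with the unknown noise channel $\Lambda$ in the third factor. Because $\mathbb{C}_n$ is a unitary $3$-design, this projector is computed \emph{exactly} by Schur--Weyl/Weingarten over $S_3$; there is no Clifford-vs-Haar ``defect'' at the level of the twirl, and no fourth moment enters. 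The paper reduces the problem to a $3\times 3$ matrix $\hat\Omega\hat W$ (permutations $\{(12),(123),(132)\}$ surviving after $P_{\rm ad}^{\otimes 2}$), where the $\Lambda$-dependent entries like $\tr(J_u(A)^2J_u(\Lambda))$ are bounded via H\"older and CPTP contractivity.

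Second, the $2^{-n/3}$ does not come from a $4$-design defect but from spectral-radius \emph{perturbation theory} on that $3\times 3$ matrix: one compares $\hat\Omega\hat W$ to a nearby block-triangular $\hat\Omega^0\hat D$ whose spectrum is computable and bounded by $(1+O(2^{-n}))u(A)$, then applies the Bhatia-type bound $|s(A)-s(B)|\le(\|A\|+\|B\|)^{1-1/d}\|A-B\|^{1/d}$ with $d=3$, turning an $O(2^{-n})$ norm difference into an $O(2^{-n/3})$ spectral-radius correction. Third, the $m^2$ is not a SPAM boundary effect: it arises because one only controls the \emph{spectral radius} of $\hat\Omega\hat W$ near $u(A)$, not its operator norm, and converting $s(\cdot)$ to $\|(\cdot)^m\|_\infty$ for a non-normal $3\times 3$ matrix costs a factor $m^{d-1}=m^2$ (Schur-form expansion). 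Your route of bounding $\|T_A\|$ directly would give $\|T_A^{m-1}\|\le\|T_A\|^{m-1}$ with no polynomial prefactor at all, which is inconsistent with the stated result and signals that the operator norm of the reduced matrix is genuinely larger than $r(A)$; only its spectral radius is that small.
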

The proof of this theorem, given in the \jk{Supplementary Note 6}, is similar in spirit to \cref{thm:var_unitary}, but significantly more involved.
Choosing $A$ to be unitary $(u(A)=1)$ does not recover \cref{thm:var_unitary}, due to the appearance of the quadratic scaling in $m$.
This term arises because we consider general probe super-operators $A$, giving rise to polynomial transient dynamics in the dynamic shadow norm (due to the non-normality of the underlying operators \cite[Chapter $6$]{wolf2012quantum}).
For many sensible choices of $A$, the polynomial scaling in $m$ does not appear as is evidenced by \cref{thm:var_unitary}. Also, the bound does not quite scale with unitarity $u(A)$, but rather with the parameter $r(A)$ which differs from $u(A)$ by an exponentially small factor. We believe this to be an artifact of the proof technique.

This theorem leads us to the remarkable conclusion that the multi-qubit Clifford UIRS protocol allows us to estimate overlaps $p(A\Lambda)$ for a very large class of super-operators. In particular, $A$ can be any trace non-increasing map, allowing us, e.g., to characterize the overlap between the noise channel $\Lambda$ and sets of Kraus operators, making the Clifford UIRS protocol an all-purpose tool for noise map exploration.

\subsection{Example: Local Clifford UIRS}
A particularly scalable and interesting protocol arises when performing a UIRS protocol with the local Clifford group $\CC_1^{\times n}$ over $n$ qubits.
In this case, the experiment consists of performing sequences of i.i.d.~random single-qubit gates simultaneously on all qubits, initially prepared in $\ketbra 0 0$ ending with a computational basis measurement.

For $\CC_1^{\times n}$ the conjugate representation $\omega(g) = U_g \cdot U_g\ct$ with $U_g = U_{(g_1, \ldots, g_n)} = U_{g_1} \otimes \ldots \otimes U_{g_n}$  decomposes into $2^n$ irreducible, mutually inequivalent representations $\sigma_w$ with $w \in \{0,1\}^n$ that have support on the normalized non-identity Pauli operators on all qubits $i$ for which $w_i=1$.
We denote the projectors onto these irreducible sub-representations as $P_w$ (see \jk{Supplementary Note 2} for more details).

\paragraph*{Signal model.} %
We consider sequence correlation functions with probe operators $A$ that only have support on a single
irreducible representation $\sigma_w(g)$ and set $\alpha = 2^n3^{|w|}$.
Then, the mean $k_{f_A}(m)$ again describes a single exponential decay \cref{eq:single-decay} with

\begin{equation}
p_{w, A}(\Lambda) = \tr(P_w\Lambda P_w A)3^{-|w|}\,.
\end{equation}
We will refer to this quantity as a \emph{local fidelity w.r.t.\ $A$}.
The local fidelity is again somewhat familiar.
The special case $p_{w, \mathbb I}$ has been called the `addressability' in ref.~\cite{gambetta2012characterization}, where it was used to gain information about the strength of correlated errors.
Using gate-set shadows of simultaneously applied local gate sequences, we can collect even more information about correlated errors, giving rise to an efficient \emph{cross-talk tomography protocol} introduced in \cref{subsec:cross-talk_tomography}.
We can again equip the UIRS protocol with sampling complexity guarantees by bounding the shadow norm.

\paragraph*{Dynamic shadow norm.}
We derive a bound on the dynamic shadow norm of the local Clifford group that depends exponentially on the Hamming weight $|w|$ of the bit-string $w$ labeling the representation being addressed but is independent of the total number of qubits in the system.
\begin{theorem}[Local Clifford UIRS norm bound]
\label{thm:local_cliff_var}
For the local Clifford UIRS protocol on $n$ qubits, $w \in \{0,1\}^n$, and $A = P_w A P_w $ a probe operator, it holds that
\begin{align}\label{eq:local_cliff_bound}
\norm{A}_{{\rm dyn},m} &\leq  2^{|w|} 3^{2|w|}\big[3^{-|w|}\tr(AA\ct)\big]^{m-1}\,.
\end{align}
\end{theorem}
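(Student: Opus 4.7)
The plan is to exploit the tensor-product structure of $\mathbb{C}_1^{\times n}$ together with the multiplicity-free decomposition of its adjoint representation into the $2^n$ components $\sigma_w$ labeled by Hamming strings $w$. The dynamic shadow norm (Supplementary Equation (25)) is, up to normalization factors, a second-moment object of the form
\begin{equation}
\|A\|_{\mathrm{dyn},m} = \alpha^2 \sum_{x} \EE_{\vec g}\Bigl|\Tr\bigl[E_x\, \sigma(g_m)\!\!\prod_{i=1}^{m-1}\! A\,\sigma(g_i)(\rho)\bigr]\Bigr|^2,
\end{equation}
with $\alpha = 2^n 3^{|w|}$. Because everything factorizes across qubits—the group, the representations $\sigma_w = \bigotimes_i \sigma_{w_i}$, the initial state $\rho = \ket{0}\!\bra{0}^{\otimes n}$, and the computational-basis POVM—I would first argue that qubits with $w_i=0$ contribute an overall factor of $1$: for such qubits the representation is trivial on the single identity block, the $\alpha$-normalization is exactly compensated, and $A$ acts trivially. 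This lets us reduce the whole estimate to an effective problem on the $|w|$-qubit subsystem where every qubit is in the traceless representation.

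Next I would bound the reduced second moment. Writing the squared modulus as a trace on two copies, the key object is the $|w|$-qubit twirl
\begin{equation}
\mathbb{M} \coloneqq \EE_{g \in \mathbb{C}_1^{\times |w|}} \sigma_{w}(g)^{\otimes 2},
\end{equation}
which, since each single-qubit Clifford group is a unitary $3$-design on the traceless subspace (and certainly a $2$-design), projects onto the commutant spanned by the identity $\mathbbm{1}$ and the swap $\mathbb{F}$ restricted to the traceless block; applied per qubit this yields a product of single-qubit projectors of rank $2$ onto $\mathrm{span}\{\mathbbm{1}_{\mathrm{ad}}, \mathbb{F}_{\mathrm{ad}}\}$. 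Sandwiching the $m-1$ copies of $A$ between these projectors in the doubled space produces, at each step, a $2\times 2$ transfer matrix $T$ whose entries are combinations of $\Tr(A)$-type and $\Tr(AA^\dagger)$-type quantities on the traceless block. I would then argue that the dominant contribution comes from the $\mathbb{F}$-channel, for which a single application contributes $\Tr(AA^\dagger)$, so that after dimensional normalization by $3^{|w|}$ (the dimension of the traceless block) the $m-1$ sequence steps give the claimed factor $[3^{-|w|}\Tr(AA^\dagger)]^{m-1}$. The boundary contributions from the first gate, the state, the measurement, and the final normalization $\alpha^2 = 4^n 9^{|w|}$ are absorbed into the $2^{|w|} 3^{2|w|}$ prefactor after cancelling the $4^n$ against the sum over computational-basis outcomes and using $\Tr(\rho^2)=1$.

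The main obstacle I expect is the careful per-qubit moment calculation for the single-qubit Clifford group on the traceless subspace: I need to control both diagonal and off-diagonal entries of the $2\times 2$ transfer matrix $T$ in a way that cleanly dominates by $3^{-|w|}\Tr(AA^\dagger)$ rather than a larger combination involving $|\Tr(A)|^2/3^{|w|}$. A convenient way to do this is to expand $A$ in the Pauli basis restricted to the $|w|$-block, noting that $\Tr(A)$-type contributions correspond to the trivial component inside $\sigma_w^{\otimes 2}$ which is also controlled by $\Tr(AA^\dagger)$ via Cauchy–Schwarz, absorbing it into the final constant. Once these bounds are in place, multiplicativity over the $|w|$ active qubits and over the $m-1$ sequence steps immediately assembles into the product form of the claim, and the remaining combinatorics of boundary prefactors supply the $2^{|w|} 3^{2|w|}$ constant. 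The full calculation is deferred to the Supplementary Note on the local Clifford case.
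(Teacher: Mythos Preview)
Your proposal has a genuine gap that prevents it from reaching the claimed $n$-independent bound. The dynamic shadow norm in Supplementary Eq.~(25) is a \emph{three}-copy object: two copies carry the squared filter function, and a third carries the physical Born probability $\braa{\tilde E_x}\phi(g_m)\cdots\phi(g_1)\kett{\tilde\rho}$. Your opening expression omits this third factor. While bounding the Born probability by $1$ pointwise is legitimate, doing so \emph{before} reducing from $n$ to $|w|$ qubits loses the $n$-independence. Concretely, on each inactive qubit $i$ (with $w_i=0$) the two filter copies give $\bigl(\braakett{x_i}{\tau_0}\braakett{\tau_0}{0}\bigr)^2=1/4$; summing over $x_i\in\{0,1\}$ yields $1/2$, and multiplying by the per-qubit contribution $4$ from $\alpha^2$ leaves a factor of $2$, not $1$. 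Carrying this through your two-copy calculation produces a prefactor $2^n\cdot 3^{|w|}$ rather than $2^{|w|}\cdot 3^{2|w|}$, so the assertion that ``the $4^n$ cancels against the sum over computational-basis outcomes'' is incorrect. The paper avoids this by first using the third copy to marginalize the noisy state, POVM, and channel $\Lambda$ down to the $|w|$-qubit subsystem (where they remain a genuine state, POVM, and channel), and only \emph{then} bounding the resulting $|w|$-qubit Born probability by $1$.

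A second, less critical issue: the twirl $\EE_g\,\sigma_w(g)^{\otimes 2}$ is not the rank-two projector onto $\mathrm{span}\{\mathbbm{1},\mathbb{F}\}$ you describe---that is the Schur--Weyl commutant for the full unitary group on the \emph{unrestricted} adjoint representation. Because $\sigma_w$ is a real irreducible representation, the invariant subspace of $\sigma_w^{\otimes 2}$ is one-dimensional (see Supplementary Lemma~6 for the explicit rank-one form), so there is no $2\times2$ transfer matrix: each sequence step contributes exactly the scalar $3^{-|w|}\Tr(AA^\dagger)$. This actually simplifies the calculation once the reduction to $|w|$ qubits is carried out correctly via the three-copy structure.
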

The proof is given in \jk{the Supplementary Note 5}.
Note that the term inside the square bracket in \cref{eq:local_cliff_bound} can be considered as a variant of the unitarity restricted to the image of $P_w$. In particular, if $A = P_w U P_w$ for any unitary channel $U$ we have $3^{-|w|}\tr(AA\ct)=1$. Thus, for restrictions of unitary probe operators the bound becomes independent of the sequence length and in consequence the protocol is sample-efficient for bounded $|w|$.

\subsection{\ir{Example beyond UIRS: Pauli-noise estimation}}
Thus far we have focused on uniformly independently sampled random sequences (UIRS protocols).
It is also fruitful to consider more general probability distributions on the set of sequences of a given length.
We give an example for this by constructing a simple protocol that estimates the diagonal elements of an $n$-qubit channel $\Lambda$ using only $O(n2^n)$ samples.
This sampling complexity matches the asymptotic bound given for this task in ref.~\cite{flammia2020efficient}.
Using gate-set shadows, however, gives a simpler experimental description and analysis.
To this end, consider random sequences of the form $\vec{g} = (c^{-1}, p_m , \ldots, p_1, c)$ where $p_1, \ldots, p_m$ are chosen independently uniformly at random from the Pauli group $\mathbb{P}_n$ and $c$ is chosen uniformly at random from the Clifford group $\mathbb{C}_n$.
Note the inverse $c^{-1}$ here at the end of the sequence.
In a black-box fashion, we additionally intersperse the channel $\Lambda$ in between executing the random Pauli elements in the experiment.
The measurement is again a computational basis measurement and the initial state $\rho=\ketbra{0}{0}$.
Choose $\tau$ to be a Hilbert-Schmidt normalized traceless Pauli operator.
As the associated correlation function we define
\begin{equation}
f_\tau(x,\vec{g}) \coloneqq \alpha \Tr[E_x \omega(c)\omega(p_m) A_{\tau}\ldots A_\tau\omega(p_1) \omega(c)\rho]\,
\end{equation}
with $A_\tau \coloneqq \tau \Tr(\tau\,  \cdot)$ 
and $\alpha = 2^{n}(2^n+1)$.
For convenience, we ignore the SPAM in deriving and stating the following results.
Both of these assumptions can be easily relaxed.
As we show in the \jk{Supplementary Note 7}, the corresponding sequence mean is the power of the diagonal matrix entry of $\Lambda$ corresponding to $\tau$, i.e.,
\begin{equation}
    k_\tau(m) = \Tr[\tau \Lambda(\tau)]^{m-1}.
\end{equation}
We further show that 
the variance of the associated estimator can be bounded as
\begin{equation}\label{eq:pauli_var}
\mathbb{V}_\tau(m)\leq \frac{2^{3n}(2^n+1)^3}{2^{3n}(2^{2n}-1)} = O(2^n),
\end{equation}
for all choices of $\tau$. Note that there are $4^n-1$ such choices, characterizing all diagonal elements of the quantum channel $\Lambda$. Hence, by using median-of-means estimators, we can estimate $k_\tau(m)$ for all $\tau$ to uniform additive precision using $O(n2^n)$ samples (independently of $m$). By the analysis in ref.~\cite{harper2019statistical} for the estimation of single exponential decays, and the fact that the decay rates $\Lambda_{\tau,\tau}$ are strongly clustered (\cite[Lemma 4]{helsen2019multiqubit}) this leads to a relative precision estimation of the associated Pauli fidelities, matching the performance given in ref.~\cite{flammia2020efficient}.

\subsection{Application: Learning unitary noise models}

\label{subsec:unitary_noise}

In the previous section we have shown how to efficiently estimate the overlap of certain probe-operators with the noise of a gate-set.
This data, e.g., the average gate fidelity of the noise with a specific gate, is already of interest.
The most intriguing feature, however, is that we can estimate many different probe-operators from the same data.
In this way, we can use estimates from gate-set shadows as a subroutine in a complex post-processing pipeline that extracts more information about the noise.
This opens up the way to perform many different characterization tasks that arise in a full-scale engineering cycle of building a quantum computer from the same simple data.
Importantly, the resulting protocols automatically inherit the SPAM robustness of the estimation protocol.
We illustrate these possibilities with three concrete examples.

\begin{figure*}

\includegraphics[width=\textwidth]{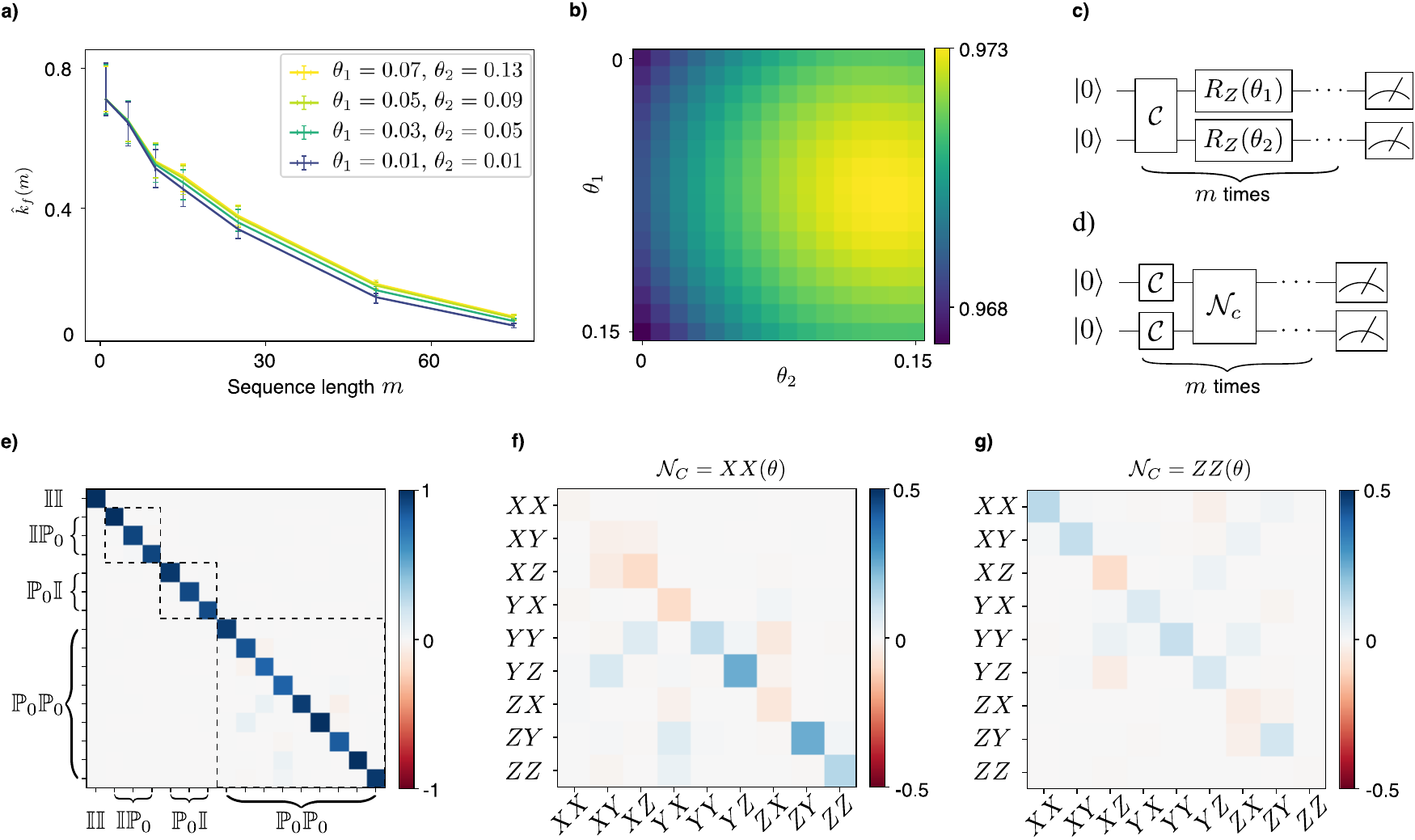}
  \caption{
  Numerical simulations of two potential applications, unitary noise optimization (sec.~\ref{subsec:unitary_noise}) and cross-talk tomography (sec.~\ref{subsec:cross-talk_tomography}).
  Panels \textbf{a)} and \textbf{b)}  show simulation results of the multi-qubit Clifford UIRS protocol for two qubits and 1000 random sequences per sequence length.
  Between every Clifford gate $\mathcal{C}$, two independent $Z$-rotations $R_Z(\theta)$ with rotation angles $\theta_1 = 0.07$ and $\theta_2 = 0.13$ have been applied (see circuit diagram \textbf{c)}).
  Panel \textbf{b)} shows average fidelities $F(U(\theta), \Lambda)$ reconstructed from the gate-set shadows using the ansatz $U(\theta_1, \theta_2) = R_Z(\theta_1) \otimes R_Z(\theta_2)$.
  Example decays of the sequence averages $k(m)$ are shown in panel \textbf{a)} with bootstrapped 95\% confidence intervals around the decay points.
  Panels \textbf{e)} to \textbf{g)} display simulation results for cross-talk tomography from two-qubit local Clifford UIRS data with 15.000 random sequences per sequence length.  After every layer of local Cliffords, an entangling cross-talk noise process $\mathcal{N}_c$
  has been applied (see the circuit diagram \textbf{d)}).
  Panel \textbf{e)} shows the \emph{Pauli transfer matrix} (PTM) of the reconstructed pinched marginal $S$ \jk{Supplementary Equation (107)},
  for cross-talk of the form $\mathcal{N}_c = XX(\theta)$, with dashed boxes indicating the unital marginals $\Lambda_{0,1}$, $\Lambda_{1,0}$, and $\Lambda_{1,1}$.
  Panels \textbf{f)} and \textbf{g)} show the PTMs of the difference between the unital marginal $\Lambda_{1,1}$ and the tensor product $\Lambda_{1,0} \otimes \Lambda_{0,1}$ as a characterization of the cross-talk between the two qubits,
  for cross-talk $\mathcal{N}_c = XX(\theta=0.4)$ in \textbf{f)} and $\mathcal{N}_c = ZZ(\theta=0.4)$ in \textbf{f)}.
  Simulations have been performed using Qiskit \cite{Qiskit} with single-qubit depolarizing noise of $p_1 = 0.002$ for single-qubit gates and two qubit depolarizing noise $p_2 = 0.01$ for two-qubit gates (on top of the custom noise processes after each Clifford layer).
  For the PTM plots, modified functions from the Forest Benchmarking package \cite{forestbenchmarking} have been used. \label{numerics_figure}}
  \end{figure*}

When characterizing noisy quantum gates one differentiates between coherent noise (due to imperfect specification of the gate) and incoherent noise (due to interactions with the environment).
These two types of noise have different consequences for, e.g., error correction \cite{huang2019performance,Roads} and are engineered away in different ways.
At the same time, coherent errors can be corrected by experimental design and control if one has a concrete description.
Given a model for a unitary channel $\theta \mapsto U(\theta)$, we can learn the model parameters $\theta$ approximating the noise channel $\Lambda$ by maximizing $F(U(\theta),\Lambda)$.
During the optimization the objective function, its gradient, etc. can be estimated from the same classical gate-set shadow.
For the multi-qubit Clifford UIRS every estimation requires a polynomial size shadow in the number of qubits and only a logarithmic overhead in the number of evaluations $F(U(\theta),\Lambda)$.
A numerical simulation of a simple learning example is given in \cref{numerics_figure}.

\subsection{Application: Cross-talk tomography}
\label{subsec:cross-talk_tomography}

A key source of error in today's quantum computing devices is correlated noise, or cross-talk.
For this reason, a significant effort has gone into characterizing cross-talk errors specifically \cite{gambetta2012characterization,rudinger2021experimental,Maciejewski2021modelingmitigation}.
Using the flexibility of extracting manifold information from gate-set shadows in the post-processing, we here propose \emph{cross-talk tomography} as an efficient, robust, and detailed cross-talk characterization procedure,
based on the local Clifford UIRS protocol.

The protocol gains tomographic information about, what we call, the \emph{unital marginals} $\Lambda_w = P_w \Lambda P_w$, $w\in \{0,1\}^n$, of the noise channel $\Lambda$.
(Here, $P_w$ is again the projector onto the irreducible representations of the local Clifford group.)
These unital marginals arise as restrictions of channel marginals $\Lambda_{\bar A}$, where one evaluates $\Lambda$ on a maximally mixed input on a system $A$ and traces out the resulting state on $A$ \cite{hsieh2021quantum}.

Now $\Lambda_w$ can be reconstructed via simple linear inversion (see ref.~\cite[Lemma 37]{AverageGateFidelities}) from the local fidelities $p_{w, \mc C}(\Lambda) = 3^{-|w|}\Tr(\Lambda P_w\mc C P_w)$ with respect to the probe-operators given by the local Clifford channel $\mc C$ according to
\begin{equation}
  \Lambda_w = \frac1{|\CC_w|} \sum_{\mc C \in \CC_w} 3^{2|w|}p_{w, \mc C}(\Lambda) \mc C\ad P_w\, ,
\end{equation}
where the sum is restricted to local Clifford channels with unitaries from the subgroup $\CC_w$ of $\CC_n$ acting non-trivially on only the qubits in the support of $w$.
In fact, it is sufficient to consider all local Clifford channels $\mc C$ that act non-trivially on the support of the bit-string $w$.
Not restricting the non-trivial support of $\mc C$, however, allows us to simultaneously reconstruct $\Lambda_w$ for multiple values of $w$.

This constitutes the basis of cross-talk tomography for $k$-local interactions.
Let $H_k \subset \{0,1\}^n$ be the subset of bit strings with Hamming weight $k$.
i) Perform the UIRS experiment for the local Clifford group over $n$ qubits.
ii) Estimate $p_w(\mc C)$ for all $w \in H_k$ and for all $\mc C$ acting non-trivially on the support of $w$.
iii) Reconstruct all $\Lambda_w$ for $w \in H_k$.

By comparing $\Lambda_w$ for different bit strings, one obtains information about the correlations present in $\Lambda$.
Building upon the guarantees for UIRS, we show that cross-talk tomography is $\epsilon$-accurate in diamond norm for all $\Lambda_w$ using $O(k^2\, 2^{9k} / \epsilon^2)$ shadow samples (up-to log-factors).
Thus, for small $k$, cross-talk tomography is highly scalable to large numbers of qubits.
In the light of \cref{thm:local_cliff_var}, this efficiency stems from using local unitary probe operators.
The derivation and even tighter guarantees are given in \jk{Supplementary Note 8}.

As an illustration we study the protocol with a $2$-qubit example. We start by using the local Clifford UIRS protocol to reconstruct the $2$-qubit unital marginals $\Lambda_{1,0}, \Lambda_{0,1}$ and $\Lambda_{1,1}$. Next we compute the tensor product $\Lambda_{1,0}\otimes\Lambda_{01}$. It is straightforward to see that if the channel $\Lambda$ is a tensor product of single-qubit quantum channels featuring no correlations 
(i.e., there is no cross-talk) then  $\Lambda_{1,0}\otimes\Lambda_{0,1} = \Lambda_{1,1}$. Hence, both the difference $\Lambda_{1,0}\otimes\Lambda_{0,1} - \Lambda_{1,1}$ and the product $ \Lambda_{1,1} (\Lambda_{1,0}\otimes\Lambda_{0,1})^{-1}$ provide meaningful characterizations of cross-talk present between qubits $1$ and $2$. The difference measure can be considered as a generalization of the commonly used addressability metric proposed in ref.~\cite{gambetta2012characterization}.
But going beyond a mere metric, we expect that the channel marginals not only detect the presence of cross-talk, but also provide more detailed diagnostic information. As a proof of principle we have numerically simulated the above protocol to diagnose cross-talk in a two-qubit system. The results of a numerical simulation of the protocol are presented in \cref{numerics_figure}.

\subsection{Application: SPAM-robust channel reconstruction}
\label{subsec:channel_reconstruction}

Kimmel et al.~\cite{KimmOhki} have proposed the idea to combine the output of $O(2^{4n})$ different interleaved RB experiments in order to get a robust tomographic estimate of a unital quantum channel $\Lambda$.
By explicitly exploiting the low Kraus-rank, \emph{compressive RB tomography} \cite{KimLiu16,AverageGateFidelities} can reconstruct a unitary approximation to the quantum channel from (up-to-log-factors) $O(2^{2n})$ randomly selected different relative average-gate fidelities with respect to Clifford unitaries.
The previous references, however, left the problem open of providing a SPAM-robust RB protocol that achieves the information-theoretically optimal sampling complexity of $O(2^{4n})$ \cite{AverageGateFidelities} for reconstructing a unitary channel.

We fill in this blank using the data from a multi-qubit Clifford UIRS protocol.
Using a set of randomly selected Clifford unitaries as probe-operators, we can provide the input data to the reconstruction algorithm of ref.~\cite{AverageGateFidelities}.
We show in \jk{Supplementary Note 9} that the number of gate-set shadows to guarantee an accurate reconstruction (in Hilbert-Schmidt norm of the Choi-states) indeed matches the lower-bound of $O(2^{4n})$.
Note that the number of channel invocations is bounded by the maximal sequence length times the number of sequences.
Besides the favorable scaling, the UIRS protocol has the crucial advantage compared to, e.g., the interleaved protocol of ref.~\cite{KimmOhki} that the same measurement data is used for estimating all the average fidelities.

Going beyond the compressive reconstruction of unitary quantum channels, we can use Clifford UIRS as a primitive for the robust reconstruction of arbitrary unital quantum channels in the spirit of ref.~\cite{KimmOhki}, see also ref.~\cite[theorem 38]{AverageGateFidelities} and ref.~\cite{Sco08}.
The required size of the gate-set shadow is $O(2^{8n})$ for an accurate reconstruction in any norm in which unitary channels are normalized.

\subsection{Gate-dependent noise}\label{subsec:gate-dependent_noise}
The presentation so far assumed gate-independent noise. This assumption can be substantially relaxed, at the cost of introducing a more complex description of the noise.
We will focus on the UIRS protocol, which is particularly robust against gate-dependent fluctuations.
We give a fairly comprehensive argument, but leave a rigorous proof of the robustness to future work.
Our argument follows that of the robustness against gate-dependent errors for RB~\cite{IndependentNoise,Mother}. 
For gate-dependent noise, the data form in expectation can be generally written as
\begin{equation}
  k_A(m) = \tr\left[\Xi\, (A \otimes \mathbb I)(\mc{F}(\phi)[\sigma])^{m-1}\right],
\end{equation}
where $\Xi$ depends on the state and measurement and the
operator
$\mc{F}(\phi)[\sigma] \coloneqq \EE_{g \in \gr}\ \sigma(g)\otimes \phi(g)$
is known as the (non-commutative) Fourier transform of $\phi$, evaluated at the irreducible representation $\sigma$;
see the derivation of \jk{Theorem 7 in the Supplementary Information}.

A key fact about this Fourier transform (see, e.g., ref.~\cite{merkel2018randomized} for a proof) is that if $\phi$ is a representation $\omega$ (i.e., a perfectly implemented gate-set), then $\mc{F}(\phi)[\sigma]$ is an orthogonal projector with rank equal to the number of copies of $\sigma$ present in $\omega$. For simplicity, let $\omega$ be multiplicity-free. Then, $\mc{F}(\phi)[\sigma]$ is a rank-one projector.
This implies that $(A\otimes \mathbb{I}) \mc{F}(\phi)[\sigma]$ 
is also a rank-one projector.
When $\phi$ is a sufficiently `good' implementation of $\omega$, the difference between $\mc{F}(\phi)[\sigma]$ and $\mc{F}(\omega)[\sigma]$ is small (in some suitable norm) and can be regarded as a perturbation of $\mc{F}(\omega)[\sigma]$.
(See ref.~\cite{Mother} for a discussion of norms on this space.)
Applying the perturbation theory of non-normal matrices, we conclude that $(A\otimes \mathbb{I})\mc{F}(\phi)[\sigma]$ is as well approximately rank-one, and in particular that there exist super-operators $\Lambda_L, \Lambda_R$ such that
\begin{equation}
\big( (A\otimes \mathbb{I})\mc{F}(\phi)[\sigma]\big)^{m} = \left(A \Lambda_R P_\sigma \tr[P_\sigma\Lambda_L\,\cdot\,]\right)^m + E^m
\end{equation}
where $E$ is a matrix of small norm and $P_\sigma$ is the projector onto $\sigma$ (in the image of $\omega$).
This means that the decay rate $k_A(m)$ has the general functional form
\begin{equation}
k_A(m) = B_1p(A)^{m-1} +B_2\delta(E)^{m-1}
\end{equation}
where $B_1,B_2$ are real numbers encoding SPAM, $\delta(E)$ is small, and $p(A)$, the dominant eigenvalue of $(A\otimes \mathbb{I})\mc{F}(\phi)[\sigma]$, is given by
\begin{equation}
p(A) = {|P_\sigma|^{-1}}\tr(\Lambda_L  P_\sigma A P_\sigma \Lambda_R )\,.
\end{equation}
Up to a small and exponentially decreasing error, we thus recover the functional form of \cref{eq:uirs_mean} also in the presence of gate-dependent noise.
It is important to note however, that in this general case $\Lambda_L$ and $\Lambda_R$ (and their product) need not be CPTP. This complicates the interpretation of $p(A)$ as describing an aspect of a physical noise process.

\section{Discussion}\label{sec:outlook}

It has long been known that classical randomness can facilitate the construction of informative characterization protocols for quantum devices.
Randomized benchmarking~\cite{KnillBenchmarking,dankert_exact_2009,EmersonRB,PhysRevA.75.022314,MagGamEmer,Mother} and
classical shadow estimation~\cite{Shadows, PainiKalev:2019:Shadows} are examples of this mindset.
In our work, we follow this paradigm even more stringently for diagnosing noise in gate-set implementations.
Instead of engineering sophisticated and specific experimental protocols for a specific task, we turn the approach upside down: we focus on the `simplest' randomized protocol that can be implemented with current and near-term quantum architectures: Random gate-sequences followed by native measurements.
Accepting this restriction, we then ask how detailed diagnostic information can be extracted from the resulting data and most importantly how many samples are required.

It turns out that the resulting prescription---a single experiment that
can and has been implemented experimentally already---allows for solving many
benchmarking, certification and identification problems with (near-)optimal efficiency.
All the technicalities that come along with different tasks are shifted
to the classical post-processing phase.
Most importantly, multiple diagnostic tasks can be performed from the same measurements, allowing us to base an entire engineering cycle on a single experiment.

The ideas advocated here constitute the beginning rather than the conclusion of a program.
We regard our theoretical results as a strong motivation to experimentally realize and make use of the concrete applications, such as robust learning of unitary noise and cross-talk tomography.
In addition, several further extensions seem exciting.
A logical first extension of our work is UIRS with other groups and non-uniform measures over said groups.
As with state shadow tomography and randomized benchmarking, we believe the UIRS protocol can be furnished with rigorous guarantees for several other useful gate-sets such as the matchgates \cite{helsen2020matchgate,zhao2021fermionic}, the Heisenberg-Weyl group, the ${\rm CNOT}$-dihedral group and even gate-sets that do not constitute a group \cite{proctor2019direct,liu2021benchmarking}.

We also illustrated the potential of using correlated sequences where the gates are not drawn independently.
We believe that using simple correlated sequences gives a fruitful perspective on long-standing problems such as the characterization of non-Markovian and time-varying noise processes in an experimentally friendly and scalable way.
Furthermore, while not demonstrated here, akin to their state analog, gate-set shadows can also be used for estimating non-linear quantities.

While the bulk of this work discusses diagnostic tools for developing near-term quantum computing devices, random sequence protocols apply beyond that.
We expect that gate-set shadows will for instance find application as a primitive in quantum machine learning \cite{EfficientML}, in particular in dynamic settings such as time-series estimation.
Also in this context, the possibility to `measure first and ask later' increases the flexibility in devising hybrid quantum-classical schemes with experimentally feasible quantum computations. \\

\section{Data availability}

\newtext{
The simulated data used for creating the plots in \cref{numerics_figure} have been deposited on Figshare and are publicly
available \cite{figshare}.}

\section{Code availability}

\newtext{The code used to simulate the protocol and create the plots in \cref{numerics_figure} is available upon request.}

\begin{widetext}
\newpage

\begin{center}
{\large Supplemental material}
\end{center}
\section{Notation}

Throughout this work, we are operating in the \emph{Liouville} or \emph{transfer matrix representation} of quantum channels. We represent finite-dimensional $d\times d$ density matrices $\rho$ as length $d^2$ column vectors $\kett{\rho}$ and POVM elements $E$ as length $d^2$ row vectors $\braa{E}$, with a corresponding trace-inner product $\braa{E}\rho\rangle\!\rangle = \tr(E\ct \rho)$.
In this picture, super-operators $\mc{E}$ get mapped to $d^2\times d^2$ matrices $\mc{E}$ with the property $\mc{E}\kett{\rho} = \kett{\mc{E}(\rho)}$. Note that this representation is compatible with composition of super-operators (mapping to matrix multiplication), and the taking of tensor products. When $d$ is a power of two a good basis for the space of matrices is the set of Hermitian Pauli operators $\mathbb{P}^*$ (normalized under the trace inner product), in this case we also always write $d=2^n$. We denote the normalized identity by $2^{-n/2}\mathbb{I} = \tau_0$ and the set of normalized traceless Hermitian Pauli matrices $\tau$ as $\mathbb{P}^*_0$.
Finally, we use a tilde to indicate noisy implementations of POVM elements and states, so $\tilde \rho$ is a noisy implementation of the state $\rho$ and $\{\tilde E_x\}_x$ is a noisy implementation of the POVM $\{E_x\}_x$. For the specific case of the all-zero computational basis state $\dens{0}\tn{n}$ we write $\kett{0_n}$ (with noisy version $\kett{\tilde 0_n}$) and for the computational basis POVM $\dens{x}$ we write $\braa{x}$ (with noisy version $\braa{\tilde x}$).

\section{Technical preliminaries on representation theory}\label{appsec:prelim}
In this section we recall some basic facts of representation theory (of finite groups), and discuss generally how it applies to our work, with a particular focus on the representation theory of the Clifford group. For a more in depth introduction to representation theory,
we recommend the standard textbook ref.~\cite{fulton2013representation}.

Let $\gr$ be a finite group and consider the space $\mc{M}_d$ of linear transformations of $\mathbb{C}^d$. A representation $\omega$ is a map $\omega:\gr \rightarrow \mc{M}_d$ that preserves the group multiplication, i.e.,
\begin{equation}
\omega(g)\omega(h) = \omega(gh),\;\;\;\;\;\; \forall g,h \in \gr.
\end{equation}
We will require the operators $\omega(g)$ to be unitary as well (for finite groups this can always be done).

\paragraph*{Reducible and irreducible representations.}
If there is a non-trivial subspace $W$ of $\mathbb{C}^d$ such that for all vectors $w\in W$ we have
\begin{equation}\label{eq:sub-rep}
\omega(g)w\in W,\;\;\;\;\;\;\forall g\in \gr,
\end{equation}
then the representation $\omega$ is called \emph{reducible}. The restriction of $\omega$ to the subspace $W$ is also a representation, which we call a \emph{sub-representation} of $\omega$. If there are no non-trivial subspaces $W$  such that \cref{eq:sub-rep} holds the representation $\omega$ is called \emph{irreducible}. We will generally reserve the letter $\sigma$ to denote irreducible representations.
Two representations $\omega,\omega'$ of a group $\gr$ are called \emph{equivalent} if there exists an invertible linear map $T$ such that
\begin{equation}\label{eq:equivalent}
T\omega(g) = \omega'(g)T,\;\;\;\;\;\;\; \forall g\in \gr.
\end{equation}
We will denote this by $\omega\simeq \omega'$.

\paragraph*{Sums, products, and Maschke's Lemma.}
We will make use of sums and products of representations.
Given representations $\omega, \omega'$, the maps
\begin{align}
&\omega\oplus\omega':\gr \to \mc{M}_d \oplus \mc{M}_{d'}:g \mapsto \omega(g)\oplus \omega'(g),\\
&\omega\otimes\omega':\gr \to \mc{M}_d \otimes \mc{M}_{d'}:g \mapsto \omega(g)\otimes \omega'(g),
\end{align}
are again representations. They are, however, generally not irreducible (even if $\omega$ and $\omega'$ are).
However, Maschke's Lemma ensures that every representation $\omega$ of a
group can be uniquely written as a direct sum of irreducible representations, that is
\begin{equation}\label{eq:Maschke_Lemma}
\omega(g) \simeq \bigoplus_{\lambda\in S} \sigma_\lambda(g)^{\oplus n_\lambda},\;\;\;\;\;\;\forall g\in \gr,
\end{equation}
where the index set $S$ labels a subset of the irreducible representations of $\gr$ and $n_\lambda$ is an integer denoting the number of copies (or multiplicity) of $\sigma_\lambda$ present in $\omega$.

\paragraph*{Averages of representations.}
Here we recall some standard results about averages over representations of finite groups. We will present these without proof, referring again to ref.~\cite{fulton2013representation} for a more detailed explanation.
First is the basic statement that the average over any representation of a finite group is a projector (precisely onto the subspace on which the representation acts trivially):

\begin{lemma}\label{lem:projector}
Let $\omega$ be a representation of a group $\gr$ then
\begin{equation}
\avg_{g\in \gr} \omega(g) = P_{\rm inv}
\end{equation}
where $P_{\rm inv}$ is the projector onto the subspace left invariant under the action of $\omega(g)$, i.e., all vectors $v$ s.t. $\omega(g)v = v$ for all $g\in \gr$.
\end{lemma}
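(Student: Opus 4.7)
The plan is to verify two properties of the operator $P \coloneqq \avg_{g \in \gr} \omega(g) = \frac{1}{|\gr|}\sum_{g \in \gr} \omega(g)$: that it is an orthogonal projector, and that its range coincides exactly with the invariant subspace. Both ingredients follow from the group law together with the unitarity assumption on $\omega$.

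First I would establish the key invariance identity: for any fixed $h \in \gr$,
\begin{equation}
\omega(h) P = \frac{1}{|\gr|} \sum_{g \in \gr} \omega(hg) = \frac{1}{|\gr|} \sum_{g' \in \gr} \omega(g') = P,
\end{equation}
where the second equality uses that left-multiplication by $h$ is a bijection on $\gr$. An identical argument shows $P\omega(h) = P$. From $\omega(h) P = P$ it follows that every vector in $\operatorname{ran}(P)$ is $\omega(h)$-invariant for every $h$, i.e., $\operatorname{ran}(P) \subseteq V_{\rm inv}$, where $V_{\rm inv}$ denotes the invariant subspace. Conversely, if $v \in V_{\rm inv}$, then $\omega(g) v = v$ for all $g$, whence $Pv = v$, so $V_{\rm inv} \subseteq \operatorname{ran}(P)$ and in fact $P$ acts as the identity on $V_{\rm inv}$.

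Next I would check idempotency. Applying $P$ on the left of itself and using the identity just established term by term,
\begin{equation}
P^2 = \frac{1}{|\gr|} \sum_{g \in \gr} \omega(g) P = \frac{1}{|\gr|} \sum_{g \in \gr} P = P.
\end{equation}
Finally, to see that $P$ is the \emph{orthogonal} projector onto $V_{\rm inv}$, I would use unitarity: $\omega(g)^\dagger = \omega(g)^{-1} = \omega(g^{-1})$, and since $g \mapsto g^{-1}$ is a bijection on $\gr$, we get $P^\dagger = \frac{1}{|\gr|}\sum_g \omega(g^{-1}) = P$. Combined with $P^2 = P$ and $\operatorname{ran}(P) = V_{\rm inv}$, this gives $P = P_{\rm inv}$ as claimed.

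None of the steps is really an obstacle here; the only subtlety is to invoke unitarity of $\omega(g)$ at the end so as to upgrade the algebraic projector to an orthogonal one, since the lemma as stated speaks of \emph{the} projector onto $V_{\rm inv}$. This is exactly why the excerpt remarks that for finite groups one can always choose a unitary representative in each equivalence class.
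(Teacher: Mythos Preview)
Your proof is correct and is the standard argument; the paper itself does not actually give a proof of this lemma but simply states it as a well-known fact, referring the reader to Fulton--Harris for details. There is nothing to compare beyond noting that your write-up supplies exactly the argument the paper elides.
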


Second is a useful statement about the invariant subspaces of two-fold tensor powers of representations.
\begin{lemma}\label{lem:square}
Let $\sigma,\sigma'$ be real, irreducible, inequivalent, and non-trivial representations of a finite group $\gr$. Then the representation $\sigma\otimes \sigma'$ has no invariant subspace, while the representation $\sigma\tn{2}$ leaves the vector $v(P_\sigma)$ invariant, where $v(P_\sigma) $
is the vectorized projector onto the image of $\sigma$.
\end{lemma}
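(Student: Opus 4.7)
The plan is to handle the two assertions independently, in both cases by first unvectorizing any prospective invariant vector and then invoking either Schur's lemma or a one-line computation. Throughout I would use the vectorization map $v: M \mapsto \sum_{i,j} M_{ij} |i\rangle |j\rangle$ together with the standard identity $(A \otimes B)\, v(M) = v(A M B^T)$. I would also exploit the reality of the representations: in a basis where $\sigma(g)$ is realized by a real matrix, unitarity collapses to $\sigma(g)^T = \sigma(g)^{-1} = \sigma(g^{-1})$, and similarly for $\sigma'$. This identity is the single technical ingredient doing the real work in both parts.

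For the first part, I would write a candidate invariant vector as $v(M)$ for some matrix $M : V_{\sigma'} \to V_\sigma$. The invariance condition $(\sigma(g) \otimes \sigma'(g))\, v(M) = v(M)$ translates, via the vec identity and the reality-unitarity relation for $\sigma'$, into the intertwining relation $\sigma(g) M = M \sigma'(g)$ for all $g \in \gr$. Because $\sigma$ and $\sigma'$ are irreducible and inequivalent, Schur's lemma forces $M = 0$, so there are no nonzero fixed vectors.

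For the second part, I would simply compute directly. Since $P_\sigma$ restricts to the identity on the representation space $V_\sigma$, one has $(\sigma(g) \otimes \sigma(g))\, v(P_\sigma) = v(\sigma(g)\, P_\sigma\, \sigma(g)^T)$, which by the reality-unitarity identity equals $v(\sigma(g)\sigma(g^{-1})) = v(P_\sigma)$. Non-triviality of $\sigma$ is not actually used here; it only serves to exclude the vacuous case of the trivial representation.

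The only (minor) obstacle is bookkeeping: one must ensure that the reality hypothesis is invoked in a basis consistent with the vectorization convention, so that the transpose arising from the vec identity coincides with $\sigma(g^{-1})$. Once this is settled, each part reduces to essentially a single invocation of Schur's lemma and a one-line calculation, respectively.
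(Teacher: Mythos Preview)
Your argument is correct: unvectorizing an invariant vector and applying Schur's lemma handles the first assertion, and the direct computation using $\sigma(g)^T=\sigma(g)^{-1}$ (from reality plus unitarity) handles the second. The paper itself does not supply a proof of this lemma; it is one of the representation-theoretic preliminaries explicitly ``present[ed] without proof'' with a reference to the Fulton--Harris textbook, so there is no paper-proof to compare against. Your proof is the standard one and would serve perfectly well as a self-contained justification.
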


\paragraph*{Representation theory of the Clifford group.}
Here we give some basic facts about the representation theory of the Clifford group, which are used in the main text to derive the decay models for the multi-qubit and local Clifford UIRS protocols.

\begin{lemma}\label{lem:multi_reps}
The Liouville representation of the $n$-qubit Clifford group $\mathbb{C}_n$ decomposes into two irreducible representations, in particular we have for all $g \in \mathbb{C}_n$:

\begin{equation}
\omega(g)=\omega_{\rm triv}(g)\oplus\omega_{\rm ad}(g)
\end{equation}
where $\omega_{\rm triv}(g)$ has support on $\mathrm{Span}\{\tau_0\}$ and $\omega_{\rm ad}(g)$ has support on the space of traceless matrices spanned by all normalized traceless Hermitian Pauli operators $\tau\in \mathbb{P}^*_0$ .
\end{lemma}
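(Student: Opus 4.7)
The plan is to establish the decomposition in three stages: first exhibit the two claimed invariant subspaces by elementary means, then verify the second component is irreducible via the $2$-design property of the Clifford group together with Schur's lemma.

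First, I would note that $\omega(g)(X) = U_g X U_g^\dagger$ is both trace-preserving and preserves the Hilbert-Schmidt inner product. Consequently, the one-dimensional subspace $\mathrm{Span}\{\tau_0\} = \mathrm{Span}\{2^{-n/2}\mathbb I\}$ is invariant (since $\omega(g)(\mathbb I) = \mathbb I$), and by orthogonality the complementary subspace of traceless Hermitian matrices $\mathrm{Span}(\mathbb P_0^*)$ is also invariant. This already yields the direct-sum decomposition $\omega = \omega_{\text{triv}} \oplus \omega_{\text{ad}}$ with the claimed supports. The trivial sub-representation is $1$-dimensional and hence automatically irreducible.

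The main content is therefore to show that $\omega_{\text{ad}}$ is irreducible. My approach is to count the dimension of the commutant of $\omega$. By Maschke's lemma (cited earlier as \cref{eq:Maschke_Lemma}) combined with Schur's lemma, the commutant of $\omega$ has dimension $\sum_\lambda n_\lambda^2$, where $n_\lambda$ are the multiplicities of the inequivalent irreducible components. The commutant of $\omega$ consists of the super-operators that are covariant under conjugation by every Clifford, which in vectorized form amounts to the commutant of $\{U_g \otimes \bar U_g\}_{g \in \mathbb{C}_n}$. The key input is the classical fact that the multi-qubit Clifford group is a unitary $2$-design, so this commutant coincides with the commutant of the corresponding representation of the full unitary group $\U(2^n)$. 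By standard Schur-Weyl duality, the latter commutant is $2$-dimensional, spanned by the identity super-operator and the completely depolarizing super-operator $d^{-1}\kett{\mathbb I}\braa{\mathbb I}$, or equivalently by the projectors onto $\mathrm{Span}\{\tau_0\}$ and $\mathrm{Span}(\mathbb P_0^*)$.

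Hence $\sum_\lambda n_\lambda^2 = 2$, which forces exactly two inequivalent irreducible components each with multiplicity one. Combined with the two invariant subspaces already produced in step one, this identifies them as the complete list of irreducible sub-representations, establishing the claim. The only step that needs a little care is the translation between the $2$-design property (typically phrased as equality of two twirling operators, i.e., for $U_g^{\otimes 2} \otimes \bar U_g^{\otimes 2}$ under the group average) and the statement about the commutant of the Liouville representation $U_g \otimes \bar U_g$; this is standard but is the only non-cosmetic step, and I would cite the $2$-design property of $\mathbb{C}_n$ rather than re-derive it. An equivalent route, which I would sketch as a cross-check, is to observe that the Clifford group acts transitively on the non-identity normalized Pauli operators (up to sign), so any non-zero Clifford-invariant subspace of the traceless part must contain the full orbit and therefore all of $\mathrm{Span}(\mathbb P_0^*)$.
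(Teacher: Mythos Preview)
Your proposal is correct and in fact supplies considerably more detail than the paper does: the paper gives no proof at all beyond the one-line remark ``This is a direct consequence of the $2$-design property of the Clifford group'' together with a citation to ref.~\cite{gross_evenly_2007}. Your commutant-counting argument via the $2$-design property is exactly the standard unpacking of that remark, and your alternative transitivity cross-check is a pleasant bonus.
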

This is a direct consequence of the $2$-design property of the Clifford group. An early proof can be found in ref.~\cite{gross_evenly_2007}. A direct consequence is the following.
\begin{lemma} \label{lem:multi_proj}
Let $\omega$ be the Liouville representation of the $n$-qubit Clifford group. Then we have
\begin{equation}
\avg_{g\in \mathbb{C}_n} \omega(g)^{\otimes 2}= \kett{\tau_0^{\otimes2}}\braa{\tau_0^{\otimes2}}+\frac{1}{2^{2n}-1} \sum_{\tau,\tau' \in \mathbb{P}^*_0}\kett{\tau\tn{2}}\!\braa{{\tau'} \tn{2}}.
\end{equation}
\end{lemma}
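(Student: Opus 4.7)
The plan is to compute the average as the orthogonal projector onto the invariant subspace of $\omega^{\otimes 2}$ using Lemma \ref{lem:projector}, then identify that subspace using the decomposition $\omega = \omega_{\rm triv} \oplus \omega_{\rm ad}$ from Lemma \ref{lem:multi_reps} and the isotypic analysis provided by Lemma \ref{lem:square}. The desired expression then emerges by writing the projector explicitly in the normalized Pauli basis.

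First, I would expand
\begin{equation}
\omega(g)^{\otimes 2} \simeq \omega_{\rm triv}(g)^{\otimes 2} \,\oplus\, \omega_{\rm triv}(g)\otimes\omega_{\rm ad}(g) \,\oplus\, \omega_{\rm ad}(g)\otimes\omega_{\rm triv}(g) \,\oplus\, \omega_{\rm ad}(g)^{\otimes 2}
\end{equation}
and handle each summand separately. The trivial block $\omega_{\rm triv}^{\otimes 2}$ acts as the identity on the one-dimensional space $\mathrm{Span}\{\kett{\tau_0^{\otimes 2}}\}$, contributing $\kett{\tau_0^{\otimes 2}}\!\braa{\tau_0^{\otimes 2}}$ to the average. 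The two mixed blocks $\omega_{\rm triv}\otimes\omega_{\rm ad}$ and $\omega_{\rm ad}\otimes\omega_{\rm triv}$ are each equivalent to $\omega_{\rm ad}$, which is irreducible and non-trivial, and hence by Lemma \ref{lem:square} (used in the cross-representation case) they carry no invariant vector; their averages vanish.

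The main content lies in the $\omega_{\rm ad}^{\otimes 2}$ block. By Lemma \ref{lem:square}, since $\omega_{\rm ad}$ is real, irreducible and non-trivial, its tensor square leaves the vectorization $\kett{P_{\rm ad}}$ invariant, where $P_{\rm ad} = \sum_{\tau\in\mathbb P_0^\ast}\kett{\tau}\!\braa{\tau}$ is the projector onto the traceless subspace, so that $\kett{P_{\rm ad}} = \sum_{\tau\in\mathbb P_0^\ast}\kett{\tau^{\otimes 2}}$. To conclude that this is the \emph{only} invariant direction in this block, I would invoke the fact that $\omega_{\rm ad}$ is irreducible (so $\mathrm{End}_\gr(\omega_{\rm ad})$ is one-dimensional by Schur's lemma, making the trivial subrepresentation of $\omega_{\rm ad}\otimes\omega_{\rm ad}^\ast \simeq \omega_{\rm ad}^{\otimes 2}$ appear with multiplicity one), or equivalently that the Clifford group is a unitary $2$-design (as already used in Lemma \ref{lem:multi_reps}). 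Normalizing $\kett{P_{\rm ad}}$ to unit length gives the orthonormal invariant vector $(2^{2n}-1)^{-1/2}\sum_{\tau\in\mathbb P_0^\ast}\kett{\tau^{\otimes 2}}$.

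Assembling the two orthogonal invariant directions from the trivial and adjoint blocks into the rank-two orthogonal projector yields
\begin{equation}
\avg_{g\in\mathbb C_n}\omega(g)^{\otimes 2} = \kett{\tau_0^{\otimes 2}}\!\braa{\tau_0^{\otimes 2}} + \frac{1}{2^{2n}-1}\sum_{\tau,\tau'\in\mathbb P_0^\ast}\kett{\tau^{\otimes 2}}\!\braa{{\tau'}^{\otimes 2}},
\end{equation}
as claimed. The only non-routine step in this plan is pinning down the multiplicity of the trivial subrepresentation inside $\omega_{\rm ad}^{\otimes 2}$ to exactly one; Lemma \ref{lem:square} supplies existence but not uniqueness, so I would close the gap either through Schur's lemma applied to the real irreducible $\omega_{\rm ad}$ or by appealing directly to the $2$-design property already invoked earlier in the appendix.
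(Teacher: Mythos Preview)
Your proposal is correct and is precisely the detailed working-out of what the paper leaves implicit: the paper presents this lemma as ``a direct consequence'' of \cref{lem:multi_reps} without further argument, and your route via \cref{lem:projector}, the four-block decomposition of $\omega^{\otimes 2}$, and \cref{lem:square} is exactly how one unpacks that consequence. Your handling of the one subtle point---that the trivial subrepresentation in $\omega_{\rm ad}^{\otimes 2}$ has multiplicity exactly one---via Schur's lemma (using that $\omega_{\rm ad}$ is complex-irreducible and self-dual, so the invariants of $\omega_{\rm ad}^{\otimes 2}\simeq\omega_{\rm ad}\otimes\omega_{\rm ad}^\ast$ are $\mathrm{End}_\gr(\omega_{\rm ad})\cong\mathbb{C}$) is the right closure and is already implicit in the multiplicity-free decomposition of \cref{lem:multi_reps}.
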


We have similar statements for the local Clifford group.
\begin{lemma}
The Liouville representation of the local Clifford group $\mathbb{C}_1^{\times n}$ on $n$ qubits decomposes into $2^n$ mutually inequivalent irreducible representations
\begin{equation}
\omega(g)=\bigoplus_{w \in \{0,1\}^n} \sigma_w(g)
\end{equation}
where $\sigma_w(g)$ has support on $\mathrm{Span}\{ \mathbb{P}^*_w\}$ with
\begin{equation}
\mathbb{P}_w = \Big\{\bigotimes_{i=1}^n\tau_{i} \;\;\;\|\;\;\; \tau_{i}= \tau_0 \text{  if     }\; w_i= 0 \text{   and    } \tau_i \in \{\tau_X,\tau_Y, \tau_Z\} \text{  if   }\; w_i= 1\Big\}.
\end{equation}
\end{lemma}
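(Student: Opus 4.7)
The plan is to leverage the product structure of the group together with the single-qubit Clifford decomposition already established in the preceding lemma.

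First, I would observe that the group $\mathbb{C}_1^{\times n}$ is a direct product of $n$ copies of the single-qubit Clifford group, acting on the $n$-qubit Hilbert space as $U_g = U_{g_1} \otimes \cdots \otimes U_{g_n}$. Hence the Liouville representation factorizes across qubits,
\begin{equation}
\omega(g) = \omega(g_1) \otimes \cdots \otimes \omega(g_n),
\end{equation}
viewed as a representation of the product group. Applying the preceding lemma on the multi-qubit Clifford representation in the case $n=1$, each factor decomposes as $\omega(g_i) = \omega_{\rm triv}(g_i) \oplus \omega_{\rm ad}(g_i)$, where $\omega_{\rm triv}$ is the one-dimensional trivial representation on $\mathrm{Span}\{\tau_0\}$ and $\omega_{\rm ad}$ is the three-dimensional irreducible representation on $\mathrm{Span}\{\tau_X, \tau_Y, \tau_Z\}$.

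Second, I would distribute the tensor product over these direct sums. This produces $2^n$ summands naturally labeled by bit strings $w \in \{0,1\}^n$, where
\begin{equation}
\sigma_w(g) = \bigotimes_{i=1}^n \omega_{w_i}(g_i),
\qquad \omega_{0} \coloneqq \omega_{\rm triv},\ \omega_{1} \coloneqq \omega_{\rm ad}.
\end{equation}
Reading off the support of each factor immediately shows $\sigma_w$ is supported on $\mathrm{Span}\{\mathbb{P}_w^*\}$ as defined in the statement, with dimension $3^{|w|}$.

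Third, I would establish irreducibility and pairwise inequivalence. Irreducibility of each $\sigma_w$ follows from the standard fact that an outer tensor product of irreducible representations of a direct product of groups is again irreducible (straightforward via Schur's lemma applied factorwise). For inequivalence of $\sigma_w$ and $\sigma_{w'}$ with $w \neq w'$, pick an index $i$ where they disagree; the induced single-qubit representations on that tensor slot are $\omega_{\rm triv}$ and $\omega_{\rm ad}$, which are inequivalent (different dimensions). Any intertwiner between $\sigma_w$ and $\sigma_{w'}$ would, by restricting to the single-qubit factor $i$ (freezing all others at the identity element of the group), yield an intertwiner between two inequivalent irreducibles, hence must vanish. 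Therefore $\sigma_w \not\simeq \sigma_{w'}$.

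The only even mildly subtle step is the inequivalence argument, since representations with the same Hamming weight have equal dimension and cannot be distinguished on that basis alone. I expect this factorwise Schur argument to be the main (minor) obstacle; once it is in place, matching the remaining summands to the target decomposition and reading off the explicit support is routine.
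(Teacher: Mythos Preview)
Your proposal is correct and follows essentially the same approach as the paper, which proves the lemma in a single sentence: ``This is a direct result of the previous lemmas, applied to each of the $n$ qubits individually.'' You have simply spelled out what that sentence means in detail, including the irreducibility and inequivalence arguments that the paper leaves implicit.
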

This is a direct result of the previous lemmas, applied to each of the $n$ qubits individually.
We also have the following statement.

\begin{lemma}\label{second_moment_local_clifford}
Let $\omega$ be the Liouville representation of the local Clifford group on $n$ qubits. Then we have
\begin{equation}
\avg_{g\in \mathbb{C}_1^{\times n}} \omega(g)^{\otimes2}=\sum_{w\in \{0,1\}^n} \frac{1}{3^{|w|}}\bigg(\sum_{\tau\in \mathbb{P}^*_w}\kett{\tau\tn{2}}\bigg)\bigg(\sum_{\tau'\in \mathbb{P}^*_w}\braa{{\tau'}\tn{2}}\bigg),
\end{equation}
where again
\begin{equation}
\mathbb{P}^*_w = \Big\{\bigotimes_{i=1}^n\tau_{i} \;\;\;\|\;\;\; \tau_{i}= \tau_0 \text{  if  }   \; w_i= 0 \text{   and    } \tau_i \in \{\tau_X,\tau_Y, \tau_Z\} \text{  if   }\; w_i= 1\Big\}.
\end{equation}
\end{lemma}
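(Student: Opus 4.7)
The plan is to reduce the second-moment computation for the $n$-qubit local Clifford group to the single-qubit case by exploiting the product structure of $\mathbb{C}_1^{\times n}$, and then expand the resulting tensor product according to the bit-string $w\in\{0,1\}^n$ indexing which single-qubit factors land in the traceless sub-representation.

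First, I would use that the Liouville representation of $\mathbb{C}_1^{\times n}$ factorizes, namely $\omega(g)=\omega(g_1)\otimes\cdots\otimes\omega(g_n)$ for $g=(g_1,\ldots,g_n)\in \mathbb{C}_1^{\times n}$, and the group is a direct product so the uniform measure is a product measure. After the standard reordering of tensor factors that sends $(A_1\otimes\cdots\otimes A_n)^{\otimes 2}$ to $\bigotimes_{i=1}^{n} A_i^{\otimes 2}$, this gives
\begin{equation}
\avg_{g\in \mathbb{C}_1^{\times n}} \omega(g)^{\otimes2}= \bigotimes_{i=1}^{n}\ \avg_{g_i\in \mathbb{C}_1}\ \omega(g_i)^{\otimes 2}.
\end{equation}

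Next, I invoke \cref{lem:multi_proj} specialized to $n=1$, in which case $2^{2n}-1=3$ and $\mathbb{P}^*_0=\{\tau_X,\tau_Y,\tau_Z\}$, obtaining
\begin{equation}
\avg_{g_i\in \mathbb{C}_1}\omega(g_i)^{\otimes 2}=\kett{\tau_0^{\otimes 2}}\!\braa{\tau_0^{\otimes 2}}+\tfrac{1}{3}\sum_{\tau,\tau'\in\{\tau_X,\tau_Y,\tau_Z\}}\kett{\tau^{\otimes 2}}\!\braa{{\tau'}^{\otimes 2}}.
\end{equation}
Plugging this into the tensor product over qubits and distributing, each term in the expansion is indexed by a choice, for each $i$, of either the identity summand (contributing $w_i=0$) or the traceless summand with prefactor $1/3$ (contributing $w_i=1$). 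Collecting by $w\in\{0,1\}^n$ yields the overall prefactor $1/3^{|w|}$ and rewrites the per-qubit sums as a joint sum over $\tau\in\mathbb{P}^*_w$ (and likewise $\tau'$), giving exactly
\begin{equation}
\sum_{w\in\{0,1\}^n}\frac{1}{3^{|w|}}\bigg(\sum_{\tau\in \mathbb{P}^*_w}\kett{\tau^{\otimes 2}}\bigg)\bigg(\sum_{\tau'\in \mathbb{P}^*_w}\braa{{\tau'}^{\otimes 2}}\bigg),
\end{equation}
which is the claimed identity.

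The only subtlety, and the step I would be most careful about, is the tensor-factor reordering: the target expression treats $\omega(g)^{\otimes 2}$ as acting on a $2^{2n}\otimes 2^{2n}$ space whose natural basis is given by $\kett{\tau^{\otimes 2}}$ with $\tau$ an $n$-qubit Pauli, whereas the direct product structure yields a space whose natural order is ``first copy of qubit $1$, second copy of qubit $1$, \ldots''. A bookkeeping statement identifying these two orderings under the canonical shuffle isomorphism needs to be made once, after which the computation is routine. With this identification in place, the proof is a straightforward combinatorial expansion of a tensor product of two-term operators, and no further representation-theoretic input beyond \cref{lem:multi_proj} is required.
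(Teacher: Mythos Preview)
Your proposal is correct and matches the paper's approach. The paper does not spell out a proof of \cref{second_moment_local_clifford}; it presents the local Clifford statements as following from the multi-qubit Clifford lemmas ``applied to each of the $n$ qubits individually,'' which is precisely the per-qubit factorization and expansion you carry out, including the specialization of \cref{lem:multi_proj} to $n=1$.
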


\section{Simultaneously estimating many observables}\label{ssec:est_obs}

Key to the results in this work is the following general statistics observation, which also powers state shadow estimation \cite{Shadows}. Let $p$ be a probability distribution over some (finite) set $\mc{X}$, and let $\mathcal A$ be a set of observables, i.e., functions $f:\mc{X}\to \mathbb{R}$. Suppose we wish to estimate the vector of means $[\mathbb{E}_{\mc{X}} (f_A)]_{A\in \mathcal A}$ to some overall error $\epsilon$. A surprising fact from mathematical statistics is that this is possible by drawing only $O(\log(|\mathcal A|)\mathbb{V}_{\rm max}(f_A)/\epsilon^2)$ samples from $p$ where $\mathbb{V}_{\rm max}(f_A) = \max_{A\in \mathcal A}\mathbb{V}_X(f_A)$ is the maximal variance of the functions in $\mathcal A$. Doing this (without making strong assumptions on the observables $f_A$) requires the construction of so-called sub-Gaussian estimators (see refs.~\cite{SubGaussian,MedianOfMeans,lugosi2019mean}
for reviews)
for the means $\mathbb{E}_{\mc{X}} (f_A)$.
An example of such an estimator that is straightforward to compute is the \emph{median-of-means estimator}, which
has been used in state shadow estimation by ref.~\cite{Shadows}. Following their notation, it involves gathering $S = NK$ samples $\{x_i\}_{i=1}^{NK}$ from the distribution $p$,
where $N, K$ are integers. For an observable $f_A$, one can then construct the estimator
\begin{equation}
\hat{f}_A = \!\text{Median}\bigg\{\!\frac{1}{N}\!\!\!\sum_{i=I}^{I+N-1}\!\!\!\!f_A(x_i) \;\|\; I\in \{1,\! N+1,\! 2N+1,\ldots ,\! (K-1)N + 1\}\!\bigg\}
\end{equation}
for the average $\mathbb{E}(f_A)$, splitting the data into $K$ equally sized parts of size $N$.
It can be shown that if we set
\begin{align}\label{eq:sample_comp}
K &= \big\lceil 2 \log(2|\mathcal A|/\delta)\big\rceil ,\\
N &= \bigg\lceil\frac{34}{\epsilon^2}\mathbb{V}_{\rm max}(f_A)\bigg\rceil,
\end{align}
then we have
\begin{equation}
\max_{A\in \mathcal A}|\hat{f_A}-\mathbb{E}(f_A)|\leq \epsilon,
\end{equation}
with probability $1-\delta$. We can substitute $\epsilon$ to obtain the direct relation
\begin{equation}\label{eq:max_var}
\max_{A\in \mathcal A}|\hat{f_A}-\mathbb{E}(f_A)|\leq  \sqrt{\frac{68\mathbb{V}_{\rm max}(f_A) \log(2|\mathcal A|/\delta)}{NK}},
\end{equation}
in terms of the total number of samples $NK$.
Hence, providing bounds on the maximal variance of a set of observables provides a rigorous guarantee on their estimation at any degree of confidence. Note however that the construction of the estimator is dependent on the level of confidence $\delta$ (through the setting of $K$). This is unfortunate, but it turns out to be impossible \cite{lugosi2019mean} to drop this requirement for sub-Gaussian estimators.

\section{Guarantees for the UIRS protocol}\label{appsec:uirs}
In this section, we give the derivations of the general performance guarantees for the UIRS protocol summarized in the main text.
\subsection{Fitting model}
As we have argued in the main text, a useful class of sequence correlation functions is given by
\begin{equation}\label{eq:corr_func_app}
f_A(x,\vec{g}) = \alpha\braa{E_x}\sigma(g_m)\prod_{i=1}^{m-1} A \sigma(g_i))\kett{\rho}\,,
\end{equation}
where $A$ is some fixed \emph{probe super-operator}, $\alpha$ is a suitable normalization and $\sigma(g),\phi(g)$ are representations of the gate-set group $\gr$. We begin by deriving the main result (eq. (5) in the main text) on the mean $k_A(m)$ in the UIRS protocol.

\begin{theorem}\label{thm:mean_signal_form}
{Let $k_A(m)$ be the outcome of an UIRS experiment with a correlation function as in \cref{eq:corr_func_app}, over a gate-set $\gr$. Then we have, under the assumption of gate-independent noise,}
\begin{equation}
k_A(m) = \tr\big( \Theta(\{E_x\}_x, \rho)[\Phi(A,\Lambda)]^{m-1}\big),
\end{equation}
where $\Theta ,\Phi$ are matrices induced by the representation structure of $\omega(g)$. $\Phi(A,\Lambda)$ depends only on the between-gates noise channel $\Lambda:= \Lambda_R\Lambda_L$ and the probe super-operator $A$, while $\Theta$ captures state preparation and measurement (SPAM) dependence. In particular, if $\omega$ contains $n_\sigma$ copies of the representation $\sigma$ then we have
\begin{equation}
\Phi_{i,j} = \frac{1}{|P_j|}\tr(P_{i}AP_j \Lambda ),
\end{equation}
where $P_i$ is the projector onto the $i$th copy of the representation $\sigma$ inside $\omega$.
\end{theorem}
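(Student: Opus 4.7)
The plan is to unfold $k_A(m) = \avg_{\vec g, x}[f_A(x,\vec g)]$, where $\vec g$ is uniform on $\gr^{\times m}$ and $x$ is Born-rule distributed for the noisy sequence channel $\mathcal{E}(\vec g) = \phi(g_m)\cdots\phi(g_1)$. First I would substitute the gate-independent decomposition $\phi(g) = \Lambda_L\omega(g)\Lambda_R$ into the probability $\braa{\tilde E_x}\phi(g_m)\cdots\phi(g_1)\kett{\tilde\rho}$, so that each pair of consecutive factors $\Lambda_R\Lambda_L$ collapses into a single ``between-gates'' $\Lambda$, leaving only $\Lambda_L$ next to $\tilde E_x$ and $\Lambda_R$ next to $\tilde \rho$. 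The correlation function $f_A$ provides a second, ``ideal'' scalar in which the $\sigma(g_i)$ are interleaved with copies of the probe super-operator $A$. I would then tensorise the product of these two scalars into a single scalar on the doubled space $V_\omega \otimes V_\sigma$, so that each gate $g_i$ enters through a single combined factor $\omega(g_i)\otimes \sigma(g_i)$ sandwiched between copies of $\Lambda\otimes A$.

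Since the $g_i$ are i.i.d.\ uniform on $\gr$, the expectation factorises into $m$ identical site averages $M := \avg_{g\in\gr}\,\omega(g) \otimes \sigma(g)$. The key step is to identify $M$ via Schur's lemma as the orthogonal projector onto the $\gr$-invariant subspace of $V_\omega \otimes V_\sigma$. Because Liouville representations are self-dual, this invariant subspace has dimension equal to the multiplicity $n_\sigma$ of $\sigma$ in $\omega$, and an explicit basis is given by the (suitably normalised) vectorised intertwiners $|\Pi_i)$ with $\Pi_i: V_\sigma \hookrightarrow V_\omega$ embedding the $i$-th copy of $\sigma$. In super-operator form this yields
\begin{equation}
  M \;=\; \frac{1}{d_\sigma}\sum_{i=1}^{n_\sigma} |P_i)(P_i|\,, \qquad P_i := \Pi_i \Pi_i^\dagger\,,
\end{equation}
after using self-duality to identify the two tensor factors through the Hilbert--Schmidt inner product on super-operator space.

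Next I would substitute the $m$ copies of this resolution back into the sandwich. The sums over internal indices $i_1,\dots,i_{m-1}\in\{1,\dots,n_\sigma\}$ decouple site by site: each internal link contributes the scalar $\tfrac{1}{d_\sigma}\tr(P_{i_{k+1}} A P_{i_k}\Lambda)$, which is exactly the advertised matrix entry $\Phi_{i_{k+1},i_k}$, with the normalisation $1/d_\sigma = 1/|P_j|$ falling out of the Schur projector. The two boundary contributions, which still carry $\Lambda_L, \Lambda_R, \tilde\rho, \rho$ and (after summing over $x$) the pairing of $\{\tilde E_x\}$ with $\{E_x\}$, assemble into an $n_\sigma \times n_\sigma$ SPAM-dependent matrix $\Theta$. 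Recognising the resulting double sum as $\tr(\Theta\, \Phi^{m-1})$ closes the argument.

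The hard part will be bookkeeping rather than conceptual work: carefully tracking the orientation of the tensor factors, the direction of contraction along the sequence (so that $P_i$ really ends up next to $A$ and $P_j$ next to $\Lambda$ as stated, rather than transposed), and the correct normalisation when identifying $V_\sigma$ with $V_\sigma^*$ via self-duality. The core averaging is a single, clean application of Schur's lemma, and once the doubled-space rewriting is set up consistently, the matrix-exponential structure emerges by inspection.
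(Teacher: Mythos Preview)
Your proposal is correct and follows essentially the same route as the paper: tensorise the product of $f_A$ with the Born probability into a single scalar on $V_\sigma\otimes V_\omega$, replace each i.i.d.\ gate average by the Schur projector onto the invariant subspace spanned by the vectorised intertwiners $\kett{v(P_i)}$, and then read off $\Phi_{i,j}=\tfrac{1}{d_\sigma}\tr(P_iAP_j\Lambda)$ from the inner links via the identity $\braa{v(P)}A\otimes B\kett{v(P')}=\tr(A^T P B P')$, with the boundary links defining $\Theta$. Your remarks about the bookkeeping (orientation of contractions, self-duality, normalisation $1/d_\sigma=1/|P_j|$) are exactly the places where the paper's proof also spends its effort.
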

The formulation of the result given in the main text directly follows from \cref{thm:mean_signal_form} by additionally realizing that $f_A(x,\mathbf g)$ regarded as a random variable pushing forward $p(x, \mathbf g)$ is bounded and, thus, the corresponding mean estimator $\hat k_{f_A}(m)$ is unbiased and consistent.
Correspondingly the median-of-mean estimator converges to the expected value of the mean.

\begin{proof}
Recall that $\phi(g)=\Lambda_L \omega(g) \Lambda_R$. Hence
\begin{align}
k_A(m)&=\avg_{\vec{g}\in\gr^{\times m} } \sum_{x\in \{0,1\}^n}\alpha\braa{E_x}\sigma(g_m)\prod_{i=1}^{m-1} A \sigma(g_i))\kett{\rho} \braa{\tilde{E}_x} \prod_{i=1}^m \Lambda_L \omega(g_i) \Lambda_R \kett{\tilde{\rho}} \\
&=\avg_{\vec{g}\in\gr^{\times m} } \sum_{x\in \{0,1\}^n} \alpha \tr
\bigg(
\big(
\kett{\rho\otimes \Lambda_R(\tilde\rho)}\braa{E_x\otimes \Lambda_L^*(\tilde{E}_x)}
\big)
\sigma(g_m)\otimes \omega(g_m)\prod_{i=1}^{m-1}
\big(
(A  \otimes \Lambda)(\sigma(g_i)\otimes \omega(g_i))
\big)
\bigg)\notag\\
&=\sum_{x\in \{0,1\}^n} \alpha \tr
\bigg[
\big(
\kett{\rho \otimes \Lambda_R(\tilde\rho)}\braa{E_x \otimes \Lambda_L^*(\tilde{E}_x)}
\big)
\left( \avg_{g\in\gr }(\sigma(g)\otimes \omega(g)) (A  \otimes \Lambda) \avg_{g\in\gr }(\sigma(g)\otimes \omega(g))
\right)^{m-1}
 \bigg)\bigg],\notag
\end{align}
where we have used that the representation average is a projector (and thus equal to its square).
Now note that we can write $\omega(g) = \sigma^{n_\sigma}(g) \oplus \omega'(g)$ where $\omega'$ is a representation that contains no copies of $\sigma$. From this and \cref{lem:projector,lem:square} given above we can see that
\begin{align}
k_A(m)&=\sum_{x\in \{0,1\}^n}\alpha\tr
\bigg(
\big(
\kett{\rho \otimes\Lambda_R(\tilde\rho)}\braa{E_x \otimes \Lambda_L^*(\tilde{E}_x)}
\big)
\left (
\sum_{i,i' = 1}^{n_\sigma} \frac{1}{d_\sigma}\kett{v(P_i)}\braa{v(P_{i'})} (A  \otimes \Lambda)\sum_{j,j' = 1}^{n_\sigma} \kett{v(P_j)}\braa{v(P_{j'})}
\right)^{m-1}
\bigg),
\end{align}
where $P_i$ is the projector on the $i$'th copy of $\sigma$ in $\omega$. Using the fact that
\begin{equation}
\braa{v(P)} A\otimes B \kett{v(P')} = \tr(A^T P B P'),
\end{equation}
and defining the matrices $\Theta, \Phi$  appropriately, we obtain the theorem statement.
\end{proof}

\subsection{Variance bound with the dynamic shadow norm}
In order to bound the sampling complexity of the estimation of UIRS means $k_A(m)$ it is sufficient, through the use of median-of-means estimators, to obtain a bound on the variance of associated probability distribution. In the main text we did this by introducing the dynamic shadow norm.
The dynamic shadow norm is formally defined as
\begin{align}\label{eq:shadow_norm}
\norm{A}_{{\rm dyn},m} &= \alpha^2\max_{\Lambda_R, \Lambda_L}\bigg|\sum_{x\in \mc{X}}\!\braa{E_x\tn{2} \!\otimes\!\Lambda_L^*(\tilde E_x)}\big(P_\sigma^{(2)}(A\tn{2}\!\otimes \!I) P_\sigma^{(2)} \big)^{m\!-\!1}\kett{\rho\tn{2}\!\otimes\! \Lambda_R(\tilde\rho)}\bigg|,
\end{align}
with
\begin{equation}
P_\sigma^{(2)} = \avg_{g\in \gr} \sigma (g)^{\otimes 2}\otimes \omega(g).
\end{equation}

We prove the associated theorem:

\begin{theorem}[Restatement of Theorem 1 in the main text]
  \label{thm:uirs_variance_bound}
Consider an UIRS protocol  (at sequence length $m$) with gate-set $\gr$. Also consider a correlation function $f_A$ with probe super-operator $A$. The (single-shot) variance of the associated mean estimator $\hat{k}_{f_A}(m)$ is bounded as
\begin{equation}
\mathbb{V}_A(m) \leq \norm{A}_{{\rm dyn},m}.
\end{equation}

\end{theorem}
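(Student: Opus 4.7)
The plan is to prove the bound by combining the elementary inequality $\mathbb{V}_A(m) \leq \mathbb{E}[f_A(x,\vec g)^2]$ with a tensor-square trick that rewrites the second moment in a form matching the dynamic shadow norm. First I would observe that
\[
  \mathbb{E}[f_A^2] = \frac{1}{|\gr|^m}\sum_{\vec g \in \gr^{\times m}}\sum_{x\in\mc X} q(x,\vec g)\, f_A(x,\vec g)^2,
\]
where $q(x,\vec g) = \braa{\tilde E_x}\phi(g_m)\cdots\phi(g_1)\kett{\tilde\rho}$ is the Born probability of outcome $x$ conditioned on the sequence $\vec g$. Then, since $f_A$ is a scalar, its square admits the single inner-product representation on the tensor-square of the representation space,
\[
  f_A(x,\vec g)^2 = \alpha^2\, \braa{E_x^{\otimes 2}}\sigma(g_m)^{\otimes 2}\prod_{i=1}^{m-1}\bigl(A^{\otimes 2}\sigma(g_i)^{\otimes 2}\bigr)\kett{\rho^{\otimes 2}}.
\]

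Combining these two factors lifts the computation onto a three-copy Hilbert space: the $i$-th gate contributes $\sigma(g_i)^{\otimes 2}\otimes \phi(g_i)$ on the three slots and is flanked on either side by the interleavers $A^{\otimes 2}\otimes I$. Because the gates are drawn i.i.d.~uniformly in the UIRS protocol, the average over $\vec g$ factors slot by slot, so each of the $m$ slots contributes the operator $\mathbb{E}_{g\in\gr}\sigma(g)^{\otimes 2}\otimes \phi(g)$. The third step is to factor $\phi(g) = \Lambda_L \omega(g)\Lambda_R$ and push the outermost $\Lambda_L$ and $\Lambda_R$ onto the boundary vectors, turning them into $\braa{\Lambda_L^\ast(\tilde E_x)}$ and $\kett{\Lambda_R(\tilde\rho)}$ on the third tensor factor. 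The remaining internal group averages collapse to $P_\sigma^{(2)} = \mathbb{E}_g\sigma(g)^{\otimes 2}\otimes \omega(g)$, while the products of adjacent boundary channels combine into $\Lambda = \Lambda_R\Lambda_L$ acting on the third factor between successive $P_\sigma^{(2)}$'s. Bringing in the outer absolute value via the triangle inequality and the supremum over admissible $\Lambda_L,\Lambda_R$ then reproduces the defining expression of $\|A\|_{\mathrm{dyn},m}$ from Supplementary Equation~(25).

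The main obstacle is the handling of the between-gate noise $\Lambda$: after the manipulations above, the middle interleaver on the third slot is $\Lambda$ rather than $I$, as it appears in the dynamic shadow norm. To bridge this gap I would use the $g$-average structure on either side of $\Lambda$ to twirl $(I\otimes I\otimes \Lambda)$, exploiting that the third-slot component of $P_\sigma^{(2)}$ is the regular action $\omega(g)$, and absorb the resulting effective map into the freely maximised boundary channels $\Lambda_L,\Lambda_R$ in the shadow norm definition. Complete positivity and trace preservation of $\Lambda$ ensure that this reduction does not incur any norm blow-up. A subsidiary bookkeeping issue is to verify carefully the Liouville conventions for the adjoint $\Lambda_L^\ast$ and the tensor ordering on the three copies, so that the final expression matches the shadow norm verbatim; with that in place, the upper bound $\mathbb{V}_A(m) \leq \|A\|_{\mathrm{dyn},m}$ follows directly by taking the supremum of the resulting scalar over all admissible $\Lambda_L,\Lambda_R$.
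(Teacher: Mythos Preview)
Your first three paragraphs follow the paper's proof essentially verbatim: bound the variance by the raw second moment, rewrite $f_A^2$ via the tensor square, combine with the Born probability on a three-copy space, factor $\phi(g)=\Lambda_L\omega(g)\Lambda_R$, and push the outermost $\Lambda_L,\Lambda_R$ onto the boundary vectors. At that point the paper simply says ``maximizing over $\Lambda_R,\Lambda_L$ and recalling the definition of the shadow norm completes the argument.'' There is no further step.

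The ``main obstacle'' you identify is not real: the appearance of $I$ rather than $\Lambda$ on the third tensor slot in the displayed definition of $\|A\|_{\mathrm{dyn},m}$ (Supplementary Eq.~(25)) is a typo. In the proof itself the authors restate the definition with $A^{\otimes 2}\otimes\Lambda$ in the interleaver, and every subsequent use of the shadow norm in the paper (the local Clifford bound in Supplementary Note~5 and the multi-qubit Clifford bound in Supplementary Note~6) starts from the expression with $\Lambda$ present. Since $\Lambda=\Lambda_R\Lambda_L$ is determined by the variables you are already maximizing over, there is nothing to remove.

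Consequently your proposed fix --- twirling $(I\otimes I\otimes\Lambda)$ through $P_\sigma^{(2)}$ and absorbing the result into the boundary channels --- is unnecessary, and as stated would not work anyway. The third factor of $P_\sigma^{(2)}$ carries the full Liouville representation $\omega$, which in general contains several irreducible components; conjugating by $P_\sigma^{(2)}$ does not reduce $\Lambda$ to a scalar, and a single pair of boundary maps cannot absorb the $m-1$ independent internal copies of $\Lambda$. Drop that paragraph entirely and your argument is complete and identical to the paper's.
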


\begin{proof}
The variance of a discrete random variable $X$ assuming values $x\in {\mc{X}}$ with probability $p(x)$ is given by
\begin{equation}
\mathbb{V}(X)= \sum_{x\in \mc{X}} (x-\mu)^2p(x),
\end{equation}
with $\mu$ the expected value of $X$. We can therefore obtain an upper bound on the variance by simply considering
\begin{equation}
\mathbb{V}(X)\leq  \sum_{x\in \mc{X}} (x)^2p(x).
\end{equation}
Thus, we have that
\begin{align}
\mathbb{V}_A(m) = \mathbb{V}(X_A(m))\leq \alpha^2\avg_{\vec{g}\in \gr^{\times m}}\sum_{x\in \mc{X}} \braa{E_x}\sigma(g_m)\prod_{i=1}^{m-1} A \sigma(g_i))\kett{\rho}^2 \braa{\tilde E_x} \prod_{i=1}^m \phi(g_i) \kett{\tilde \rho}.
\end{align}
Using the identities $\tr(A\otimes B)=\tr(A)\tr(B)$ and $AB\otimes AB=A^{\otimes 2} B^{\otimes 2}$ we obtain
\begin{align}
\mathbb{V}_A(m) \leq \alpha^2\sum_{x\in \mc{X}} \tr\bigg( \big(\kett{\rho\tn{2}\otimes \Lambda_R(\tilde\rho)}\braa{E_x\tn{2}\otimes\Lambda_L^*(\tilde E_x)}\big)\avg_{g\in \gr} \sigma (g)^{\otimes 2}\otimes \omega(g)\bigg( \avg_{g\in \gr} (A\sigma(g))\tn{2} \otimes (\Lambda\omega(g)) \bigg)^{m-1} \bigg).
\end{align}
Maximizing over $\Lambda_R, \Lambda_L$ and recalling the definition
\begin{align}
\norm{A}_{{\rm dyn},m} &:= \alpha^2\max_{\Lambda_R, \Lambda_L}\bigg|\sum_{x\in \mc{X}}\!\braa{E_x\tn{2} \!\otimes\!\Lambda_L^*(\tilde E_x)}\big(P_\sigma^{(2)}(A\tn{2}\!\otimes \!\Lambda) P_\sigma^{(2)} \big)^{m\!-\!1}\kett{\rho\tn{2}\!\otimes\! \Lambda_R(\tilde\rho)}\bigg|,
\end{align}
of the shadow norm completes the argument.

\end{proof}

\section{Shadow norm bound for local Clifford UIRS}\label{appsec:local_var}

In this section, we consider the UIRS protocol with the local Clifford group $\mathbb{C}_1^{\times n}$. We will model the noisy implementation of any given Clifford by $\Lambda_L \omega(g) \Lambda_R$ and we will denote $\Lambda_R\Lambda_L=:\Lambda$ for brevity. In the main text we stated the following theorem:
\begin{theorem}[Restatement of Theorem 4 in the main text]
\label{thm:local_cliff_var_sm}
Consider the local Clifford UIRS protocol.
Let $\phi(g) = \Lambda_L\omega(g)\Lambda_R$ be a noisy implementation of the local Clifford group on $n$ qubits and $A = P_w A P_w $ be a probe super-operator with $|w|=k$ for $k$ a fixed integer. Also let $\tilde 0_n$ be a noisy implementation of the all-zero state and $\{\tilde x\}_x$ the noisy computational basis POVM). The shadow norm of the random variable $X_A(m)$ is upper bounded independently of the number of qubits $n$ and sequence length $m$. In particular, it holds that
\begin{equation}
\norm{A}_{{\rm dyn},m}\leq 2^k 3^{2k} \big(3^{-k}\tr(AA\ct)\big)^{m-1}.
\end{equation}
\end{theorem}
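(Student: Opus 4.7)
The plan is to exploit the tensor-product structure of the local Clifford group to reduce the dynamic shadow norm (Supplementary Equation 25) to a product of single-qubit objects, then bound these explicitly. First, I would insert the second-moment projector $P_\sigma^{(2)} = \mathbb{E}_g\,\sigma_w(g)^{\otimes 2}\otimes\omega(g)$ and observe that, because both $\sigma_w$ and $\omega$ factor over qubits and the uniform measure on $\mathbb{C}_1^{\times n}$ is a product measure, $P_\sigma^{(2)} = \bigotimes_i M_{w_i}$. Here $M_0 = \kett{\tau_0}\!\braa{\tau_0}$ is rank one, while $M_1$ is a rank-two single-qubit projector obtained by averaging $\sigma_1(g)^{\otimes 2}\otimes\omega(g)$ over $\mathbb{C}_1$. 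Splitting $\omega_1 = \sigma_0\oplus\sigma_1$ and using Lemma~\ref{lem:square}, one finds explicitly
\[M_1 = \tfrac{1}{3}\kett{v(P_1)}\!\braa{v(P_1)}\otimes\kett{\tau_0}\!\braa{\tau_0} + \tfrac{1}{6}\kett{\mathrm{LC}}\!\braa{\mathrm{LC}},\]
with $\kett{v(P_1)} = \sum_{\tau\in\mathbb{P}^*_1}\kett{\tau\otimes\tau}$ and $\kett{\mathrm{LC}}$ the Levi--Civita vector spanning the single copy of the trivial representation in $\sigma_1^{\otimes 3}$.

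Second, I would analyze the transfer operator $B = P_\sigma^{(2)}(A^{\otimes 2}\otimes\Lambda)P_\sigma^{(2)}$ on the $2^{|w|}$-dimensional image of $P_\sigma^{(2)}$. The key identity is that, since $A = P_w A P_w$ is Hermiticity-preserving, its matrix elements in the Pauli basis are real, so
\[\braa{v(P_w)}A^{\otimes 2}\kett{v(P_w)} = \sum_{\tau,\tau'\in\mathbb{P}^*_w}\braa{\tau}A\kett{\tau'}^2 = \tr(A A^\dagger).\]
Combined with the $1/3^{|w|}$ normalization inherited from the $|w|$ tensor copies of the first piece of $M_1$, each application of $A^{\otimes 2}\otimes\Lambda$ between two projectors contributes the factor $3^{-|w|}\tr(AA^\dagger)$. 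Iterating $B$ a total of $m-1$ times then produces the announced exponential $[3^{-|w|}\tr(AA^\dagger)]^{m-1}$. The $\Lambda$ factor in the third slot only enters through inner products against $\kett{\tau_0}$ (which pick out the unital component, bounded by $1$ since $\Lambda$ is a channel), so $\Lambda$ does not inflate the bound.

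Third, the SPAM-dependent prefactor is bounded by maximizing over $\Lambda_L,\Lambda_R$ and summing over $x$. Using $\sum_x E_x = \mathbb I$, unitality of $\Lambda_L^*$, and that $\rho = \ketbra{0}{0}^{\otimes n}$ and $E_x = \ketbra{x}{x}$ factor over qubits, the per-qubit overlaps with $M_{w_i}$ can be evaluated directly; after combining with $\alpha^2 = 2^{2n} 3^{2|w|}$ and the normalizations arising from the rank-two structure, one recovers the overall constant $2^{|w|} 3^{2|w|}$.

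The main obstacle will be cleanly controlling the antisymmetric Levi--Civita piece of $M_1$. Unlike the symmetric $\kett{v(P_1)}\otimes\kett{\tau_0}$ piece, which contracts neatly against $A^{\otimes 2}$ to yield the target trace, the $\kett{\mathrm{LC}}$ piece couples all three representation slots and, upon iteration, produces cross-terms mixing the two basis vectors of each $M_{w_i=1}$. The plan is to diagonalize (or at least block-bound) $B$ in the product basis $\bigotimes_i\{\kett{e^{(i)}_1},\kett{e^{(i)}_2}\}$ and apply Cauchy--Schwarz at each iteration step to show that every cross-term is dominated by $3^{-|w|}\tr(AA^\dagger)$ up to an $O(1)$ constant per qubit, the accumulated product of which yields exactly the $2^{|w|}$ factor in the stated bound.
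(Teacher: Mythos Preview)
Your approach diverges substantially from the paper's, and the divergence exposes a real gap. You attempt to compute the full third-moment projector $P_\sigma^{(2)}=\mathbb{E}_g\,\sigma_w(g)^{\otimes 2}\otimes\omega(g)$ and then iterate $B=P_\sigma^{(2)}(A^{\otimes 2}\otimes\Lambda)P_\sigma^{(2)}$. The paper never does this. Instead, after reducing from $n$ to $k$ qubits (using that $P_w$ acts as $\kett{\tau_0}\!\braa{\tau_0}$ on the $n-k$ trivial qubits, which you also implicitly do), the paper observes that the third tensor factor is literally a Born probability
\[
\braa{\tilde E_{x_{[1,k]}}}\,\omega(g^{(m)}_{[1,k]})\prod_{i=1}^{m-1}\bigl(\Lambda_k\,\omega(g^{(i)}_{[1,k]})\bigr)\kett{\tilde\rho_k}\in[0,1],
\]
and simply bounds it by $1$ \emph{before} averaging over the gates in that slot. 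This collapses the problem to a \emph{second} moment $\mathbb{E}_{g,g'}\,\omega(g)^{\otimes 2}A^{\otimes 2}\omega(g')^{\otimes 2}$, which is evaluated cleanly via Lemma~\ref{second_moment_local_clifford} and yields the stated constants with no Levi--Civita piece ever appearing.

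Your route, by contrast, keeps the third slot coupled throughout, and your claim that ``$\Lambda$ only enters through inner products against $\kett{\tau_0}$'' is incorrect precisely on the Levi--Civita component: $\kett{\mathrm{LC}}$ lives in $\sigma_1^{\otimes 3}$, so in the third slot it carries \emph{non-trivial} Pauli content, and $\Lambda$ acts there nontrivially at every iteration. Thus $\Lambda$ does not decouple as you assert, and the cross-terms you hope to dominate by Cauchy--Schwarz genuinely depend on matrix elements of $\Lambda$ in the traceless sector. It is not clear that a per-qubit $O(1)$ bound with product exactly $2^{|w|}$ survives this; at minimum, the argument as written is incomplete. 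The paper's trick of bounding the probability by $1$ upstream is both the missing idea and the reason the Levi--Civita obstruction never arises.
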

\begin{proof}

We begin from the general expression of the shadow norm, which can be written as
\begin{align}
\norm{A}_{{\rm dyn},m}&= \max_{\Lambda_R,\Lambda_L} 3^{2k}2^{2n} \sum_{x\in \{0,1\}^n}\!\!\!\!\!\braa{x^{\otimes2}\otimes \tilde x}(\mathbb{I}^{\otimes2}\otimes \Lambda_L) \left[ \avg_{g_{[1,n]} } \omega(g_{[1,n]})^{\otimes 3} (A^{\otimes2}\otimes \Lambda) \avg_{g'_{[1,n]} } \omega(g'_{[1,n]})^{\otimes 3} \right]^m \!\!\!\!(\mathbb{I}^{\otimes2}\otimes \Lambda_R)\kett{0_n^{\otimes2}\otimes \tilde{0}_n} \notag\\
&=\max_{\Lambda_R,\Lambda_L}
3^{2k}2^{2n} \avg_{g^{(1)}_{[1,n]},\ldots , g^{(m)}_{[1,n]} } \sum_{x\in \{0,1\}^n}\braa{x^{\otimes 2}}  \omega(g^{(m)}_{[1,n]})^{\otimes 2}\prod_{i =1}^{m-1}\!\big(A^{\otimes2}\omega(g^{(i)}_{[1,n]})^{\otimes 2}\big) \kett{0_n^{\otimes 2}}\notag\\
&\hspace{30em}\times \braa{\tilde{x}} \Lambda_L \omega(g^{(m)}_{[1,n]})\prod_{i=1}^{m-1}\!\big(\Lambda \omega(g^{(i)}_{[1,n]})\big) \Lambda_R\kett{\tilde{0}_n},\notag
\end{align}
where $\{\braa{\tilde x}\}_x$ and $\kett{\tilde{0}_n}$ are the noisy measurement POVM and the noisy initial state, respectively.

We now make use of the fact that $A$ is assumed to be supported on only a single irreducible representation denoted by $k\in \{0,1\}^n$ i.e., $A=P_{k} A P_{k}$ where $P_{k}$ is the projector onto that irreducible representation. Without loss of generality we will here set $k$ to be the all $1$ bit string on the first $k$ bits and $0$ on the remaining $n-k$ bits. The projector $P_w$ acts as $\kett{\tau_0}\braa{\tau_0}$ on these last $n-k$ qubits. Since
\begin{equation}
 \omega(g)\kett{\tau_0} = \kett{\tau_0},
\end{equation}
we see that in $\braa{x^{\otimes 2}}  \omega(g^{(m)}_{[1,n]})^{\otimes 2}\prod_{i =1}^{m-1}\big(A^{\otimes2}\omega(g^{(i)}_{[1,n]})^{\otimes 2}\big) \kett{0_n^{\otimes 2}}$ we can absorb the action of the local Clifford group on the last $n-k$ qubits. Hence, the local Cliffords $g^{(i)}_{[k+1,n]}$ only act by a single conjugation, i.e.,
\begin{align}
\norm{A}_{{\rm dyn},m}&= \max_{\Lambda_R,\Lambda_L}3^{2k}2^{2n} \avg_{g^{(1)}_{[1,n]},\ldots , g^{(m)}_{[1,n]} } \sum_{x\in \{0,1\}^n}
\braa{x^{\otimes 2}}  \omega(g^{(m)}_{[1,k]})^{\otimes 2}\prod_{i =1}^{m-1}\big(A^{\otimes2})\omega(g^{(i)}_{[1,k]})^{\otimes 2}\big) \kett{0_n^{\otimes 2}}\notag\\
&\hspace{10em}\times \braa{\Lambda_L^*(\tilde{x})}  \omega(g^{(m)}_{[1,k]})\otimes\omega(g^{(m)}_{[k+1,n]})\prod_{i=1}^{m-1}\big(\Lambda \omega(g^{(i)}_{[1,k]}) \omega(g^{(i)}_{[k+1,n]})\big) \kett{\Lambda_R(\tilde{0}_n)}\notag\\
&\hspace{12em} \times \big[\braakett{x_{[k+1,n]}}{{\tau_0}_{[k+1,n]}}\braakett{{\tau_0}_{[k+1,n]}}{{0_n}_{[k+1,n]}}\big]^2.
\end{align}
 Now we use the fact that
\begin{equation}
\braakett{x_{[k+1,n]}}{{\tau_0}_{[k+1,n]}} =2^{(k-n)/2}, \ \;\;\;\;\; x_{[k+1,n]}\in \{0,1\}^{n-k},
\end{equation}
and
\begin{equation}
\avg_{g_{[k+1,n]}}\omega(g_{[k+1,n]}) = \kett{{\tau_0}_{[k+1,n]}}\braa{{\tau_0}_{[k+1,n]}},
\end{equation}
we end up with
\begin{align}
\norm{A}_{{\rm dyn},m}&= \max_{\Lambda_R,\Lambda_L}3^{2k}2^{2k}  \avg_{g^{(1)}_{[1,k]},\ldots , g^{(m)}_{[1,k]} } \sum_{x_{[1,k]}\in \{0,1\}^k}
\braa{x^{\otimes 2}_{[1,k]}}  \omega(g^{(m)}_{[1,k]})^{\otimes 2}\prod_{i =1}^{m-1}\big(A^{\otimes2}\omega(g^{(i)}_{[1,k]})^{\otimes 2}\big) \kett{0_k^{\otimes 2}}\\
&\hspace{10em}\times\!\!\!\!\! \sum_{x_{[k+1,n]}\in \{0,1\}^{n-k}}\!\!\!\!\!\!\!2^{k-n}\braa{\tr_{k+1,n}(\Lambda_L^*(\tilde{x}))}  \omega(g^{(m)}_{[1,k]})\prod_{i=1}^{m-1}\big(\Lambda \omega(g^{(i)}_{[1,k]})\big) \kett{\tr_{k+1,n}(\Lambda_R(\tilde{0}_n))}.
\notag
\end{align}

Obviously $ \kett{\tilde{\rho}_k}:= \kett{\tr_{k+1,n}(\Lambda_R(\tilde{0}_n))}$ is a $k$-qubit state and, moreover,
we have
\begin{align}
\sum_{x_{[1,k]}\in\{0,1\}^k} \tilde{E}_{x_{[1,k]}} &:=\sum_{x_{[1,k]}\in\{0,1\}^k} \bigg(2^{k-n}\sum_{x_{[k+1,n]}\in\{0,1\}^{n-k} }\tr_{k+1,n}(\Lambda_L^*(\tilde{x})) \bigg)\\
& =  2^{k-n} \sum_{x\in \{0,1\}^n} \tr_{k+1,n}(\Lambda_L^*(\tilde x))\nonumber \\
&= 2^{k-n}\tr_{k+1,n}(\Lambda_L^*(\mathbb{I}))\nonumber \\
&= 2^{k-n}\tr_{k+1,n}(\mathbb{I})  = \mathbb{I}_{[1,k]}, \nonumber
\end{align}
which makes $\{\tilde{E}_{x_{[1,k]}}\}_{{x_{[1,k]}}}$ a $k$-qubit POVM. Hence, we have
\begin{align}
\norm{A}_{{\rm dyn},m}&= \max_{\Lambda_R,\Lambda_L}3^{2k}2^{2k}  \avg_{g^{(1)}_{[1,k]},\ldots , g^{(m)}_{[1,k]} } \sum_{x_{[1,k]}\in \{0,1\}^k}
\braa{x^{\otimes 2}_{[1,k]}}  \omega(g^{(m)}_{[1,k]})^{\otimes 2}\prod_{i =1}^{m-1}\big(A^{\otimes2}\omega(g^{(i)}_{[1,k]})^{\otimes 2}\big) \kett{0_k^{\otimes 2}}\notag\\
&\hspace{10em}\times \braa{\tilde{E}_{x_{[1,k]}}}  \omega(g^{(m)}_{[1,k]})\prod_{i=1}^{m-1}\big(\Lambda_k \omega(g^{(i)}_{[1,k]})\big) \kett{\tilde{\rho}_k} ,
\end{align}
where $\Lambda_k$ is the $k$-qubit marginal of $\Lambda$.
At this point, we see that the shadow norm is bounded independently of the number of qubits $n$ and only depends on the dimension of the irreducible representation which we assume $A$ to have support on (it is a function of $k$ only). As we can see, we are left with a third moment calculation over $\mathbb{C}_1^{\times k}$. However, for the sake of obtaining an upper bound we simply note that
\begin{equation}
\braa{\tilde{E}_{x_{[1,k]}}}  \omega(g^{(m)}_{[1,k]})\prod_{i=1}^{m-1}\big(\Lambda_k \omega(g^{(i)}_{[1,k]})\big) \kett{\tilde{\rho}_k} \in [0,1],
\end{equation}
since $\Lambda_k$ is a quantum channel.
We, therefore, can (using the invariance of the Haar measure) simplify the bound to
\begin{equation}
\norm{A}_{{\rm dyn},m}\leq 3^{2k}2^{3k} \avg_{g_{[1,k]}}\avg_{g'_{[1,k]}} \braa{0_{[1,k]}^{\otimes2}} \left[\omega(g_{[1,k]})^{\otimes 2}A^{\otimes2} \omega(g'_{[1,k]})^{\otimes 2}\right]^{m-1}\kett{0_{[1,k]}^{\otimes2}}.
\end{equation}
Again using the fact that $A$ is taken to only have overlap with a single irreducible
representation, \cref{second_moment_local_clifford} and the fact that $\braakett{0}{\tau_i} =1/\sqrt{2} $ if and only if $\tau_i=\tau_0$ or $\tau_Z$ (and zero otherwise) we have
\begin{align}
\norm{A}_{{\rm dyn},m}&\leq 3^{2k}2^{3k} \left[\frac{\tr(AA^{\dagger})}{3^{k}} \right]^{m-1} \braakett{0^{\otimes 2k}}{\tau_Z^{\otimes2k}}^2 \\
&=3^{2k}2^k \left[\frac{\tr(AA^{\dagger})}{3^{k}} \right]^{m-1},\nonumber
\end{align}
which is what we set out to prove.
\end{proof}

\section{Shadow norm bound for 
multi-qubit Clifford UIRS}\label{appsec:multi_var}

In this section, we will go into the details of the shadow norm bound calculations for the multi-qubit Clifford group UIRS protocol. Concretely, we prove the following theorem.

\begin{theorem}[Restatement of Theorem~3 in the main text]\label{thm:cliff_var_sm}
Consider the $n$-qubit Clifford UIRS protocol and let $A = P_{\rm ad} AP_{\rm ad} $ be a probe super-operator restricted to the traceless subspace. Also let $\tilde 0_n$ be a noisy implementation of the all-zero state and $\{\tilde x\}_{x\in \{0,1\}^n}$ the noisy computational basis POVM). The associated shadow norm is upper bounded as
\begin{align}
  \norm{A}_{{\rm dyn},m}&\leq  11\,u(A)\bigg( r(A)^{m-2} + \big[2(m-2)^2r(A)^{m-3} \big]   \max\big\{ 11u(A), (11\, u(A))^2\big\}  \! \bigg),
\end{align}
with
\begin{equation}
r(A) = u(A)(1+ 16 \, 2^{-n/3}),
\end{equation}
and where $u(A) = \tr(AA\ct)/(2^{2n}-1)$ is the unitarity of $A$.
\end{theorem}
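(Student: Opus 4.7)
The plan is to reduce the shadow norm to a finite-dimensional matrix computation, exploiting the fact that the multi-qubit Clifford group forms a unitary $3$-design, and then to control the resulting matrix power through a combination of eigenvalue analysis and non-normality estimates.

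First, I would substitute the defining expression for $P_\sigma^{(2)} = \avg_{g \in \Cliff_n}\, \sigma_{\rm ad}(g)^{\otimes 2}\otimes \omega(g)$ into the definition of the dynamic shadow norm. Since $\Cliff_n$ is a unitary $3$-design, this average coincides with the Haar average over $\U(2^n)$, whose image is characterized by a Schur-Weyl-type duality: the commutant of $\omega^{\otimes 3}$ is spanned by a finite basis of diagrams whose cardinality does not grow with $n$. Restricting the two ``$\sigma_{\rm ad}$'' factors to the traceless subspace, which is legitimate since $A = P_{\rm ad} A P_{\rm ad}$, further cuts down this invariant subspace. Consequently, the operator $M \coloneqq P_\sigma^{(2)}(A^{\otimes 2}\otimes \Lambda)P_\sigma^{(2)}$ is captured by a fixed-size matrix $\tilde M$ whose entries are explicit traces involving $A$, $A^\dagger$, $\Lambda$, and the diagram operators, with coefficients given by (the inverse of) the associated Gram matrix.

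Second, I would evaluate and estimate the entries of $\tilde M$. The entry corresponding to contracting $A$ with $A^\dagger$ through the trivial permutation evaluates to the unitarity $u(A) = \tr(A A^\dagger)/(d^2-1)$, which will act as the dominant eigenvalue. The remaining entries involve non-trivial contractions and insertions of $\Lambda$. Using that $\Lambda$ is trace-non-increasing (hence bounded as a super-operator in appropriate norms), together with Cauchy-Schwarz and careful accounting of how many of the three factors lie in the traceless subspace, these entries can be bounded by $u(A)$ times a factor that decays with $d = 2^n$. The fractional exponent appearing in $r(A) = (1 + 2^{4-n/3}) u(A)$ is expected to emerge from balancing several such estimates, in which $d$-dependent denominators from the Haar averages trade off against cubic growth in operator norms of $A$ arising from the third-moment structure.

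Third, I would bound $\lVert \tilde M^{m-1}\rVert$ in the norm induced by the SPAM vectors. Since $\tilde M$ is non-normal in general, I would write $\tilde M = D + N$ where $D$ is a rank-one part carrying eigenvalue $u(A)$ and $N$ is the residual with operator norm $\lVert N\rVert \le r(A) - u(A) = 16\, u(A) \cdot 2^{-n/3}$. Expanding $\tilde M^{m-1} = (D+N)^{m-1}$ and grouping terms by the number of factors of $N$ produces a leading geometric contribution bounded by $r(A)^{m-2}$, together with sub-leading contributions of the form $(m-2)^2 r(A)^{m-3}\,\max\{11 u(A),(11 u(A))^2\}$. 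The quadratic factor in $m$ encodes the dominant transient behavior arising from the non-normality of the off-diagonal block. Combining with the SPAM sandwich and extracting an overall factor of $u(A)$ yields the stated bound.

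The hardest part will be the second step: explicitly computing $\tilde M$, isolating the term equal to $u(A)$, and proving the $2^{-n/3}$ suppression of the off-diagonal pieces. In contrast to the purely unitary case treated previously, where the analysis collapses to a single eigenvalue equation, here $A$ may be an arbitrary super-operator on the traceless subspace, which forces us to accept the transient polynomial factor $(m-2)^2$ from non-normality. Matching the precise constants $11$ and $2$ in the final bound will require delicate bookkeeping through Cauchy-Schwarz and Weingarten-type estimates, which I expect to be the most tedious part of the argument.
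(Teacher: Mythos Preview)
Your high-level architecture is right: the $3$-design property reduces everything to a fixed-size matrix (a $3\times 3$ block $\hat\Omega\hat W$ in the paper, indexed by the permutations $(12),(123),(132)$ that survive the $P_{\rm ad}$ restriction), and the task becomes bounding a power of this matrix sandwiched between SPAM vectors. But your execution of steps two and three contains a real gap.

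The origin of the exponent $2^{-n/3}$ is not what you describe. The off-diagonal entries of the reduced matrix are not themselves of size $O(u(A)\,2^{-n/3})$; the actual perturbation $\hat\Omega\hat W - \hat\Omega^0\hat D$ (where $\hat\Omega^0\hat D$ is a comparison matrix with explicitly computable spectrum bounded by $(1+O(2^{-n}))u(A)$) has operator norm of order $u(A)\,2^{-n}$. The $1/3$ power enters because eigenvalues of a $d\times d$ matrix are only H\"older continuous of order $1/d$ in the matrix entries (Bhatia, Thm VIII.1.1), and here $d=3$. Concretely, the paper invokes $|s(\hat\Omega^0\hat D)-s(\hat\Omega\hat W)|\le (\|\hat\Omega^0\hat D\|+\|\hat\Omega\hat W\|)^{2/3}\|\hat\Omega^0\hat D-\hat\Omega\hat W\|^{1/3}$, which converts an $O(2^{-n})$ perturbation into an $O(2^{-n/3})$ shift in the spectral radius. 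Your proposed route (``cubic growth in operator norms traded against Haar denominators'') does not produce this; without the H\"older spectral-continuity lemma you will not recover the $2^{-n/3}$.

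Relatedly, your decomposition $\tilde M = D+N$ with $D$ rank-one and $\|N\|\le r(A)-u(A)$ is not the right object. The reduced matrix is genuinely $3\times 3$ and non-normal; the paper controls $\|(\hat\Omega\hat W)^{m-2}\|$ via the Schur triangular form $U(D+N)U^\dagger$ with $D$ diagonal (carrying the spectral radius, now bounded by $r(A)$) and $N$ strictly upper triangular, hence nilpotent of index at most $3$. Expanding $(D+N)^{m-2}$ then terminates after two insertions of $N$, giving the $(m-2)^2$ transient together with $\|N\|\le 2\|\hat\Omega\hat W\|\le 22\,u(A)$, which is where the $\max\{11u(A),(11u(A))^2\}$ factor comes from. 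A rank-one-plus-small-residual split does not give this cleanly, because the residual is \emph{not} small in norm---only the spectral radius is close to $u(A)$.
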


Restricting to $m > 0$ and using that $r(A) \geq u(A)$ yields the simplified statement in the main text.
\begin{proof}
 Recall that the dynamic shadow norm for the  multi-qubit Clifford group is given by
 \begin{align}
\norm{A}_{{\rm dyn},m} = \max_{\Lambda_R,\Lambda_L} (2^n+1)^2 \!\!\!\!\!\sum_{x\in \{0,1\}^n}\!\!\!\!\! \tr\!\bigg[&  \kett{0_n\tn{2}\otimes\Lambda_R(\tilde{0}_n)}\!\braa{x\tn{2}\otimes \Lambda_L^*(\tilde x)}\avg_{g\in \gr} \sigma_{\rm ad} (g)^{\otimes 2}\otimes \omega(g)\bigg[\avg_{g\in \gr} (A\sigma_{\rm ad}(g))\tn{2} \!\otimes\! (\Lambda\omega(g)) \bigg]^{m\!-\!1} \bigg],
\end{align}
where $\Lambda:= \Lambda_R\Lambda_L$ is the noise in-between subsequent gates.
We can provide a concrete resolution for the third moment by noting that $\mathbb{C}_n$ is a $3$-design~\cite{PhysRevA.96.062336}, and, hence, its third moment follows that of the unitary group $U(2^n)$, which is fully determined (for $n\geq2$) by Schur-Weyl duality. In particular, we have
\begin{equation}
\avg_{c\in \mathbb{C}_n} \sigma_{\rm ad}(c)\tn{2} \otimes \omega(c) = P_{\rm ad}\tn{2}\otimes \mathbb{I}\bigg[\avg_{c\in \mathbb{C}_n} \omega(c)\tn{3}\bigg]P_{\rm ad}\tn{2}\otimes \mathbb{I} = \sum_{\pi,\pi'\in S_3} W_{\pi,\pi'}(P_{\rm ad}\tn{2}\otimes \mathbb{I})\kett{\pi}\braa{\pi'}(P_{\rm ad}\tn{2}\otimes \mathbb{I}),
\end{equation}
where the matrices $\pi$ permute copies of the base Hilbert space, i.e.,
\begin{equation}
\pi\ket{i_1,i_2,i_3} = \ket{i_{\pi(1)},i_{\pi(2)},i_{\pi(3)} }.
\end{equation}
The Weingarten matrix
\begin{equation}\label{eq:weingarten}
W = \frac{1}{2^n(2^{2n}-1)(2^{2n}-4)}\begin{pmatrix}
2^{2n}-2 & -2^n     & -2^n      & -2^n     & 2         & 2       \\
-2^n     & 2^{2n}-2 & 2         & 2        & -2^n      & -2^n    \\
-2^n     & 2        & 2^{2n}-2  & 2        & -2^n      & -2^n    \\
-2^n     & 2        & 2         & 2^{2n}-2 & -2^n      & -2^n    \\
2        & -2^n     & -2^n      & -2^{n}   & 2^{2n}-2  & 2       \\
2        & -2^n     & -2^n      & -2^n     & 2         & 2^{2n}-2\\
\end{pmatrix}
\end{equation}
 can be explicitly written down in the basis $\{e, (12), (23), (13), (123), (132)\}$.
Now defining the matrices
\begin{equation}
\Omega_{\pi',\pi} = \braa{\pi'} A\tn{2}\otimes \Lambda \kett{\pi},
\end{equation}
and
\begin{equation}
[{\Theta_{x}}]_{\pi,\pi'} = (2^n+1)^2\braakett{\pi}{0_n\tn{2}\otimes\Lambda_R(\tilde{0}_n)}\!\braakett{x\tn{2}\otimes \Lambda_L^*(\tilde x)}{\pi'},
\end{equation}
we get
\begin{equation}
\norm{A}_{{\rm dyn},m} = \max_{\Lambda_R,\Lambda_L} \sum_{x\in\{0,1\}^n}\tr(\Theta_x (W\Omega)^{m-1}W).
\end{equation}
We begin by analyzing the matrix $\Omega$.
Note that $A(\mathbb{I}) = A\ct(\mathbb{I}) = 0$, since $A$ is supported only on the space of traceless matrices by construction. This means that $\Omega_{\pi,\pi'}$ is zero unless $\pi,\pi'\in \{(12),(123),(132)\}$. Thus we can write $\Omega = P\ct \hat{\Omega}P$ with $P$ the restriction from $\text{Span}\{e, (12), (23), (13), (123), (132)\}$ to $\text{Span}\{(12),(123),(132)\}$ and
\begin{equation}\label{eq:omega_restr}
\hat{\Omega} = \begin{pmatrix} 2^n \tr(A A\ct) & \tr(AA\ct) & \tr(AA\ct)\\
\tr(J_u(A)^2 \mathbb{I}\otimes \Lambda(\mathbb{I})) & \tr(J_u(A)^2 J_u(\Lambda)) & \tr(J_u(AT)^2 J_u(\Lambda T))\notag \\
\tr(J_u(A)^2 \mathbb{I}\otimes \Lambda(\mathbb{I})) & \tr(J_u(TA)^2 J_u(T\Lambda)) & \tr(J_u(A)^2 J_u(\Lambda))\notag\end{pmatrix},
\end{equation}
where we have used $\Lambda\ct(\mathbb{I}) = \mathbb{I}$, $J_u$ denotes the unnormalized Choi-isomorphism ($J_u(A) = \mathbb{I}\otimes A(\ket{v(\mathbb{I})}\bra{v(\mathbb{I})}$ with $v$ the column-stacking vectorization map) and $T$ is the (non-CP) transposition map. This can be derived directly from the definition of $\Omega$ and some diagram chasing. Before we continue, we establish some facts about the entries of $\hat{\Omega}$. We begin by noting that for the off-diagonal term $\hat{\Omega}_{(123),(132)}$, we have
\begin{equation}\label{eq:tn}
 \tr(J_u(AT)^2 J_u(\Lambda T)) =  \tr(J_u(TA)^2 J_u(T\Lambda )) =  \tr(J_u(A\ct T)^2 J_u(\Lambda\ct T)).
\end{equation}
This follows from three facts: (1) $A\tn{2}$ commutes with the super-operator $L_{(12)}$ defined as left-multiplication with the matrix $(12)$, (2) $L_{(12)}((123)) = (132)$ and (3) the trace is invariant under Hermitian conjugation. One can also see this graphically through the following series of tensor manipulations: as
\begin{equation*}
\includegraphics[scale=0.23]{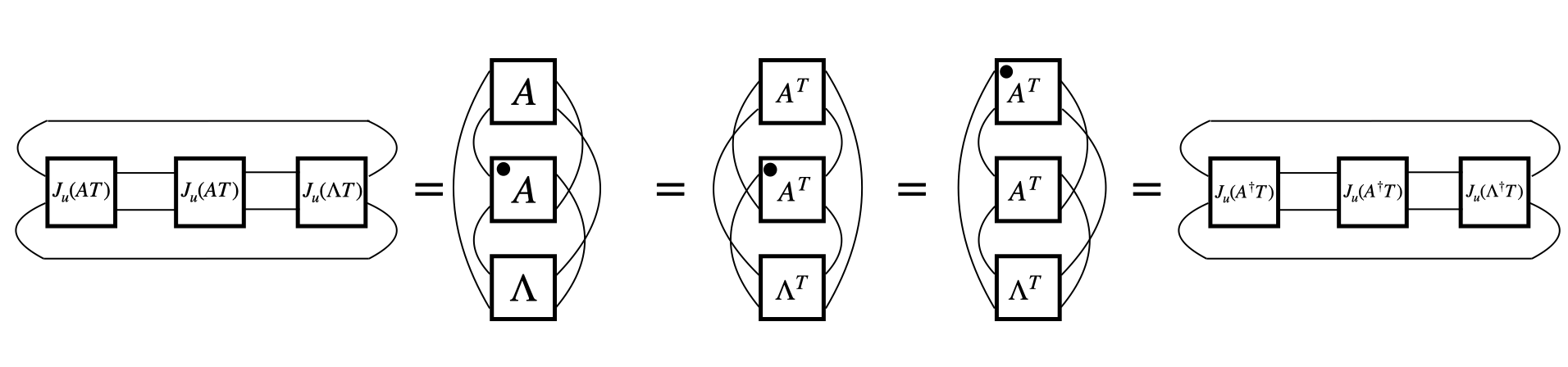}
\end{equation*}
where we have used that the Liouville representation of a dual super-operator $\Lambda\ct$ is given by the transpose of the Liouville representation of $\Lambda$. For clarity we marked for clarity one of the copies of $A$ with a dot.
The rightmost side of \cref{eq:tn} is amenable to a bound of the form
\begin{align}\label{eq:off-diag bound}
|\tr(J_u(A\ct T)^2 J(\Lambda\ct T))|&\leq \tr(J_u(A\ct T)^2) \norm{J_u(\Lambda\ct T)}_{\infty}\\
&= \tr(AA\ct) \norm{J_u(\Lambda\ct T)}_{\infty}\notag \\
&= \tr(AA\ct) \max_{\norm{Q}_1=1} |\tr( Q \,\mathbb{I} \otimes\Lambda\ct(\ket{v(\mathbb{I})}\bra{v(\mathbb{I})}^{T_B}))|\notag \\
&=\tr(AA\ct)\max_{\norm{Q}_1=1} |\tr( \mathbb{I} \otimes\Lambda(Q)(\ket{v(\mathbb{I})}\bra{v(\mathbb{I})}^{T_B}))| \notag \\
&\leq \tr(AA\ct) \norm{(\ket{v(\mathbb{I})}\bra{v(\mathbb{I})}^{T_B})}_{\infty} \notag \\
&\leq \tr(AA\ct),\notag
\end{align}
where we have used H{\"o}lders inequality twice, the fact that $A$ is Hermiticity preserving, and the fact that $\Lambda$ is $1\to 1$-norm contractive (by grace of being CPTP). Finally the Schatten $\infty$-norm of $(\ket{v(\mathbb{I})}\bra{v(\mathbb{I})}^{T_B})$ is easily seen to be one. Furthermore, for the diagonal elements $\hat{\Omega}_{(123),(123)}=\hat{\Omega}_{(132),(132)}$, we have
\begin{align}\label{eq:diag bound}
\tr(J_u(A)^2 J_u(\Lambda)) &\leq \tr(AA\ct)\norm{J_u(\Lambda)}_{\infty}\leq 2^n\tr(AA\ct)
\end{align}
again using H{\"o}lder's inequality and $\norm{J_u(\Lambda)}_{\infty}\leq 2^n $ (note again that $J_u$ is unnormalized). Finally, we consider the other off-diagonal term $\hat{\Omega}_{(123),(12)}=\hat{\Omega}_{(132),(12)} $. If $\Lambda$ is unital (and thus $\Lambda(\mathbb{I}) = \mathbb{I}$ ), we see that $\hat{\Omega}_{(123),(12)} = \tr(AA\ct)$, and in general we have
\begin{equation}
\tr(J_u(A)^2 \mathbb{I}\otimes \Lambda(\mathbb{I}))\leq \tr(AA\ct) \norm{\mathbb{I}\otimes \Lambda(\mathbb{I})}_\infty\leq 2^n \tr(AA\ct).
\end{equation}
Note that if $\Lambda$ is unital the matrix $\hat{\Omega}$ is symmetric, however, for general channels this is not the case. 

With these expressions we can move on to calculate the dynamic shadow norm
\begin{equation}
\norm{A}_{{\rm dyn},m} = \max_{\Lambda_R,\Lambda_L} \sum_{x\in \{0,1\}^n}\tr(\Theta_x (W\Omega)^{m-1}W) \leq \norm{\hat{\Omega}PW\big (\sum_{x\in \{0,1\}^n}\Theta_x\big)W P\ct}_{\infty} \norm{\hat{\Omega}\hat{W}^{m-2} }_{\infty},
\end{equation}
where $\hat{W} =PWP\ct$ and the norm is the standard operator (or Schatten $\infty-$) norm.
To bound the norm-of-power factor (the `dynamic' part of the shadow norm) in the above equation we need two useful lemmas about small non-normal matrices.

\begin{lemma}\label{lem:spectral difference}
Let $A,B$ be complex $d\times d$ matrices with spectral radius $s(A),s(B)$. We have that
\begin{equation}
|s(A) - s(B)| \leq (\norm{A}_{\infty} + \norm{B}_{\infty})^{1 - 1/d} \norm{A-B}_{\infty}^{1/d}.
\end{equation}
\end{lemma}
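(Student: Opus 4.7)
The plan is to recognize the statement as (essentially) Elsner's perturbation theorem for spectra and derive the spectral-radius bound from it by a triangle-inequality argument. Let $\mu$ be an eigenvalue of $B$ with $|\mu| = s(B)$. Elsner's theorem guarantees the existence of an eigenvalue $\lambda$ of $A$ with $|\lambda - \mu| \leq (\norm{A}_\infty + \norm{B}_\infty)^{1-1/d}\norm{A-B}_\infty^{1/d}$. Since $s(A) \geq |\lambda| \geq |\mu| - |\lambda - \mu| = s(B) - |\lambda - \mu|$, we obtain one direction of the desired inequality. Swapping the roles of $A$ and $B$ (the right-hand side of the claim is symmetric in $A,B$) yields the matching lower bound on $s(B)-s(A)$, and together these give $|s(A) - s(B)|$ bounded as asserted. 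So the only real content to establish is Elsner's theorem itself.

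For the proof of Elsner's theorem, I would proceed as follows. Fix a unit eigenvector $v$ of $B$ for the eigenvalue $\mu$ and choose a unitary $V$ whose first column is $v$. Consider the matrix $M \coloneqq V^*(A-\mu I)V$; its first column equals $V^*(A-\mu I)v = V^*(A-B)v$, so its Euclidean norm is at most $\norm{A-B}_\infty$, while each of its remaining $d-1$ columns has norm at most $\norm{A-\mu I}_\infty \leq \norm{A}_\infty + |\mu| \leq \norm{A}_\infty + \norm{B}_\infty$. Hadamard's inequality on the columns of $M$ then gives
\begin{equation}
|\det(A-\mu I)| = |\det M| \leq \norm{A-B}_\infty \, (\norm{A}_\infty + \norm{B}_\infty)^{d-1}.
\end{equation}
On the other hand, writing $\lambda_1,\dots,\lambda_d$ for the eigenvalues of $A$ and letting $\lambda$ be the one closest to $\mu$, we have $|\lambda-\mu|^d \leq \prod_i |\lambda_i - \mu| = |\det(A-\mu I)|$. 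Taking the $d$-th root of the resulting inequality produces precisely the Elsner bound on $|\lambda-\mu|$.

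I do not foresee a genuine obstacle: the result is classical and the Hadamard-plus-determinant trick is the standard route. The only point that requires a little care is the order of limits, namely that it is the eigenvalue of $A$ \emph{closest} to $\mu$ (not a specific one indexed in advance) for which the $d$-th-root bound applies, and that in the spectral-radius step one must pick $\mu$ to be the eigenvalue realizing $s(B)$ (rather than an arbitrary one) in order to convert the eigenvalue-pairing statement into a bound on spectral radii. Once these choices are made, the symmetrization in $A \leftrightarrow B$ is immediate from the manifestly symmetric right-hand side.
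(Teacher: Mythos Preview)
Your proposal is correct and aligns with the paper's approach: the paper simply states that the lemma is a trivial consequence of \cite[Thm~VIII.1.1]{bhatia2013matrix} (Elsner's theorem) without giving details, whereas you both spell out the triangle-inequality reduction from Elsner's eigenvalue-matching bound to the spectral-radius inequality and supply the standard Hadamard-determinant proof of Elsner's theorem itself. So you have provided strictly more than the paper does, but along the same route.
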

\begin{lemma}\label{lem:spectral power}
Let $A$ be a complex $d\times d$ matrix with spectral radius $s(A)$. We have that
\begin{equation}
\norm{A^m}_{\infty} \leq s(A)^m + (d-1) m^{d-1}\max \big\{ s(A)^{m-d+1}, s(A)^{m-1} \big\}
\max\big\{ 2\norm{A}_{\infty}, (2\norm{A}_{\infty})^{d-1}\big \}.
\end{equation}
\end{lemma}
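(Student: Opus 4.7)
\smallskip

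\noindent\textbf{Proof plan for \cref{lem:spectral power}.} The plan is to reduce to an upper-triangular matrix via the Schur decomposition, expand powers combinatorially, and use a simple nilpotency constraint to truncate the sum.

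First I would write $A = U T U^\dagger$ with $U$ unitary and $T$ upper triangular, then split $T = D + N$, where $D$ is diagonal (carrying the eigenvalues of $A$) and $N$ is strictly upper triangular. Unitary invariance of $\norm{\cdot}_\infty$ gives $\norm{A^m}_\infty = \norm{(D+N)^m}_\infty$, $\norm{D}_\infty = s(A)$, and $\norm{N}_\infty \leq \norm{T}_\infty + \norm{D}_\infty \leq \norm{A}_\infty + s(A) \leq 2\norm{A}_\infty$, using $s(A) \leq \norm{A}_\infty$.

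Next, I would expand $(D+N)^m = \sum_{w \in \{D,N\}^m} w_1 w_2 \cdots w_m$ as a sum of $2^m$ ordered words, and group the terms by the number $k$ of letters equal to $N$. The structural step is the observation that any product of $k$ copies of $N$ interleaved with arbitrary diagonal matrices has vanishing $(i,j)$-entry unless $j \geq i + k$: diagonal factors preserve the index, while each $N$ strictly increases it along any surviving path. Consequently, every word with $k \geq d$ copies of $N$ contributes zero and the sum truncates at $k = d-1$. There are $\binom{m}{k}$ words with exactly $k$ copies of $N$, and each has operator norm bounded by $s(A)^{m-k}(2\norm{A}_\infty)^k$ via submultiplicativity of $\norm{\cdot}_\infty$.

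Finally, I would assemble the pieces. The $k=0$ term contributes $s(A)^m$. Using $\binom{m}{k} \leq m^{d-1}$ for $k \in \{1,\dots,d-1\}$, each remaining contribution is at most $m^{d-1} s(A)^{m-k} (2\norm{A}_\infty)^k$. On the range $k \in \{1,\dots,d-1\}$ the sequence $k \mapsto s(A)^{m-k}$ is monotone (with direction depending on whether $s(A)$ lies below or above $1$), so it attains its maximum at an endpoint, giving $\max\{s(A)^{m-1}, s(A)^{m-d+1}\}$; the same reasoning bounds $k \mapsto (2\norm{A}_\infty)^k$ by $\max\{2\norm{A}_\infty, (2\norm{A}_\infty)^{d-1}\}$. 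Summing the (at most) $d-1$ nonzero terms then yields exactly the claimed bound.

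The one step I expect to require genuine care is the nilpotency argument --- verifying that intermixing diagonal factors does not spoil the strict index-shift behaviour of $N$, so that $k \geq d$ words really vanish rather than merely having $N^d = 0$ in the purely nilpotent case. Once that support-tracking is pinned down the remainder is clean bookkeeping, and I expect the constants $d-1$ and the $m^{d-1}$ prefactor to be somewhat loose but sufficient for the application to \cref{thm:cliff_var_sm}.
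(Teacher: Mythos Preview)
Your proposal is correct and follows essentially the same route as the paper: Schur decomposition $A=U(D+N)U^\dagger$, expand $(D+N)^m$, truncate at $k=d-1$ factors of $N$, and bound via $\binom{m}{k}\leq m^{d-1}$ together with monotonicity of $s(A)^{m-k}$ and $\norm{N}_\infty^k$. If anything, your treatment of the nilpotency step (tracking the strict index shift through interleaved diagonal factors) is more explicit than the paper's one-line assertion.
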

\Cref{lem:spectral difference} is a trivial consequence of ref.~\cite[Thm VIII.1.1]{bhatia2013matrix}, and we will prove \cref{lem:spectral power} in at the end of this section. Next, we define matrices
\begin{equation}
\hat{\Omega}^0 = \begin{pmatrix} 2^n \tr(A A\ct) & 0 & 0\\
\tr(J_{u}(A)^2 \mathbb{I}\otimes \Lambda(\mathbb{I})) & \tr(J_{u}(A)^2 J_{u}(\Lambda)) & \tr(J_{u}(AT)^2 J_{u}(\Lambda T))\\
\tr(J_{u}(A)^2 \mathbb{I}\otimes \Lambda(\mathbb{I})) & \tr(J_{u}(TA)^2 J_{u}(T\Lambda)) & \tr(J_{u}(A)^2 J_{u}(\Lambda))\end{pmatrix}, \;\;\;\;\;\hat{D} = \frac{2^{2n}-2}{2^n(2^{2n}-1)(2^{2n}-4)}\mathbb{I}.
\end{equation}
We have
\begin{equation}\label{eq:spec_diff}
\hat{\Omega}^0\hat{D} =  \hat{\Omega}\hat{W} + (\hat{\Omega}^0-\hat{\Omega})\hat{D} + \hat{\Omega}(\hat{D} - \hat{W}).
\end{equation}
We can directly calculate the spectrum of $\hat{\Omega}^0\hat{D}$ as being
\begin{align}
\mathrm{Spec}(\hat{\Omega}^0\hat{D}) = &\bigg\{\frac{2^n (2^{2n}-2)\tr(AA\ct)}{2^n(2^{2n}-1)(2^{2n}-4)},\frac{(2^{2n}-2)}{2^n(2^{2n}-1)(2^{2n}-4)}\big(\tr(J_{u}(A)^2 J_{u}(\Lambda)) + \tr(J_{u}(TA)^2 J_{u}(T\Lambda))\big)\notag\\&\hspace{12em} , \frac{(2^{2n}-2)}{2^n(2^{2n}-1)(2^{2n}-4)}\big(\tr(J_{u}(A)^2 J_{u}(\Lambda)) - \tr(J_{u}(TA)^2 J_{u}(T\Lambda))\big)  \bigg\}.
\end{align}
Using the definition of unitarity we see that the first eigenvalue is upper bounded by
\begin{equation}
\frac{2^n (2^{2n}-2)\tr(AA\ct)}{2^n(2^{2n}-1)(2^{2n}-4)} =\frac{2^n (2^{2n}-2)(2^{2n}-1)}{2^n(2^{2n}-1)(2^{2n}-4)}u(A) \leq (1+ 3 \,2^{-2n})u(A),
\end{equation}
for $n\geq2$.
Note that the constant $3$ in the last inequality is somewhat loose.
Furthermore, using \cref{eq:diag bound,eq:off-diag bound} we can also provide bounds for the other two eigenvalues, to get
\begin{align}
\frac{(2^{2n}-2)}{2^n(2^{2n}-1)(2^{2n}-4)}\big(\tr(J_{u}(A)^2 J_{u}(\Lambda)) \pm \tr(J_{u}(TA)^2 J_{u}(T\Lambda))\big)
&\leq \frac{(2^{2n}-2)(2^n+1)}{2^n(2^{2n}-1)(2^{2n}-4)}\tr(AA\ct) \\
&=\frac{(2^{2n}-2)(2^n+1)}{2^n(2^{2n}-4)}u(A)\notag \\
&\leq (1+ 4\, 2^{-n})u(A) ,\notag
\end{align}
where again the constant $4$ is an overestimate.
Hence, the spectral radius of $\hat{\Omega}^0$ is bounded by $u(A)$ up to a small multiplicative correction, i.e.
\begin{equation}\label{eq:spec_rad}
s(\hat{\Omega}^0) \leq (1+ 4\, 2^{-n})u(A).
\end{equation}
The plan is now to leverage \cref{lem:spectral difference} to bound the spectral radius of $\hat{\Omega}\hat{W}$. To do this we first need to bound the norm (from \cref{eq:spec_diff})
\begin{equation}
\norm{\hat{\Omega}^0\hat{D} - \hat{\Omega}\hat{W}}_{\infty} \leq \norm{(\hat{\Omega}-\hat{\Omega}^0)}_{\infty}\norm{\hat{D}}_{\infty} + \norm{\hat{\Omega}}_{\infty}\norm{(\hat{D} - \hat{W})}_{\infty}.
\end{equation}
We can bound the terms on the RHS in a straightforward manner (using the definition \cref{eq:weingarten}),
\begin{equation}
\norm{(\hat{D} - \hat{W})}_{\infty} \leq \norm{(\hat{D} - \hat{W})}_{HS} = \frac{\sqrt{4\, 2^{2n} + 8}}{2^n(2^{2n}-1)(2^{2n}-4)},
\end{equation}
and (using the definition of $\hat{\Omega}^0$)
\begin{equation}
\norm{(\hat{\Omega}-\hat{\Omega}^0)}_{\infty} \leq (2^{2n}-1)u(A),
\end{equation}
and finally
\begin{align}
\norm{\hat{\Omega}}_{\infty} &\leq \norm{\hat{\Omega}}_{HS} \leq3\, 2^{n}(2^{2n}-1)u(A),\\
\norm{\hat{D}}_{\infty} &= \frac{2^{2n}-2}{2^n(2^{2n}-1)(2^{2n}-4)}.
\end{align}
Plugging this all back in we get
\begin{equation}
\norm{\hat{\Omega}^0\hat{D} - \hat{\Omega}\hat{W}}_{\infty}  \leq u(A) \frac{3\, 2^{n}(2^{2n}-1)\sqrt{4\, 2^{2n} + 8} + (2^{2n}-2)(2^{2n}-1)}{2^n(2^{2n}-1)(2^{2n}-4)}\leq 3\,2^{-n}u(A).
\end{equation}
Moreover, we have that
\begin{equation}
\norm{\hat{\Omega}^0\hat{D}}_{\infty}\leq \frac{9\, u(A) 2^n(2^{2n}-1)(2^n-2)}{2^n(2^{2n}-1)(2^{2n}-4)} \leq 11\,u(A),
\end{equation}
for $n\geq 3$,
and by the same argument
\begin{equation}
\norm{\hat{\Omega}\hat{W}}_{\infty} \leq 11\,u(A).
\end{equation}
Hence, through \cref{lem:spectral difference}, the spectral radius difference can be bounded as
\begin{equation}
|s(\hat{\Omega}^0\hat{D}) - s(\hat{\Omega}\hat{W})| \leq \big(22 u(A)\big)^{2/3}\big(3\, 2^{-n}u(A)\big)^{1/3} \leq 12\, 2^{-n/3}u(A),
\end{equation}
and the spectral radius of $\hat{\Omega}\hat{W}$ is thus bounded by (using the above equation and \cref{eq:spec_rad})
\begin{equation}
s(\hat{\Omega}\hat{W}) \leq (1 + 4\, 2^{-n} + 12\, 2^{-n/3})u(A).
\end{equation}
We can plug this into \cref{lem:spectral power} to obtain
\begin{equation}
\norm{\hat{\Omega}\hat{W}^{m-2} } \leq  (u(A)(1+16\, 2^{-n/3}))^{m-2} + \big[2(m-2)^2(u(A)(1+16\, 2^{-n/3}))^{m-3} \big]   \max\big\{ 11u(A), (11\, u(A))^2\big\},
\end{equation}
which takes care of the `dynamic' part of the dynamic shadow norm. To bound the SPAM contribution we can observe
\begin{equation}
\norm{\hat{\Omega}P W \big(\!\!\!\sum_{x\in \{0,1\}^n}\!\!\!\Theta_x\big)W P\ct}_{\infty}\leq \norm{\hat{\Omega}}_{HS} \norm{W}_{HS}^2 \norm{\sum_{x\in \{0,1\}^n}\!\!\!\Theta_x}_{HS}.
\end{equation}
Of these only $\|{\sum_{x\in \{0,1\}^n}\Theta_x}\|_{HS}$ has not been considered. From a straightforward calculation (using that $\Lambda_R(\tilde{0}_n)$ is a state and $\{\Lambda_L^*(\tilde x)\}_x$ a POVM) one sees that $\sum_{x\in \{0,1\}^n}\Theta_x = 2^n(2^n+1)^2 v_1 v_2^T$ with
\begin{align}
v_1 &:= \Big(1,1,\avg_{x\in \{0,1\}^n}\braakett{x}{\Lambda_L^*(\tilde x)}, \avg_{x\in \{0,1\}^n}\braakett{x}{\Lambda_L^*(\tilde x)},\avg_{x\in \{0,1\}^n}\braakett{x}{\Lambda_L^*(\tilde x)},\avg_{x\in \{0,1\}^n}\braakett{x}{\Lambda_L^*(\tilde x)}\Big),\\
v_2 &:= \Big(1,1,\braakett{0_n}{\Lambda_R(\tilde{0}_n)}, \braakett{0_n}{\Lambda_R(\tilde{0}_n)},\braakett{0_n}{\Lambda_R(\tilde{0}_n)},\braakett{0_n}{\Lambda_R(\tilde{0}_n)}\Big).
\end{align}
We can use this to upper bound the SPAM factor as
\begin{align}
\norm{\hat{\Omega}PW\big (\!\!\!\sum_{x\in \{0,1\}^n}\!\!\!\Theta_x\big)W P\ct}_{\infty} &\leq \norm{\hat{\Omega}PW\big (\sum_{x\in \{0,1\}^n}\Theta_x\big)W P\ct}_{HS}\\
\notag
 &= 2^{n}(2^n +1)^2 \big[v_1 W\ct P\ct \hat{\Omega}\ct\hat{\Omega}PW v_1\ct \; v_2 W\ct P\ct W v_2\ct\big]^{1/2}\\
 &\leq 2^{n}(2^n +1)^2\bigg[ \norm{\hat{\Omega}\hat{\Omega}\ct}_{\infty} v_1 W\ct P\ct P Wv_1\ct \; v_2 W\ct P\ct W v_2\ct \bigg ]^{1/2}\notag \\
 \notag
& \leq \frac{2^{n}(2^n +1)^2 \norm{\hat{\Omega}}_{HS}}{\big(2^n (2^{2n}-4)(2^{2n}-2)\big)^2}
\notag
 \bigg[\big[(2^{2n} -2^{2n}-2) + \avg_{x\in \{0,1\}^n} \braakett{x}{\tilde x} (4-2\, 2^{n})\big]^2\notag\\
 &\hspace{12em}+ 2\big[(2-2^n) + \avg_{x\in \{0,1\}^n} \braakett{x}{\tilde x} (2^{2n} -2^n)\big]^2\bigg]^{1/2} \notag\\
 &\hspace{12em} \times \bigg[ \big[(2^{2n} -2^{2n}-2) + \braakett{0_n}{\tilde 0_n} (4-2\, 2^{n})\big]^2 \notag\\
 &\hspace{13em}+ 2\big[(2-2^n) + \braakett{0_n}{\tilde 0_n} (2^{2n} -2^n)\big]^2\bigg]^{1/2}.\notag
\end{align}
Gathering terms, throwing away some negative ones, remembering our bound for $\norm{\hat{\Omega}}_{HS}$, and using that
\begin{align}
\avg_{x\in \{0,1\}^n}\braakett{x}{\Lambda_L^*(\tilde x)}\leq 1,
\end{align}
and $\braakett{0_n}{\Lambda_R(\tilde{0}_n)}\leq 1$ by construction we can bound this further by
\begin{align}
\norm{\hat{\Omega}PW\big (\!\!\!\sum_{x\in \{0,1\}^n}\!\!\!\Theta_x\big)W P\ct}_{\infty} \leq \frac{3\, 2^{2n}(2^n +1)^2 (2^{2n}-1) }{\big(2^n (2^{2n}-4)(2^{2n}-2)\big)^2}\bigg[(2^{2n} - 2^n -2)^2 + 2(2-2^n)^2 + (2^{2n}- 2^n)^2 + (4-2\, 2^n)^2 \bigg].
\end{align}
We would like to stress that from this expression we can already see that the SPAM norm contribution is asymptotically independent of $n$. By basic numerics, we can obtain
\begin{equation}
\norm{\hat{\Omega}PW\big (\!\!\!\sum_{x\in \{0,1\}^n}\!\!\!\Theta_x\big)W P\ct}_{\infty} \leq 11\, u(A) .
\end{equation}
We again note that this constant is sub-optimal (especially for large $n$). Putting all of this together we obtain the stated bound.
\end{proof}

When the probe super-operator is a unitary restricted to the traceless subspace ($A = P_{\rm ad} UP_{\rm ad}$ for some unitary $U$) then we can obtain a substantially improved bound, which hinges critically on the fact that $U$ is a quantum channel.

\begin{theorem}[Restatement of Theorem 2 in the main text]
\label{thm:var_unitary_sm}
Consider the $n$-qubit Clifford UIRS protocol and let $A = P_{\rm ad} UP_{\rm ad} $ be a probe super-operator with $U$ a unitary. Also let $\tilde 0_n$ be a noisy implementation of the all-zero state and $\{\tilde x\}_x$ the noisy computational basis POVM). The dynamic shadow norm is bounded as
\begin{equation}
\norm{A}_{{\rm dyn},m} \leq   10.
\end{equation}
\end{theorem}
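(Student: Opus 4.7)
The plan is to follow the same reduction as in the proof of \cref{thm:cliff_var_sm}, but exploit the additional structure granted by the fact that $A$ is (the traceless restriction of) a unitary channel to avoid the polynomial-in-$m$ transient that forced the weaker bound in the general case. First, I would start from the expression
\begin{equation*}
\norm{A}_{{\rm dyn},m} = \max_{\Lambda_R,\Lambda_L}\sum_{x\in\{0,1\}^n}\tr\bigl(\Theta_x(W\Omega)^{m-1}W\bigr)
\end{equation*}
obtained in the general proof, and use the $3$-design property of $\mathbb{C}_n$ together with Schur-Weyl duality to reduce to the same $3\times 3$ matrices $\hat\Omega = P\Omega P\ct$ and $\hat W = PWP\ct$ indexed by permutations $\{(12),(123),(132)\}$.

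The crucial simplification is that for $A = P_{\mathrm{ad}}U P_{\mathrm{ad}}$ with $U$ unitary we have $u(A)=1$ \emph{exactly} and, more importantly, $J_u(U) = \kett{v(V)}\braa{v(V)}$ is rank one with $\|J_u(U)\|_\infty = 2^n$. I would feed this into the explicit entries of $\hat\Omega$ written down in \eqref{eq:omega_restr}: the diagonal entries $\tr(J_u(A)^2 J_u(\Lambda))$ reduce to $2^n\tr(J_u(U) J_u(\Lambda))$, which, combined with the unitality-contraction bounds of \eqref{eq:off-diag bound} and \eqref{eq:diag bound}, yields tight balanced bounds on all entries of $\hat\Omega$. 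The plan is then to show that $\hat\Omega\hat W$ is well-conditioned: its spectral radius is $1+\LandauO(2^{-n})$ by the same argument as \cref{eq:spec_rad}, but unlike the general case the matrix is not only near-normal but actually satisfies $\|(\hat\Omega\hat W)^{m-2}\|_\infty \leq C$ for a constant $C$ independent of $m$. A clean way to establish this is the direct observation that, since $U$ is a CPTP unitary channel, the operator $P_\sigma^{(2)}(A\tn{2}\otimes\Lambda)P_\sigma^{(2)}$ is a contraction in an appropriate operator norm (essentially because $A^\dagger A = P_{\mathrm{ad}}$ is a projector rather than a sub-projector), so its powers stay uniformly bounded without generating Jordan-block transients.

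Finally, I would reuse verbatim the SPAM estimate from \cref{thm:cliff_var_sm}, which gave
\begin{equation*}
\norm{\hat\Omega P W\Bigl(\!\!\sum_{x\in\{0,1\}^n}\!\!\!\Theta_x\Bigr) W P\ct}_\infty \leq 11\,u(A) = 11,
\end{equation*}
and combine with the uniform-in-$m$ control of $\|(\hat\Omega\hat W)^{m-2}\|_\infty$ to obtain a constant bound, tightening the numerical constants to reach $10$. The main obstacle is the second step: proving rigorously that the polynomial-in-$m$ pre-factor disappears when $A$ is unitary. In the general-$A$ proof this term arose from the non-normality bound of \cref{lem:spectral power}; here one must show that the specific structure of $\hat\Omega$ arising from a unitary probe is not merely near-normal but avoids the degenerate-eigenvalue behaviour that generates Jordan blocks, either by an explicit diagonalization/condition-number argument on the $3\times 3$ matrix or, more elegantly, by a contraction argument that bypasses the Weingarten spectral analysis entirely and exploits that $A\Lambda$ is itself a composition of a CP map with a CPTP channel.
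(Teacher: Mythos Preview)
Your primary route---reducing to the $3\times3$ matrices $\hat\Omega$ and $\hat W$ and then controlling $\|(\hat\Omega\hat W)^{m-2}\|_\infty$ by a condition-number or near-normality argument---is \emph{not} what the paper does, and the paper's actual proof is considerably simpler. Your alternative suggestion of a contraction argument that ``bypasses the Weingarten spectral analysis entirely'' is much closer to the mark, but you have not identified the mechanism that makes it work, and your stated reason (``$A\Lambda$ is a composition of a CP map with a CPTP channel'') is wrong: $A=P_{\rm ad}UP_{\rm ad}$ is not CP.

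The paper's key move is to observe that $P_{\rm ad}$ commutes with both $U$ and the ideal Clifford action $\omega(g)$, and that $P_{\rm ad}^{\otimes2}\otimes\mathbb{I}$ acts trivially on the third tensor factor where $\Lambda$ lives. This lets one pull \emph{all} copies of $P_{\rm ad}$ out of the $(m-1)$-fold product to the boundary, leaving inside the power only
\[
\avg_{g\in\mathbb{C}_n}\omega(g)^{\otimes3}\,(U^{\otimes2}\otimes\Lambda)\,\avg_{g\in\mathbb{C}_n}\omega(g)^{\otimes3},
\]
which is a genuine CPTP map and hence $1\to1$-contractive. One then applies the Weingarten resolution \emph{once} at the outermost layer, writes the shadow norm as $(2^n+1)^2 2^n\,\hat v\,W\,\hat w(m)$ for explicit $6$-vectors $\hat v,\hat w(m)$, and bounds each $|\hat w_{\pi'}(m)|\le1$ directly from contractivity together with $\|\pi'\|_\infty=1$. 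The SPAM factor is then handled by an explicit evaluation of $\hat v W$ (not by reusing the $11\,u(A)$ estimate from \cref{thm:cliff_var_sm}), giving the constant $10$.

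So the gap in your proposal is that you never escape the non-CP world: working with $\hat\Omega\hat W$ you are stuck analysing a non-normal $3\times3$ matrix whose entries encode $A$ rather than $U$, and the Jordan-block obstruction does not obviously disappear just because $u(A)=1$. The commutation trick that restores CPTP structure---and with it an effortless $m$-independent bound---is the missing idea.
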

\begin{proof}
We begin from the definition of the dynamic shadow norm, given by
\begin{equation}
\norm{A}_{{\rm dyn},m} = \max_{\Lambda_R,\Lambda_L} (2^n+1)^2 2^n \sum_{x\in \{0,1\}^n} \braa{x\tn{2} \otimes \tilde x} \bigg[\avg_{g\in \mathbb{C}_n}\phi(g)\tn{3} (P_{\rm ad} UP_{\rm ad})\tn{2}\otimes \Lambda \avg_{g\in \mathbb{C}_n}\phi(g)\tn{3}\bigg]^m \kett{0_n\tn{2}\otimes \tilde 0_n}.
\end{equation}
Here, again we write $\Lambda := \Lambda_R\Lambda_L$.
Note also that $P_{\rm ad}^2 = P_{\rm ad}$ and that $P_{\rm ad}$ commutes with both $U$ and $\phi(g)$. Hence, we can write
\begin{equation}
\norm{A}_{{\rm dyn},m} = \max_{\Lambda_R,\Lambda_L} (2^n+1)^2 2^n \avg_{x\in \{0,1\}^n} \braa{x\tn{2} \otimes \Lambda_L^*(\tilde x)}(P_{\rm ad}\tn{2}\otimes \mathbb{I}) \bigg[\avg_{g\in \mathbb{C}_n}\phi(g)\tn{3} (U\tn{2}\otimes \Lambda)\! \avg_{g\in \mathbb{C}_n}\phi(g)\tn{3}\bigg]^m \kett{0_n\tn{2}\otimes \Lambda_R(\tilde 0_n)}.
\end{equation}
Furthermore $\smallavg_{g\in \mathbb{C}_n}\phi(g)\tn{3}$ is a projector, and thus
\begin{align}
\norm{A}_{{\rm dyn},m} &= \max_{\Lambda_R,\Lambda_L} (2^n+1)^2 2^n \!\!\!\!\avg_{x\in \{0,1\}^n}\!\!\!\! \braa{x\tn{2} \otimes \Lambda_L^*(\tilde x)}(P_{\rm ad}\tn{2}\otimes \mathbb{I})\sum_{\pi,\pi'}W_{\pi,\pi'}\kett{\pi}\braa{\pi'}\notag\\
&\hspace{15em}\times \bigg[\avg_{g\in \mathbb{C}_n}\phi(g)\tn{3} (U\tn{2}\otimes \Lambda)\! \avg_{g\in \mathbb{C}_n}\phi(g)\tn{3}\bigg]^m\kett{0_n\tn{2}\otimes\Lambda_R(\tilde 0_n)},
\end{align}
where we have used the resolution of the third-moment projector as in \cref{eq:weingarten}. Now defining the vectors
\begin{align}
\hat{v}_{\pi} &:= \braa{x\tn{2} \otimes \Lambda_L^*(\tilde x)}(P_{\rm ad}\tn{2}\otimes \mathbb{I})\kett{\pi},\\
\hat{w}_{\pi'}(m) &:= \braa{\pi'} \bigg[\avg_{g\in \mathbb{C}_n}\phi(g)\tn{3} (U\tn{2}\otimes \Lambda)\! \avg_{g\in \mathbb{C}_n}\phi(g)\tn{3}\bigg]^m \kett{0_n\tn{2}\otimes \Lambda_R(\tilde 0_n)},
\end{align}
we can express the shadow norm as
\begin{equation}
\norm{A}_{{\rm dyn},m} = \max_{\Lambda_R,\Lambda_L} (2^n+1)^2 2^n v W w^T(m) .
\end{equation}
By direct calculation analogous to the calculations done in the proof of \cref{thm:cliff_var_sm},
this becomes
\begin{align}
\norm{A}_{{\rm dyn},m} &= \max_{\Lambda_R,\Lambda_L}\frac{(2^n+1)^2 2^n}{2^n (2^{2n}-4)(2^{2n}-1)}\bigg[
\big((2^{2n}-2) - 2\avg_{x\in\{0,1\}^n}\braakett{x}{\Lambda_L^*(\tilde x)} 2^n\big)\hat{w}_{(12)}\notag\\
&\hspace{5em} +
\big(2^{2n}\avg_{x\in\{0,1\}^n}\braakett{x}{\Lambda_L^*(\tilde x)} -2^n\big)\hat{w}_{(123)} +
\big(2^{2n}\avg_{x\in\{0,1\}^n}\braakett{x}{\Lambda_L^*(\tilde x)} -2^n\big)\hat{w}_{(132)}
\bigg].
\end{align}
Finally, we use that $\norm{\pi'}_{\infty}=1$ and that $\kett{0_n\tn{2}\otimes \Lambda_R(\tilde 0_n)}$ is a quantum state to see that
\begin{align}
|\hat{w}_{\pi'}(m)| &= |\braa{\pi'} \bigg[\avg_{g\in \mathbb{C}_n}\phi(g)\tn{3} (U\tn{2}\otimes \Lambda)\! \avg_{g\in \mathbb{C}_n}\phi(g)\tn{3}\bigg]^m \kett{0_n \tn{2}\otimes \Lambda_R(\tilde 0_n)}|\\
\notag
&\leq \norm{\bigg[\avg_{g\in \mathbb{C}_n}\phi(g)\tn{3} (U\tn{2}\otimes \Lambda)\! \avg_{g\in \mathbb{C}_n}\phi(g)\tn{3}\bigg]^m}_{1\to1}\\
\notag
&\leq 1,
\notag
\end{align}
since $U\tn{2}\otimes\Lambda $ and $\smallavg_{g\in \mathbb{C}_n}\phi(g)\tn{3}$ are quantum channels.
Together with the fact that $|\smallavg_{x\in \{0,1\}^n} \braakett{x}{\Lambda_L^*(\tilde x)}|\leq 1$, we get
\begin{align}
\norm{A}_{{\rm dyn},m} \leq\frac{(2^n+1)^2 2^n}{2^n (2^{2n}-4)(2^{2n}-1)}\bigg[
(2^{2n}-2) + 2\,2^n\big) +
\big(2^{2n} +2^n\big) +
\big(2^{2n}+2^n\big)
\bigg]\leq 10
\end{align}
for $n\geq 2$.
\end{proof}

Finally, we provide a proof of \cref{lem:spectral power}, adapted from a very similar statement in ref.~\cite[Lemma 8.5]{wolf2012quantum}.

\begin{proof}[Proof of \cref{lem:spectral power}]
We begin by bringing $A$ into Schur normal form, i.e.,
\begin{equation}
A = U(D+ N)U\ct
\end{equation}
where $D$ is diagonal (with the eigenvalues of $A$ on the diagonal), $N$ is strictly upper triangular and $U$ is unitary.
Now consider the expansion of $(D+N)^m$. Since $N$ is strictly upper triangular, any term with more than $d-1$ factors of $N$ must vanish. Hence, we have
\begin{align}
\norm{A^m} &= \norm{U(D+N)^mU\ct}\\
\notag
& \leq  \norm{\Lambda}^m_{\infty} + \sum_{i=1}^{\min\{d-1, m\}} \binom{m}{k}\norm{D}_\infty^{m-i}\norm{N}_\infty^{i}\\
\notag
&\leq  \norm{\Lambda}^m_{\infty} + (d-1) m^{d-1}  \max \big\{ s(A)^{m-d+1}, s(A)^{m-1} \big\}
\max\big\{ \norm{N}_{\infty}, (\norm{N}_{\infty})^{d-1}\big \},
\notag
\end{align}
where we have used that $\binom{m}{k}\leq m^{d-1}$, H{\"o}lder's inequality, and the monotonicity of the exponential function. Now note that
\begin{equation}
\norm{N}_{\infty} \leq \norm{D}_\infty + \norm{A}_{\infty}\leq 2 \norm{A}_{\infty}.
\end{equation}
Combining this with $\norm{D}_{\infty} = s(D) = s(A)$ we obtain the lemma statement.
\end{proof}

\section{Pauli-noise estimation}\label{appsec:pauli_var}

Here we analyze the Pauli-noise estimation scheme outlined in the main text.
Recall that we consider sequences of the form $\mathbf g = (c^{-1}, p_m, \ldots, p_1, c)$, where $p_1, \ldots, p_m$ are i.i.d.\ randomly drawn elements from the Pauli group $\mathbb P_n$ and $c$ is randomly drawn multi-qubit Clifford $\mathbb C_n$.
For $\tau$ a traceless Hilbert-Schmidt normalized multi-qubit Pauli operator and sequence $\mathbf g$ we define the filter-function as
\begin{equation}
f_\tau(x,\vec{g}) \coloneqq \alpha \braa{x} \omega(c)\omega(p_m) A_{\tau}\ldots A_\tau\omega(p_1) \omega(c)\kett{0_n},
\end{equation}
with $A_\tau = \kett{\tau}\!\braa{\tau}$ and $\alpha = 2^n(1+2^n)$.
Without SPAM, the expected value of the single shot estimator, is given by
\begin{align}
k_\tau(m) &= \alpha\avg_{c\in \mathbb{C}_n}\sum_{x\in \{0,1\}^n}\!\!\!\!\braa{x\tn{2}} \omega(c^{-1})\tn{2}\kett{\tau\tn{2}}\braa{\tau\tn{2}}\omega(c\tn{2})\kett{0_n\tn{2}}\, \Lambda_{\tau,\tau}^{m-1} \\
&= \alpha \avg_{c\in \mathbb{C}_n} \sum_{x\in \{0,1\}^n} \sandwich x {\omega(c)(\tau)} x^2 \sandwich 0 {\omega(c)(\tau)} 0^2 \, \Lambda_{\tau,\tau}^{m-1}  \, .\notag
\end{align}
with  $\Lambda_{\tau,\tau} = \braa\tau\Lambda\kett\tau$.
Since the Clifford group acts transitively on the traceless Pauli operators $\mathbb P^\ast_n$, we can rewrite
\begin{equation}
  k_\tau(m) = \alpha \avg_{\tau' \in 2^{n/2} \mathbb P^\ast_n} \sum_{x \in \{0,1\}^n} \sandwich x {\tau'} x^{2} \sandwich 0 {\tau'} 0^{2}\, \Lambda_{\tau,\tau}^{m-1}.
\end{equation}
Out of the $2^{2n} - 1$ traceless Pauli-operators only $2^n -1$ have non-vanishing diagonal entries (those consisting only out of local $\mathbb I$ and $Z$). The non-vanishing diagonal entries are all identical to $2^{n/2}$.
Thus, using the definition of $\alpha$ we have
\begin{equation}
  k_\tau(m) = \alpha \frac{2^n(2^n - 1)}{2^{2n}(2^{2n} -1)} \Lambda_{\tau,\tau}^{m-1} = \Lambda_{\tau,\tau}^{m-1}\,.
\end{equation}

It remains to calculate the variance associated with estimating $k_\tau(m)$, as given in (19) in the main text. We intend to prove that $k_\tau(m)$
has variance bound in $O(2^n)$.
To do this first, note that
\begin{equation}
f_\tau(x,\vec{g})^2 = \braa{x\tn{2}} \omega(c^{-1})\tn{2} \kett{\tau\tn{2}}\braa{\tau\tn{2}}\omega(c)\tn{2}\kett{0_n\tn{2}}
\end{equation}
since $P\tn{2}\tau\tn{2}{P\ct}\tn{2} = \tau\tn{2}$. Hence, the variance associated to the
estimation can be upper bounded by
\begin{align}
\mathbb{V}_\tau(m)&\leq
2^{2n}(2^n+1)^2\!\!\!\!\sum_{x\in\{0,1\}^n} \avg_{c\in \mathbb{C}_n}\braa{x\tn{2}} \omega(c^{-1})\tn{2} \kett{\tau\tn{2}}\notag\\
&\hspace{6em}\times \braa{\tau\tn{2}}\omega(c)\tn{2}\kett{0_n\tn{2}} \braa{x} \omega(c^{-1})\kett{\tau_0}\notag\\
&\hspace{11em}\times\braa{\tau_0}\omega(c)\kett{0_n},
\end{align}
where we have used that $\mathbb E_{p\in \mathbb{P}}\ \omega(p) = \kett{\tau_0}\braa{\tau_0}$ and the trace preservation of $\Lambda$.
Noting again that $\omega(c)$ acts trivially on $\tau_0$ and transversally on the traceless Pauli operators, we see that
\begin{equation}
\!\!\!\mathbb{V}_\tau(m)\leq \frac{2^{3n}(2^n\!+\!1)^2}{2^{2n}-1} \bra{0_n}\!\tau_0\!\ket{0_n}^2 \!\! \sum_{\tau'\in \mathbb{P}^*_n}\! \bra{0_n}\!\tau'\!\ket{0_n}^4
\end{equation}
which becomes by the analogous argument as above
\begin{equation}
\mathbb{V}_\tau(m)\leq \frac{2^{3n}(2^n+1)^3}{2^{3n}(2^{2n}-1)} = O(2^n),
\end{equation}
as intended.

\section{Marginal channel reconstructions and cross-talk tomography}
\label{sec:details_cross-talk}

We here show how to employ the local Clifford UIRS protocol to get tomographic information of channel marginals.
To this end, recall that with the local Clifford UIRS protocol we can efficiently estimate the quantity $3^{-|w|}\tr(\Lambda P_w UP_w\Lambda)$ for any unitary channel $U$, where $\Lambda$ is a quantum channel and $P_w$ is the projector onto the irreducible representation of $\mathbb{C}_1^{\times n}$ labeled by the bit string $w$, provided $|w|$ bounded.
We can introduce channel marginals $\Lambda_k$ of $\Lambda$ by inserting a maximally mixed state into all but the first $k$ (out of $n$) inputs and  tracing out all but the first $k$ output qubits. Note that we can choose the order of the qubits arbitrarily, therefore restricting to the first $k$ qubits does not cost any generality.
For any $w\in \{0,1\}^k \times \{0\}^{n-k}$ we have
 \begin{equation}
 3^{-|w|}\tr(\Lambda P_w AP_w) = 3^{-|w|}\tr(\Lambda_k P_w AP_w),
 \end{equation}
 which only depends on the marginal $\Lambda_k$.
Now consider the $k$-qubit super-operator
 \begin{equation}
 \label{eq:pinched_marginal}
 S_k = \sum_{w\in \{0,1\}^k} P_w \Lambda_k P_w,
 \end{equation}
 which we will refer to as the \emph{pinched marginal} associated with the marginal $\Lambda_k$.
 Note that this super-operator is not necessarily a quantum channel (although it is trace preserving).
 One can see that the pinched marginal is composed of blocks $\Lambda_w = P_w\Lambda P_w$
 which we refer to as the \emph{unital marginals} in the main text.

 We can reconstruct the pinched marginal $S_k$ using the local Clifford UIRS protocol. To see this, consider the group of $k$-qubit Clifford operators $\mathbb{C}_k$. Reference~\cite[theorem 39]{AverageGateFidelities}
 implies that
 \begin{equation}
 \frac{1}{\lvert \mathbb{C}_k \rvert} \sum_{C\in \mathbb{C}_k} \big((2^{2k}-1)\tr(S_k \omega(C)\ct) - (2^{2k}-2)\big)\omega(C)  = S_k
 \end{equation}
 using that the Clifford group is a $2$-design (on $k$ qubits). Using the definition of $S_k$ we see that
 \begin{equation}
 \tr(S_k \omega(C)) = \sum_{w\in\{0,1\}^k} 3^{|w|}\big(3^{-|w|} \tr(\Lambda_k P_w CP_w)\big).
 \end{equation}
 From \cref{thm:local_cliff_var_sm} we know that we can estimate the quantities $3^{-|w|} \tr(\Lambda_k P_w CP_w)$ to accuracy $\epsilon$ using $S = O(k 2^{(2 + 2\log_2(3))k} / \epsilon^2)$ runs of the local Clifford UIRS protocol.
 Hence, we can reconstruct $S_k$ to $\epsilon$ error in diamond norm using $S = O(k 2^{(2 + 4\log_2(3))k} / \epsilon^2)$ runs.
In particular, we can also construct every `block' $\Lambda_w$ of $S_k$ with additive error in diamond norm from the same number of samples.
 Moreover, since the procedure we have described above is independent of which set
 of $k$ qubits is considered, it follows immediately that one can reconstruct all $\binom{n}{k}$ pinched marginals associated to each set of $k$  qubits to a global $\epsilon$ error in diamond norm using $S = O(n h(k/n) k 2^{(2 + 4\log_2(3))k} / \epsilon^2)$ samples (where $h(k/n)$ is the binary entropy).
Using $\log\binom{n}{k} \leq  k \log(\e n/k)$, we can relax the statement to guarantee $\epsilon$-accurate recovery of all unital marginals $\Lambda_w$ with $|w|=k$ in diamond norm from $O(k^2 2^{9k} /\epsilon^2)$.

\section{Details on SPAM-robust channel reconstruction}\label{appsec:reconstruction}

Using multi-qubit Clifford UIRS we can extract the relative average-gate fidelities that enter the tomographic reconstruction of both schemes from the output statistics of random gate-set sequences, without the need to perform different interleaved experiments. This gives rise to an efficient and robust channel reconstruction protocol.
In the multi-qubit Clifford UIRS protocol the decay rate are given as
\begin{equation}
p(A) = \frac{\tr(A\ct\Lambda)}{2^{2n}-1} .
\end{equation}
Hence, if we assume that $A = P_{\rm ad}U P_{\rm ad}$ for a unitary channel $U$ we see that $p(U) = (2^n F(U, \Lambda)-1)/(2^n-1)$, where $F(U, \Lambda)$ is the average fidelity between $U$ and $\Lambda$.
By \cref{thm:var_unitary_sm} and \cref{thm:cliff_var_sm}, we can estimate using the UIRS protocol, an exponential number of average fidelities using only a polynomial number of samples (and equivalently channel queries).
Furthermore, for $U$ a Clifford unitary calculating the sequence correlation function, and, thus, the entire classical-post processing, is time and space efficient in the number of qubits.

Characterizing a quantum channels in terms of different relative average gate fidelities with unitaries can provide valuable diagnostic in itself.
This can be seen as a robust gate-set or channel variant of selective state tomography \cite{MorrisDakic:2019:Selective}.
Beyond this, building on the results of refs.~\cite{Sco08,KimmOhki,KimLiu16,AverageGateFidelities}, having access to relative average fidelities is an powerful primitive for the tomographic reconstruction of channels, which we will now consider in more detail.

The first task we consider is the reconstruction of unitary (or more generally bounded Kraus rank) quantum channels---low-rank randomized benchmarking tomography.
This task is vital to the characterization of calibration errors.
Reference~\cite{AverageGateFidelities} establishes that given a list of estimates $[\hat{F}(C,\Lambda)]_{C\in \mc A}$ of relative average gate fidelities with respect to a randomly chosen subset $\mc A$ of Clifford unitaries, a constraint least-squares fit can reconstruct $\Lambda$ provided that
$|\mc A| \geq c d^2 \log(d)$.
More precisely, the error of the channel estimate $\hat \Lambda$ in Hilbert-Schmidt norm of the Choi-states fulfills
\begin{equation}
  \norm{J(\Lambda) - J(\hat{\Lambda})}_{HS} \leq 2^{2n}\frac{\norm{{\hat{F}} - {F}}_{2}}{\sqrt{|S|} }
\end{equation}
where ${F}:=[F(C, \Lambda)]_{C\in \mc A}$ is the vector of average fidelities of length $|\mc A|$, $\hat{F}$ is an estimate of $F$ produced through the shadow sequence protocol and the norm is the $l_2$ vector norm.
Furthermore, the reconstruction is stable against $\Lambda$ deviating from the low-rank assumption (model-mismatch) and can be formulated in different $p$-norms on both sides, we refer to the supplemental material of ref.~\cite{AverageGateFidelities} for details.
Reference~\cite{AverageGateFidelities}, however, has not analyzed the overall sampling complexity of the resulting RB tomography scheme when combined with a robust way to acquire the relative average fidelities.

The UIRS protocol can provide the missing piece. After decay fitting, we can give estimates ${\hat{F}}$ for the vector of fidelities ${F}$ with error guarantee
\begin{equation}
  \norm{{\hat{F}} - F}_{\infty} = O\left(\sqrt{\frac{\log(2|\mc A|/\delta)}{S}}\right)
\end{equation}
with success probability $1-\delta$ using $S$ samples, i.e., the size of the gate-set shadow.
Using the standard relations between $l_\infty$ and $l_2$ vector norms this implies we can obtain an $\epsilon$-accurate reconstruction of $\Lambda$ provided
\begin{equation}
  S\geq C\, 2^{4n}\frac{\log(2|\mc A|/\delta)}{\epsilon^2}\,
\end{equation}
with a suitable constant $C$.
Dropping polynomial factors in $n$ and $1/\epsilon$, we find that the total number of
gate-set shadows scales as $O(2^{4n})$.
Note that the number of channel invocations is bounded by the maximal sequence length times the number of sequences.
This matches the scaling of the information theoretic lower bound derived in ref.~\cite{AverageGateFidelities} for the case that the average gate fidelities are measured independently.
Besides the favourable scaling, the UIRS protocol has the benefit compared to, e.g., the interleaved protocol of ref.~\cite{KimmOhki} that the same measurement data is used for estimating all the average fidelities.

Besides the compressive, low-rank quantum channel tomography, we can use average gate-fidelities from UIRS protocols for the tomography of a more general class of quantum channels.
If $\Lambda$ is a unital quantum channel then it is known~\cite[theorem 38]{AverageGateFidelities} (see also ref.~\cite{Sco08}) that it can be expressed as
\begin{equation}
  \Lambda = \frac{1}{S}\sum_{C\in S}\big(D F(C, \Lambda) - 2^{-n}D +1)\omega(C)
\end{equation}
with $D = 2^n(2^n+1)(2^{2n}-1)$, provided the set $S$ is a unitary $2$-design. We can provide a direct reconstruction for the unital channel $\Lambda$ by calculating
\begin{equation}
\hat{\Lambda} = \frac{1}{S}\sum_{C\in S}\big(D \hat{F}(C, \Lambda) - 2^{-n}D +1)\omega(C)
\end{equation}
where the estimates $\hat{F}(C, \Lambda)$ are again provided by the multi-qubit Clifford UIRS protocol.
The accuracy of this reconstruction depends on the metric used. If we
consider the Hilbert-Schmidt of the Choi state as before, we see that
\begin{align}
\norm{J(\Lambda) - J(\hat{\Lambda})}_{HS} &=\frac{D}{|S|} \sum_{C\in S}|F(C,\Lambda) - \hat{F}(C,\Lambda)|,
\end{align}
since $\norm{J(C)}_{HS} = 1$ for all unitaries $C$.
Hence, by the same argument as in the unitary case, the number of samples required scales as $O(2^{8n})$. The same argument holds for all norms for which $\norm{C}=1$, such as the diamond norm.

\end{widetext}

\bibliographystyle{apsrev4-1}
%

\section*{Acknowledgments}
During the final stages of drafting this manuscript, we became aware of refs.~\cite{ProcessShadows2,ProcessShadows1}
which generalize state shadow estimation to quantum processes using the Choi-Jamiolkowski isomorphism, but without considering self-consistent sequences of gate-sets (see also remarks in ref.~\cite{RobustShadows} in this context).
The authors would like to thank Thomas Monz, Martin Kliesch and Richard Kueng for discussions.
J.~H. is supported by the Quantum Software Consortium Zwaartekracht grant.
The Berlin team has been funded by the BMBF (DAQC, MUNIQC-ATOMS), and
the Munich Quantum Valley (K-8). Funded also by the Deutsche Forschungsgemeinschaft (DFG, German Research
Foundation) (EI 519 20-1, CRC 183, Daedalus, as well as under Germany's Excellence Strategy - The Berlin Mathematics
Research Center MATH+, EXC-2046/1, project ID: 390685689).
It has also received funding from the EU's Horizon 2020 research and innovation program
 (PASQuanS, \newtext{PASQuanS2, Millenion}).
E.~O. is supported by the Royal Society, by the UK Hub in Quantum Computing and Simulation, part of the UK National Quantum Technologies Programme with funding from UKRI EPSRC (grant EP/T001062/1) and by the Bavarian state government with funds from the Hightech Agenda Bayern Plus as part of the Munich Quantum Valley.
A.~H.~W.~thanks the VILLUM
FONDEN for its support with a Villum Young Investigator
Grant (Grant No. 25452) and its support via the QMATH
Centre of Excellence (Grant No.~10059).

\section*{Author contributions statement}

\newtext{I.~R.~and J.~H.~conceived of the initial idea to the project, with substantial contributions from E.~O., A.~H., and J.~E. 
All authors contributed to devising the overall scheme, finding and exploring applications and conceptualizing the results. J.~H.~has taken the lead in proving performance bounds and writing an initial draft together with I.~R.\ 
M.~I.~proved the performance bounds for the local Clifford group, J.~K.~has performed the numerical analysis. 
All authors contributed substantially to the final manuscript.}

\section*{Competing interests statement}

\newtext{The authors declare no competing interests.}

\end{document}